\documentclass[authorcolumns,numberwithinsect,a4paper]{no-lipics}

\usepackage{amssymb}
\usepackage{mathtools}
\usepackage{bm}
\usepackage{stmaryrd}
\usepackage{booktabs}
\usepackage[T1]{fontenc}
\usetikzlibrary{arrows,arrows.meta,shapes.misc, decorations.pathreplacing,decorations.pathmorphing,calligraphy, patterns.meta}
\usetikzlibrary{calc}
\tikzset{dotmark/.style={circle,fill,inner sep=1.5pt}}
\tikzset{emptymark/.style={circle,draw,fill=white,inner sep=1.5pt}}
\tikzset{crossmark/.style={thick,inner sep=1.5pt}}

\def\twoheadleadsto{\tikz[baseline=(a.base)]{%
        \node at (.2,0) {\(\leadsto\)};%
        \fill[white] (0,-.1+.012) -- (.25,-.1+.012) -- (.335,0.012) -- (.25,.113) --
        (0,.113) --cycle;
        \node at (.125,0) {\(\leadsto\)};%
        \node (a) at (.4/2,-.0) {\phantom{\(\leadsto\)}};%
}}

\def\ponto#1{\twoheadleadsto\kern-.3em{}_{#1}\kern.3em}

\def\setn#1{\bm{[}\,#1\,\bm{]}}

\makeatletter
\renewenvironment{cases}{%
    \matrix@check\cases\env@cases
}{%
    \endarray\right.%
}
\def\env@cases{%
    \let\@ifnextchar\new@ifnextchar
    \left\lbrace
        \def\arraystretch{1.1}%
        \array{@{\;}c@{\quad}l@{}}%
}
\makeatother

\newcommand{\nat}{\mathbb{N}}

\newcommand{\CoutDeg}[2]{\mathrm{d}^+(#1, #2)}  
\newcommand{\inDeg}[2]{\mathrm{d}^-_{#1}(#2)}   
\newcommand{\outDeg}[2]{\mathrm{d}^+_{#1}(#2)}  

\DeclareMathOperator{\id}{id}

\def\NUM{\text{\#}}
\def\W#1{\ensuremath { W{{[}\,#1\,{]}}}}
\def\W#1{\ensuremath { W{{[}#1{]}}}}
\def\w{\NUM\W1} 
\def\Dw{\W1}    
\newcommand{\fpt}{\leq_{\mathrm{T}}^{\mathrm{fpt}}}

\DeclareMathOperator{\symdifOperator}{\triangle}
\newcommand{\sym}[1]{\mathfrak{S}_{#1}} 
\newcommand{\flip}[2]{#1 \symdifOperator #2} 
\newcommand{\symdif}[2]{#1 \symdifOperator #2} 

\newcommand{\outE}{\mathcal{O}}
\newcommand{\inE}{\mathcal{I}}
\newcommand{\intE}{\mathcal{B}}

\newcommand{\dSet}{{D}}
\newcommand{\dCard}{{d}}
\newcommand{\zSet}{{Z}}
\newcommand{\zCard}{{z}}

\DeclareMathOperator{\IS}{IS} 
\DeclareMathOperator{\tw}{tw} 

\newcommand{\graphs}[1]{\mathcal{G}_{#1}} 
\newcommand{\auts}[1]{\ensuremath{\mathrm{Aut}(#1)}}
\newcommand{\indGraph}[2]{#1[#2]} 
\newcommand{\edgesub}[2]{#1\{#2\}} 

\newcommand{\tranTour}{\text{\sc T\kern-.325emt}}

\DeclareMathOperator{\sig}{sig}
\newcommand{\DE}[1]{E(#1)} 

\newcommand{\Tperm}[2]{#1^{#2}} 

\newcommand{\Miso}[2]{\sigma_{#2}^{#1}} 
\newcommand{\MTour}[2]{#1(#2)} 
\newcommand{\Mperm}[2]{#1^{#2}} 
\newcommand{\matchOrder}[1]{\mathcal{M}_#1} 
\newcommand{\matchSet}[1]{\tilde{\mathcal{M}_#1}} 

\newcommand{\tourG}[2]{#1^{(#2)}} 
\newcommand{\match}[1]{{M}_{#1}} 
\newcommand{\cmatch}[1]{\overline{M}_{#1}} 
\newcommand{\GraphT}[2]{#1_{(#2)}}

\newcommand{\TourSymb}{\mathrm{To}}

\newcommand{\tourProb}{\text{\ref{prob:indsub:tour}}}
\newcommand{\tours}[2]{\mbox{\NUM{}\ensuremath{\mathrm{IndSub}(#1 \to #2)}}}

\newcommand{\subProb}{\NUM{}\ensuremath{\textsc{Sub}}}
\newcommand{\subs}[2]{\mbox{\NUM{}\ensuremath{\mathrm{Sub}(#1 \to #2)}}}

\newcommand{\indsubProb}{\NUM{}\ensuremath{\textsc{IndSub}}}
\newcommand{\indsubs}[2]{\mbox{\NUM{}\ensuremath{\mathrm{IndSub}(#1 \to #2)}}}

\newcommand{\homProb}{\NUM{}\ensuremath{\textsc{Hom}}}

\newcommand{\CPsubProb}{\text{\ref{prob:cp:sub}}}
\newcommand{\CPsubs}[2]{\mbox{\NUM{}\ensuremath{\mathrm{cp}\text{-}\mathrm{Sub}(#1 \to #2)}}}

\newcommand{\CPindsubProb}{\NUM{}\ensuremath{\textsc{cp}\text{-}\textsc{IndSub}}}
\newcommand{\CPindsubs}[2]{\mbox{\NUM{}\ensuremath{\mathrm{cp}\text{-}\mathrm{IndSub}(#1 \to #2)}}}

\newcommand{\CFsubProb}{\text{\ref{prob:cf:IndSub}}}
\newcommand{\CFsubs}[2]{\mbox{\NUM{}\ensuremath{\mathrm{cf}\text{-}\mathrm{IndSub}(#1 \to #2)}}}

\newcommand{\GraphCFsubProb}{\NUM{}\ensuremath{\textsc{cf}\text{-}\textsc{Sub}}}
\newcommand{\GraphCFsubs}[2]{\mbox{\NUM{}\ensuremath{\mathrm{cf}\text{-}\mathrm{Sub}(#1 \to #2)}}}

\newcommand{\cliqueProb}{\text{\ref{prob:clique}}}
\newcommand{\clique}[2]{\ensuremath{\NUM{}\mathrm{Clique}_{#1}(#2)}}

\newcommand{\CFcliqueProb}{\text{\ref{prob:cf:clique}}}
\newcommand{\CFclique}[2]{\ensuremath{\NUM{}\mathrm{cf}\text{-}\mathrm{Clique}_{#1}(#2)}}


\newcommand{\DecTourProb}{\ensuremath{\textsc{Dec}\text{-}\textsc{IndSub}_{\TourSymb}}}
\newcommand{\DecTours}[2]{\mbox{\ensuremath{\mathrm{Dec}\text{-}\mathrm{IndSub}(#1 \to #2)}}}

\newcommand{\DecSubProb}{\ensuremath{\textsc{Dec}\text{-}\textsc{Sub}}}

\newcommand{\DecCPsubProb}{\ensuremath{\textsc{Dec}\text{-}\textsc{cp}\text{-}\textsc{Sub}}}

\newcommand{\DecCFsubProb}{\ensuremath{\textsc{Dec}\text{-}\textsc{cf}\text{-}\textsc{IndSub}_{\TourSymb}}}
\newcommand{\DecCFsubs}[2]{\mbox{\ensuremath{\mathrm{Dec}\text{-}\mathrm{cf}\text{-}\mathrm{IndSub}(#1 \to #2)}}}

\newcommand{\DecCliqueProb}{\ensuremath{\textsc{Dec}\text{-}\textsc{Clique}}}
\newcommand{\DecClique}[2]{\ensuremath{\mathrm{Dec}\text{-}\mathrm{Clique}_{#1}(#2)}}

\newcommand{\DecCFcliqueProb}{\ensuremath{\textsc{Dec}\text{-}\textsc{cf}\text{-}\textsc{Clique}}}

\newcommand{\sat}[1]{\ensuremath{#1\text{-}\textsc{SAT}}}

\newcommand{\Exp}[1]{\ensuremath{\mathrm{cx}(#1)}} 
\newcommand{\cExp}[1]{\Exp{\cliqueProb_{#1}}} 
\newcommand{\tExp}[1]{\Exp{\tourProb(\{#1\})}} 



\newcommand{\DeccExp}[1]{\Exp{\DecCliqueProb_{#1}}} 
\newcommand{\DectExp}[1]{\Exp{\DecTourProb(\{#1\})}} 

\newcommand{\DecCFcExp}[1]{\Exp{\DecCFcliqueProb_{#1}}} 
\newcommand{\DecCFtExp}[1]{\Exp{\DecCFsubProb(\{#1\})}} 

\def\aename#1{\ensuremath \widehat{#1}}
\def\ae#1#2{\ensuremath \aename{#1}(#2)}

\newcommand{\core}[1]{\mathrm{sl}(#1)}
\newcommand{\tri}{\mathrm{Tri}}

\newcommand{\TT}[1]{\alpha(#1)}
\newcommand{\Tout}[1]{N^-(#1)}
\newcommand{\Tvec}[2]{\Tout{#1} \cap #2}

\title{The Complexity of Finding and Counting Subtournaments}

\author{Simon Döring}{Max Planck Institute for Informatics and\\Saarbrücken Graduate School
of Computer Science\\Saarland Informatics Campus\\Saarbrücken, Germany}{sdoering@mpi-inf.mpg.de}{https://orcid.org/0009-0002-6667-5257}{}
\author{Sarah Houdaigoui}{National Institute of
Informatics,\\The~Graduate~University~for~Advanced~Studies, SOKENDAI\\Tokyo, Japan}{shoudaigoui@nii.ac.jp}{https://orcid.org/0009-0003-5490-4806}{}
\author{Lucas Picasarri-Arrieta}{National Institute of
Informatics,\\Tokyo, Japan}{lpicasarr@nii.ac.jp}{https://orcid.org/0000-0003-0414-8136}{}
\author{Philip Wellnitz}{National Institute of
Informatics,\\The~Graduate~University~for~Advanced~Studies, SOKENDAI\\Tokyo, Japan}{wellnitz@nii.ac.jp}{https://orcid.org/0000-0002-6482-8478}{}
\authorrunning{S. Döring, S. Houdaigoui, L. Picasarri-Arrieta, and P. Wellnitz}

\acknowledgements{We thank Radu Curticapean for insightful discussions.}

\begin{document}
\pagenumbering{roman}
\maketitle
\begin{abstract}
We study the complexity of counting and finding small tournament patterns
inside large tournaments. Given a fixed tournament $T$ of order
$k$, we write $\text{\#}\textsc{IndSub}_{\mathrm{To}}(\{T\})$ for the problem whose
input is a tournament $G$ and the task is to compute the number of subtournaments
of $G$ that are isomorphic to $T$. Previously, Yuster [Yus25] obtained
that $\text{\#}\textsc{IndSub}_{\mathrm{To}}(\{T\})$ is hard to compute for random
tournaments $T$. We consider a new approach that uses linear combinations
of subgraph-counts [CDM17] to obtain a finer analysis of the complexity
of $\text{\#}\textsc{IndSub}_{\mathrm{To}}(\{T\})$.

We show that for all tournaments $T$ of order $k$ the problem
$\text{\#}\textsc{IndSub}_{\mathrm{To}}(\{T\})$ is always at least as hard
as counting $\lfloor 3k/4 \rfloor$-cliques. This immediately yields
tight bounds under ETH. Further, we consider
the parameterized version of $\text{\#}\textsc{IndSub}_{\mathrm{To}}(\mathcal{T})$
where we only consider patterns $T \in \mathcal{T}$ and that is parameterized
by the pattern size $|V(T)|$. We show that $\text{\#}\textsc{IndSub}_{\mathrm{To}}(\mathcal{T})$ is
$\text{\#}W[1]$-hard as long as $\mathcal{T}$ contains infinitely many
tournaments.

However, the situation drastically changes when we consider the decision version
of the problem $\textsc{Dec-IndSub}_{\mathrm{To}}(\{T\})$. Here, we have to decide whether
an input tournament $G$ contains a subtournament that is isomorphic to $T$.
According to a famous theorem by Erd\H{o}s and Moser [EM64] the problem
$\textsc{Dec-IndSub}_{\mathrm{To}}(\{T\})$ is easy to solve for transitive tournaments.
In a first step, we extend this result and present other kinds of tournaments
for which the parameterized version of $\textsc{Dec-IndSub}_{\mathrm{To}}(\mathcal{T})$
is FPT.

In a next step, we show that certain structures inside a tournament $T$
can be exploited to show that $\textsc{Dec-IndSub}_{\mathrm{To}}(\{T\})$
is at least as hard as finding large cliques. We show that almost all tournaments
have this specific structure. Hence, $\textsc{Dec-IndSub}_{\mathrm{To}}(\{T\})$
is hard for almost all tournaments.

Lastly, we combine this result
with our FPT result to construct, for each constant $c$, a class of tournaments
$\mathcal{T}_c$ for which $\textsc{Dec-IndSub}_{\mathrm{To}}(\mathcal{T}_c)$ is FPT
but which cannot be solved in time $O(f(k) \cdot n^{\alpha c})$,
unless ETH fails. Here, $\alpha > 0$ is a global constant independent of $c$.
\end{abstract}

\clearpage
\thispagestyle{plain}
\tableofcontents
\clearpage
\pagenumbering{arabic}

\section{Introduction}

Detecting and counting small \emph{patterns} inside large \emph{host} structures (like graphs) is one of the
oldest problems in computer science and has many
applications in other scientific fields such as
statistical physics~\cite{10.1007/978-3-031-92935-9_18}, database theory~\cite{10.1145/800105.803397,
dell_et_al:LIPIcs.ICALP.2019.113,10.1145/3517804.3526231, 10.1145/380752.380867},
network science~\cite{doi:10.1126/science.298.5594.824,doi:10.1126/science.1089167},
and computable biology~\cite{10.1093/bioinformatics/btn163, 10.1007/11599128_7,
10.1007/978-3-319-21233-3_5}, to name but a few examples.

Formally, given a (small) pattern graph \(H\) and a (large) host graph \(G\),
we write $\indsubs{H}{G}$ for the number of induced occurrences of \(H\) in \(G\), that
is, the number of subgraphs of \(G\) that are isomorphic to \(H\).
For a class of graphs $\mathcal{H}$, in the problem $\indsubProb(\mathcal{H})$ we are
given a pattern graph $H \in \mathcal{H}$
and a host graph $G$ and the task is to compute $\indsubs{H}{G}$.
Due to its importance, many researchers investigated $\indsubProb(\mathcal{H})$ and its
computational complexity---in particular from a \emph{parameterized} point of view where
we parameterize by \(|V(H)|\).

In full generality, $\indsubProb(\mathcal{H})$ is \w-hard if and only if the set $\mathcal{H}$
is infinite~\cite{DBLP:conf/icalp/ChenTW08} and there are tight lower bounds
under the Exponential Time Hypothesis (ETH)~\cite{topo}.
Similar results exist for counting \emph{directed}
subgraphs~\cite{BLR23}.
Having understood the general problem, the research interest thus shifted toward
understanding the complexity of $\indsubProb(\mathcal{H})$ for more specific classes of
host graphs \(G\).
For instance in~\cite{Planar_subgraph}, Eppstein shows that $\indsubProb(\mathcal{H})$
has a linear time algorithm for fixed patterns when the host graph
is planar.%
\footnote{Also see~\cite{bodlaender, Planar_subgraph2} for more recent results.}
Similarly, in the realm of directed graphs, many researchers studied graphs of bounded
outdegree~\cite{bera_et_al:LIPIcs.ITCS.2020.38, BR22, Pena, BLR23}.
Still, we are missing a comprehensive understanding which pattern and host graph
combinations allow for efficient algorithms---and which do not.

\begin{center}
    \textit{Which classes of patterns allow for efficient counting algorithms in which
    classes of host graphs?}
\end{center}

We answer the above question in the directed setting for tournaments, that is,
directed graphs with exactly one directed edge between any pair of vertices.

\begin{problem}{$\NUM{}\textsc{IndSub}_{\TourSymb}$}
    \label{prob:indsub:tour}
    \PInput{A directed graph \(T \in \mathcal{T}\) and a tournament \(G\).}
    \POutput{$\tours{T}{G}$; that is, the number of sets $A\subseteq V(G)$ such that $\indGraph{G}{A}$
    is isomorphic to $T$.}
    \PParameter{\(k \coloneqq |V(T)|\).}
\end{problem}

\begin{restatable*}{mtheorem}{thmsubpara}
    \dglabel{maintheorem:sub:para}[lem:subs:to:color,lem:CFsubs:CFclique,lem:sig:lower:bound,lem:CFclique:w1]($\tourProb(\mathcal{T})$
    is \w-hard)
    Write $\mathcal{T}$ for a recursively enumerable class
    of directed graphs.
    The problem $\tourProb(\mathcal{T})$ is \w-hard if $\mathcal{T}$ contains infinitely many tournaments
    and FPT otherwise.
\end{restatable*}

Complementing our first result, we also obtain fine-grained lower bounds under
ETH.%
\footnote{Observe that $\tourProb(\{T\})$ is equivalent to the problem
of computing $\tours{T}{\star}$ for a fixed tournament $T$.}

\begin{restatable*}[{Fine-grained lower bounds for
    $\tourProb(\{T\})$}]{mtheorem}{thmsub}\dglabel{maintheorem:sub}[def:match,
    maintheorem:cfsub,lem:cmatch:to:clique]
    For all tournaments $T$ of order $k$,  assume that there is
        an algorithm that reads the whole input and computes $\tourProb(\{T\})$
        for any tournament of order $n$ in time $O(n^\gamma)$. Then there is an algorithm that
        solves $\cliqueProb_{\lfloor 3k/4 \rfloor}$ for any graph of order $n$
        in time $O(n^{\gamma})$.

    Further, assuming ETH, there is a global constant $\beta > 0$, such that
        no algorithm that reads the whole input computes $\tourProb(\{T\})$
        for any graph of order $n$ in time $O(n^{\beta k})$.
\end{restatable*}

Our results continue the long line of research on tournaments in
general~(see e.g.~\cite{DBLP:conf/ijcai/KimW15,grohe_et_al:LIPIcs.ICALP.2024.78}), and in particular significantly improve upon
a recent work of Yuster~\cite{Yuster25} that contains hardness results for $\tourProb(\mathcal{T})$
for specific patterns, but not yet a complete characterization for
all patterns.%
\footnote{Yuster~\cite{Yuster25} considers counting \emph{subgraphs} where
    both host and pattern are tournaments. However, the number of induced subgraphs is
    equal to the number of subgraphs if both $H$ and $G$ are tournaments.
    If $G$ is a tournament and $H$ is not a tournament then $\tours{H}{G} = 0$
    and \(\tours{H}{G}\) is therefore easy to compute.}

As an example, \cite[Theorem~1.4]{Yuster25} gives an $O(n^\omega)$-time algorithm to count
any specific tournament pattern with four vertices,
where $\omega < 2.3713$ is the matrix multiplication exponent~\cite{Clique_counting}.
With \cref{maintheorem:sub}, we obtain a matching (conditional) lower
bound (see \cref{remark:sub}).

Naturally, lower bounds for counting problems would also follow directly from lower bounds
for corresponding decision problems. Hence, of specific interest are problems with
decision versions but hard counting versions.
In the realm of tournaments, famously, the Erd\H{o}s-Moser-Theorem~\cite{ErMo64}
ensures that every tournament $T$ with $k$ vertices
contains a \emph{transitive tournament} of logarithmic size---thereby rendering
easy-to-solve the decision problem $\DecTourProb(\{\tranTour_k\})$ of detecting a transitive
tournament of size \(k\).
Now, \cref{maintheorem:sub:para,maintheorem:sub} show that the counting version
$\tourProb(\{\tranTour_k\})$ is indeed hard.

We extend the above argument using the Erd\H{o}s-Moser-Theorem to tournament patterns that
consist in a large transitive tournament (a \emph{spine}) \(S\) and two sets \(R_+\) and
\(R_-\) (the \emph{ribs}) such that all edges between \(S\) and \(R_+\) are directed
toward \(S\) and all edges between \(S\) and \(R_-\) are directed toward \(R_-\).
For a given tournament \(T\), we write \(\core{T}\) for the largest possible spine of a
decomposition of the above shape---consul \cref{def:core} for the formal definition.
In particular, we show how to detect a tournament pattern \(T\) with \(c \coloneqq |V(T)|
- \core{T}\) in time $O( f(k) \cdot n^{c + 2})$ for some computable \(f\)---which we also
essentially match with a corresponding conditional lower bound.

\begin{restatable*}[For all $c > 0$, there is a $\mathcal{T}_c$ for which
    $\DecTourProb(\mathcal{T}_c)$ is in time $f(k) \, n^{\Theta(c)}$]{mtheorem}{thmdecsubpara}
    \dglabel{maintheorem:Dec:sub:para}[theo:core:easy,maintheorem:Dec:sub]
    Assuming ETH, there is a global constant $\alpha > 0$ such that all of
    the following hold.

    \begin{itemize}
        \item For any constant $c > 0$ there is a class of infinitely many tournaments $\mathcal{T}_c$
            such that $|V(T)| - \core{T} \leq c$ for all $T \in \mathcal{T}_c$.
            Thus, the problem $\DecTourProb(\mathcal{T}_c)$ is FPT and in time $O( f(k) \cdot n^{c + 2})$
            for some computable function $f$.
        \item Further, there is a tournament $T \in \mathcal{T}_c$
            that has a TT-unique partition $(\dSet, \zSet)$ with $|\zSet| \geq c$. Hence,
            no algorithm that reads the whole input and solves $\DecTourProb(\mathcal{T}_c)$ in time $O(f(k) \cdot n^{\alpha c})$
            for any computable function $f$. Here $k$ is the order of the pattern tournament (parameter)
            and $n$ is the order of the host tournament.
        \qedhere
    \end{itemize}
\end{restatable*}

Observe that \cref{maintheorem:Dec:sub:para} offers a smooth trade-off from transitive
tournament patterns (\(c = 0\)) where the decision version is much easier to solve compared to the
counting version and (somewhat) general patterns (around \(c \ge \beta k\)) where both variants
are essentially equally hard.

\subsection{Related Work}

The study of the complexity of pattern detection and pattern counting in graphs is an
active area of research.
In general, detecting (directed) subgraphs is a classical NP-hard problem~\cite{Cook71,Ullmann76}, as it
generalizes finding cliques and (directed) Hamiltonian cycles \cite{DBLP:books/fm/GareyJ79}.
Hence, long lines of research are concerned with restricted variations of the
general problem, where either the set of allowed host graphs or the set of allowed pattern
graphs (or both) is restricted.

When restricting the possible host graphs, results include
algorithms to detect (induced) subgraphs in trees~\cite{MATULA197891,10.1145/3093239},
to detect and count patterns in planar host graphs~\cite{Planar_subgraph,bodlaender,Planar_subgraph2},
to detect induced subgraphs that satisfy a certain property in restricted host graphs~\cite{DBLP:conf/fct/EppsteinGH21},
and to count patterns in somewhere dense host graphs~\cite{BGMR24}, to name but a few examples.
Recently, researchers also started to study counting for directed host graphs with bounded
outdegree \cite{Pena,bera_et_al:LIPIcs.ITCS.2020.38, BR22, BLR23}.%
\footnote{
    Most results are about counting induced subgraphs for undirected
    host graphs with bounded degeneracy. However, the authors
    reformulate the problem by orientating the graphs such that they have a
    bounded outdegree.
}

When restricting the possible pattern graphs, results include algorithms for finding
specific, fixed patterns, such as cliques~\cite{n1985, VASSILEVSKA2009254},
cycles~\cite{DBLP:journals/siamcomp/ItaiR78, DBLP:journals/algorithmica/AlonYZ97},
paths~\cite{DBLP:conf/icalp/Koutis08, DBLP:journals/ipl/Williams09},
or bounded treewidth graphs~\cite{DBLP:journals/jacm/AlonYZ95}, among others. Many
of these results carry over to counting subgraphs in undirected/directed graphs (see~\cite{DBLP:journals/algorithmica/AlonYZ97}
for paths and cycles and \cite{Clique_counting} for cycles).

In this setting, researchers also study pattern detection/counting problems from a
parameterized point of view. In the simplest setting (as is also done in this paper), one
considers the size of the pattern to be the parameter (with the assumption that this
parameter is somewhat small).
Unfortunately, the general parameterized problem $\indsubProb(\mathcal{H})$
is \w-hard~\cite{DBLP:conf/icalp/ChenTW08} for any infinite class of pattern graphs $\mathcal{H}$
and has tight bounds under ETH~\cite{DBLP:conf/stoc/CurticapeanDM17, topo}. Both results carry over to the decision case
and to detecting/counting directed subgraphs. Hence researchers are interested in understanding the
complexity for special classes of pattern graphs and (linear) combinations of such
counts~\cite{DBLP:journals/jcss/JerrumM15, DBLP:journals/combinatorica/JerrumM17,
DBLP:conf/stoc/FockeR22, rsw24,DMW24, CN24}.

Many results in this area rely on a direct
link between counting graph homomorphisms and counting (induced)
subgraphs~\cite{DBLP:conf/stoc/CurticapeanDM17}---thereby connecting the diverse
complexity landscapes of the individual problem families:
For restricted classes of pattern graphs $\mathcal{H}$, counting graph homomorphisms
$\homProb(\mathcal{H})$ is \w-hard whenever $\mathcal{H}$ has unbounded treewidth~\cite{DBLP:journals/tcs/DalmauJ04}.
Further, the problem has almost tight bounds under ETH~\cite{DBLP:journals/toc/Marx10}.
The problem of counting/detecting homomorphism can be
generalized to other problems like {conjunctive queries}~\cite{10.1145/380752.380867, dell_et_al:LIPIcs.ICALP.2019.113, 10.1145/3517804.3526231}
or the constraint satisfaction problem~\cite{DBLP:journals/jacm/Bulatov13, DBLP:journals/siamcomp/BulatovM14}.

\section{Technical Overview}
\dglabel^{sec:techov}[theo:cfsub:hardest:term,lem:ae:clqiue,cor:cmatch:densest]

\subsection{The Complexity of Counting Tournaments}

\paragraph*{Exploring the Limits of the Existing Approach}

For the discussion of our techniques, we start with \cref{maintheorem:sub:para}, that is,
our \w-hardness result.
It is instructive to take a step back and reflect on the approach taken by
Yuster~\cite{Yuster25}, that---while also making progress into the same direction---falls
short of obtaining reductions that work for \emph{all} (classes of) tournaments and not
just \emph{almost all} of them.

\begin{theoremq}[{\cite[Theorem 1.11 (Counting Results)]{Yuster25}}]
    Fix an integer $k \geq 3$.
    Then, there is a tournament $T$ such that
    any algorithm that computes $\tourProb(\{T\})$ in time $O(n^{\gamma})$
    implies an algorithm that solves $\DecCliqueProb_{k - O(\log(k))}$ in
    time $O(n^{\gamma + \varepsilon})$, for all $\varepsilon > 0$.

    In fact, as $k$ goes to infinity, almost all tournaments $T$
    on $k$ vertices satisfy the property above.
\end{theoremq}

Due to the absence of useful tools to show the hardness of counting directed patterns in
directed graphs, Yuster's proof~\cite{Yuster25} defaults to the problem of counting
(undirected) cliques as the base of the hardness.
Naturally, while removing the directions of all edges of the given tournaments recovers
the clique counting problem, doing so loses too much information to be of any use in a
reduction.
Hence, as an intermediate step, Yuster first considers a \emph{colorful} variant
$\CFsubProb(\{T\})$, where the vertices of the pattern and host tournaments are assigned
colors and the task is to count just those induced occurrences that have each
color of the pattern exactly once (the colors of an occurrence might be distributed differently
compared to the pattern, though).

With the help of colors, Yuster is able to reduce to counting \emph{undirected}
cliques (and then finally to the corresponding uncolored variant).
For the reduction to colored cliques, Yuster defines and exploits a structure of
tournaments called \emph{signature} (see \cref{def:sig} or~\cite[Definition
2.2]{Yuster25})---the smaller the signature of the tournament, the tighter the reduction
becomes. Unfortunately, Yuster is able to show only that  \emph{almost} all tournaments have a small
signature. We overcome this minor deficit with a new, short and ad-hoc construction that shows that
the signature of any tournament of order \(k\) is of size at most \(k - \log(k) / 4\);
yielding \cref{maintheorem:sub:para} (see \cref{lem:sig:lower:bound}).

\thmsubpara*

\paragraph*{A New Approach for Better Hardness Results}

Now, while \cref{maintheorem:sub:para} yields  \w-hardness, the corresponding proof gives
only very weak lower bounds ruling out algorithms that are significantly faster than
\(n^{o(\log k)}\). Further, to rule out algorithms up to \(n^{o(k)}\) with the same
approach, we would need a much stronger version of \cref{lem:sig:lower:bound}, which seems
to be difficult.
Even then, Yuster's approach seemingly limits one to obtain lower bounds not higher than
\(O(n^{k/2})\) for certain tournaments: the signature of a transitive tournament
of order \(k\) is \(\lfloor k/2\rfloor\), thus giving a natural barrier for Yuster's approach
(see~\cite[Lemma~2.5]{Yuster25}).%
\footnote{
    See \cref{remark:yuster:limits} for a detailed discussion.
}

Our much stronger lower bounds in \cref{maintheorem:sub} thus require a completely new approach.

\thmsub*

\begin{remark}
    \dglabel^{remark:sub}
    By setting $k = 4$ in \cref{maintheorem:sub}, we obtain that for each tournament
    $T$ with four vertices, $\tourProb(\{T\})$ is at least as hard
    as $\cliqueProb_3$. It is believed (see \emph{clique conjecture} in~\cite[Page 3]{Clique_counting})
    that this problem cannot be solved faster than in time $O(n^\omega)$.
    Thus, we obtain a matching lower bound to the \(O(n^\omega)\)-time algorithm for $\tourProb(\{T\})$ of
    Yuster~\cite[Theorem~1.4]{Yuster25}.
\end{remark}

Again, let us take a step back and reflect on our overall goal: we wish to show that
\(\tourProb\) is powerful enough to count (small) undirected cliques---crucially in
an \emph{arbitrary} host graph.
Using a not-too-difficult construction, given a tournament \(T\) of order \(k\), we may
turn a colored (with \(k\) colors) graph \(G\) into a colored \emph{directed} graph
\(G^T\) such that the number of colorful
copies of \(T\) in \(G^T\) is precisely the number of colorful cliques in \(G\).%
\footnote{Create a directed graph $G^T$ such that $(u, v) \in E(G^T)$ if and only if
    $\{u, v\}$ is an edge of $G$ and $(c(u), c(v)) \in E(T)$ (that is, the colored edge
    appears in $T$).
    Now, each colorful $k$-clique $G$ is in a one-to-one correspondence
to a colorful copy of $T$ in $G^T$, hence $\CFsubs{T}{G^T}$ counts cliques.}

Clearly, the issue with the construction is that \(G^T\) is not necessarily a tournament.
Hence, Yuster~\cite{Yuster25} adds the missing edges to arrive at a tournament \(\tourG{G}{T}\) such that
\begin{itemize}
    \item an edge \(\{u, v \}\) in \(G\) results in an edge in \(\tourG{G}{T}\) that has
        the \emph{same} orientation as the edge \(\{c(u), c(v)\}\) in \(T\) (where \(c\) is the
        coloring of \(G\)); and
    \item a non-edge \(\{u, v\}\) results in an edge in \(\tourG{G}{T}\) that has
        the \emph{opposite} orientation as the edge \(\{c(u), c(v)\}\) in \(T\) (where \(c\) is the
        coloring of \(G\)).
\end{itemize}
Also consider \cref{def:TourG,fig1:to}.

\begin{figure}[tp]
    \centering
    \includegraphics{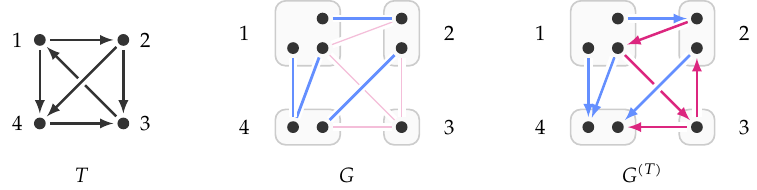}
    \caption{A tournament $T$, a graph $G$, and the tournament $\tourG{G}{T}$.
        We depict a subset of the (non-)edges of \(G\) and \(\tourG{G}{T}\), where
        blue edges are edges in $G$ and red edges are
        non-edges in $G$.}%
    \label{fig1:to}
\end{figure}

Taken for itself, this introduces a new problem---now counting induced colorful copies of \(T\)
in \(\tourG{G}{T}\) (henceforth \(\CFsubs{T}{\tourG{G}{T}}\))
also counts other colored graphs in $G$ that are not a clique; and
this is precisely the issue signatures help deal with---signatures allow to forcefully prevent
occurrences of non-\(T\) patterns.

We take a less forceful approach. We precisely understand which other patterns are
counted by \(\CFsubs{T}{\tourG{G}{T}}\): a linear combination of counts
of \emph{color-prescribed} occurrences%
\footnote{In contrast to \emph{colorful}, an occurrence is \emph{color-prescribed} if it
is colorful and the colors in the occurrence are distributed exactly as in the pattern.}
of patterns.

In itself, such a linear combination is not very useful---in fact, we initially obtain a
linear combination that contains the sought number of cliques (\cref{lem:tour:colindsub}), but non-trivial
cancellations still prevent us from using it directly.%
\footnote{One may view Yuster's approach~\cite{Yuster25} as a method to simplify the
resulting linear combination---that is, Yuster's approach~\cite{Yuster25} forces all
non-clique terms to become zero, thereby strictly simplifying the problem. Consult
\cref{7-15-3} for a detailed discussion.}

Interestingly, we observe that it is possible to rewrite the initial linear combination
into a different basis using a standard inclusion-exclusion argument
(\cref{lem:colindsub:colsub}) and, crucially, that the new basis is much more useful to
us.

Combining \cref{lem:tour:colindsub,lem:colindsub:colsub} we are able to
establish a connection to the \emph{alternating enumerator} (see
\cref{def:alternating:enumerator}) that is the main
ingredient in many recent breakthroughs concerning the complexity of counting
undirected induced subgraphs that satisfy a graph
property~\cite{DBLP:conf/soda/DoringMW25,DMW24,alge,CN24}.
In particular, we show that if the alternating enumerator $\ae{T}{H}$ of a
tournament \(T\) and a graph  \(H\) is non-zero, then our linear combination
contains a term that counts occurrences of \(H\).

\begin{restatable*}[{{{$\CFsubProb(\{T\})$ to
    $\CPsubProb$-basis}}}]{theorem}{sevenfifteenfour}
    \dglabel*{theo:tour:ae}[lem:tour:colindsub,lem:colindsub:colsub,def:alternating:enumerator,def:TourG]
    Let $T$ be a $k$-labeled tournament and $G$ be a $k$-colored graph. Then,
    \begin{equation}
        \label{7-15-5}
        \CFsubs{T}{\tourG{G}{T}} = \sum_{H \in \graphs{k}} \ae{T}{H} \cdot \CPsubs{H}{G}.
        \qedhere
    \end{equation}
\end{restatable*}

For our goal of proving \cref{maintheorem:sub} we need to work a little bit harder,
though.
First, we show that it is indeed possible to extract single terms of \cref{7-15-5} given
an oracle computing the value of the whole linear combination.

\begin{restatable*}[Complexity monotonicity of $\CPsubProb$-basis]{lemma}{sevenfifteensix}
    \dglabel{lem:colsub:monotone}[]
    Let $H_1, \ldots, H_m$ be a sequence of $m$ pairwise distinct $k$-labeled graphs. Let
    $\alpha_1, \dots, \alpha_m \in \mathbb{Q}$ be a sequence of coefficients with $\alpha_i \neq 0$ for
    all $i \in \setn{m}$ (we assume that we have access to the coefficients and graphs).

    Assume that
    there is an algorithm that computes for every $k$-colored graph $G$ of order $n$ the value
    \[f(G) = \sum_{i = 1}^m \alpha_i \cdot \CPsubs{H_i}{G}.\]
    Then for each $j \in \setn{m}$, there is an algorithm that computes $\CPsubProb(\{H_j\})$
    such that
    \begin{itemize}
        \item the algorithm calls $f(\star)$ at most $h(k)$ times for some computable function $h$,
        \item each call to $f(\star)$ is for a $k$-colored graph $G^\ast$ of order at most $n$, and
        \item each $k$-colored graph $G^\ast$ can be computed in time $O(n^2)$.
            \qedhere
    \end{itemize}
\end{restatable*}

While results similar to \cref{lem:colsub:monotone} have been obtained for linear
combinations of counts of \emph{graph homomorphisms} (see for
instance~\cite{DBLP:conf/stoc/CurticapeanDM17,rsw24})---to the best of our knowledge---this
is the first such result for color prescribed subgraph counts and might be of independent interest.

With complexity monotonicity at hand, the path seems to be clear: extract the term
corresponding to cliques from \cref{7-15-5} and complete the reduction.
Alas, understanding which terms are in fact present in the linear combination of
\cref{7-15-5} (that is, which graphs have a non-vanishing alternating enumerator) poses in
itself a non-trivial challenge---similar to the works that use the alternating enumerator
in the context of counting problems.

\begin{itemize}
    \item Unfortunately, the alternating enumerator of the clique is always
        zero, see \cref{lem:ae:clqiue}.
        In other words, our transformations successfully eliminated the target we were
        aiming at.
    \item Luckily, not all hope is lost, though. For our intended
        reduction we are fine with identifying non-vanishing terms in \cref{7-15-5} that
        correspond to graphs with large clique \emph{minors} (see \cref{cor:ae:reduction}).
        Standard tools then allow to reduce again to counting cliques---albeit at the cost
        of a slight running time overhead.
    \item In particular, we are able to show that for the \emph{anti-matching} \(\cmatch{k}\) (see
        \cref{def:match}) we have $\ae{T}{\cmatch{k}} \neq 0$ for any tournament \(T\)
        (see \cref{theo:cmatch:not:zero}).
        In other words, \(\cmatch{k}\) is always present in \cref{7-15-5}.
        Further, $\cmatch{k}$ contains $\lfloor 3k/4 \rfloor$-clique minor (see
        \cref{lem:cmatch:clique:minor}), which gives hardness.
\end{itemize}

In total, we obtain the following two results whose combination yields
\cref{maintheorem:sub}.

\begin{restatable*}[{$\tourProb(\{T\})$ is harder than
    $\CPsubProb(\{\cmatch{k}\})$}]{theorem}{sevenfifteentwo}\dglabel{maintheorem:cfsub}[lem:subs:to:color,cor:ae:reduction,
    theo:cmatch:not:zero]
    Fix a (pattern) tournament \(T\) of order \(k\) and assume that there is
    an algorithm that
    reads the whole input and
    computes $\tourProb(\{T\})$
    for any (host) tournament of order $n$ in time $O(n^\gamma)$.

    Then there is an
    algorithm that solves
    $\CPsubProb(\{\cmatch{k}\})$ for any $k$-colored graph of order $n$ in time $O(n^{\gamma})$.
\end{restatable*}

\begin{restatable*}[{$\CPsubProb(\{\cmatch{k}\})$ is hard}]{theorem}{sevenfifteenseven}
    \dglabel{lem:cmatch:to:clique}[lem:cmatch:clique:minor,lem:cpsub:minor,lem:cp:cf,lem:CFclique:clique]
    Fix $k \geq 1$ and assume that there is
    an algorithm that
    computes $\CPsubProb(\{\cmatch{k}\})$
    for any $k$-colored graph of order $n$ in time $O(n^\gamma)$. Then there is an algorithm that
    solves $\cliqueProb_{\lfloor 3k/4 \rfloor}$ for any graph of order $n$
    in time $O(n^{\gamma})$.
\end{restatable*}

\paragraph*{Surprising Additional Benefits of Our Approach}

While our new approach clearly has its merits with enabling a proof of
\cref{maintheorem:sub}, it turns out that is allows to unveil even more of the structure of
tournament counting.

As a first extra consequence, we improve Yuster's lower bound for counting transitive
tournaments~\cite{Yuster25} from \(n^{\lceil k/2 \rceil - o(1)}\) to \(n^{\lfloor 3k/4
\rfloor - o(1)}\).

\begin{corollaryq}
    Assume that there is
    an algorithm that
    computes $\tourProb(\{\tranTour_k\})$
    for any tournament of order $n$ in time $O(n^\gamma)$.  Then there is an algorithm that
    solves $\cliqueProb_{\lfloor 3k/4 \rfloor}$ for any graph of order $n$
    in time $O(n^{\gamma})$.
\end{corollaryq}

Next, we observe that our construction of \(\tourG{G}{T}\) is \emph{reversible}.
We are able to define a graph $\GraphT{G}{T}$ (see \cref{def:GraphT}) that allows us to
leverage the reversible nature of basis transformations to obtain efficient algorithms.

\begin{restatable*}[Efficient algorithms for \(\CFsubProb(\{T\})\) via the
    $\CPsubProb$-basis]{theorem}{sevenfifteeneight}
    \dglabel{theo:GraphT}[theo:tour:ae,def:GraphT]
    Given a $k$-labeled tournament $T$ and a $k$-colored tournament $G$ then
    \[\CFsubs{T}{G} = \sum_{H \in \graphs{k}} \ae{T}{H} \cdot \CPsubs{H}{\GraphT{G}{T}}.\]
    Further, assume that for
    each $H$ with $\ae{T}{H} \neq 0$ we have an algorithm that
    reads the whole input and
    computes $\CPsubProb(\{H\})$ in time $O(n^\gamma)$.
    Then there is an algorithm that computes $\CFsubProb(\{T\})$ in time $O(n^\gamma)$.
\end{restatable*}

As a direct corollary of \cref{cor:ae:reduction,theo:GraphT}, our new approach thus lets
us understand the complexity of $\CFsubProb$ \emph{precisely}.

\begin{restatable*}[Complexity of \(\CFsubProb(\{T\})\) is equal to
    hardest $\CPsubProb(\{H\})$ with $\ae{T}{H} \neq 0$]{theorem}{theocfsub}
    \dglabel{theo:cfsub:hardest:term}[cor:ae:reduction,theo:GraphT]
    Let $T$ be a $k$-labeled tournament then
    $\CFsubProb(\{T\})$ can be computed
    in time $O(n^\gamma)$
    if and only if for each $H$ with $\ae{T}{H} \neq 0$ the problem
    $\CPsubProb(\{H\})$ can be computed
    in time $O(n^\gamma)$.
\end{restatable*}

With \cref{theo:cfsub:hardest:term} at hand one may now wonder about the \emph{exact}
complexity of $\CFsubProb$.
By \cref{theo:cfsub:hardest:term}, we have to find a graph $H$
with $\ae{T}{H} \neq 0$ such that $\CPsubProb(\{H\})$ is as hard as possible.

We show that the anti-matching is in a sense optimal in this regard:
all supergraphs of $\cmatch{k}$ have an alternating enumerator that vanishes (again, in
particular the clique).

\begin{restatable*}[Anti-matchings are the densest graphs with $\ae{T}{H} \neq 0$]{theorem}{sevenfifteennine}
    \dglabel{cor:cmatch:densest}[lem:ae:apex]
    Let $T$ be a $k$-labeled tournament and $H$ be a $k$-labeled graph.
    If $|E(H)| > |E(\cmatch{k})|$, then $\ae{T}{H} = 0$.
\end{restatable*}

\Cref{cor:cmatch:densest} suggests that $\CPsubProb(\{\cmatch{k}\})$ is a
very good candidate to understand the true (fine-grained) complexity of \(\tourProb\)---our
non-tightness is in particular due to our somewhat crude applications of the standard
techniques to show hardness for $\CPsubProb(\{\cmatch{k}\})$.%
\footnote{See \cref{7-15-10} for a detailed discussion.}

\subsection{The Complexity of Finding Tournaments}

While we consider the results for the counting problems to be the  main technical
contribution of this work, our investigation of the corresponding decision problems yields
interesting insights that might be of independent interest.
Hence, we continue with a brief run-down of the techniques that we use to obtain
\cref{maintheorem:Dec:sub:para}.

\paragraph*{Efficiently-detectable Tournaments}

We start with a brief description of the proof of the algorithmic part of
\cref{maintheorem:Dec:sub:para}, which we encapsulate in \cref{theo:core:easy}.

\begin{restatable*}[\(\DecTourProb(\{T\})\) is easy for \(T\) of large spine length $\core{T}$]{theorem}{thmdecsubeasy}
    \dglabel{theo:core:easy}[def:core,thm:directed_ramsey]
    Fix a pattern tournament \(T\). There is an algorithm for $\DecTourProb(\{T\})$ that
    for host tournaments of order \(n\) runs in time $O(n^{|V(T)| - \core{T} + 2})$.

    Next, fix a class $\mathcal{T}$ of tournaments such that there is a
    constant $c$ with $|V(T)| - \core{T} \leq c$ for all $T \in \mathcal{T}$.
    There is an algorithm for $\DecTourProb(\mathcal{T})$
    for pattern tournaments of order \(k\) and host tournaments of order \(n\)
    that runs in time $O(f(k) \cdot n^{c+2})$ for some computable function $f$.
\end{restatable*}

Now, first observe that tournament patterns of constant size are detectable by
simple brute-force algorithms (whose running time depends on the complexity of the
pattern).
However, as we are interested in easily-detectable classes of \emph{infinitely many}
pattern graphs, constant-size patterns alone do not suffice.

Hence, let us also recall a famous result due to Erd\H{o}s and Moser~\cite{ErMo64}.
\begin{restatable*}[\cite{ErMo64}]{theoremq}{emdirto}
    \dglabel{thm:directed_ramsey}(Large tournaments contain large transitive
    subtournaments,~\cite{ErMo64})
    All tournaments of order $2^{k-1}$ contain a subtournament isomorphic to $\tranTour_k$.
\end{restatable*}
Observe that \cref{thm:directed_ramsey} yields a single class of easily-detectable
patterns---the transitive tournaments.

In our proof of \cref{theo:core:easy}, we combine (the tournaments of) both previous ideas to obtain classes of
infinitely many easily-detectable patterns, where the exact running time in
``easily-detectable'' depends on the structure of the graph class.
Formally, we choose an arbitrary but small tournament \(R_+ \cup R_-\) and a transitive
tournament \(S\), and connect by a forward edge every vertex of \(R_+\) with every vertex in
\(S\) and connect by a backward edge every vertex of \(R_-\) with every vertex in \(S\).
Phrased differently, we define a decomposition of a tournament into a transitive
tournament \emph{spine} \(S\) and some remaining \emph{ribs} \(R_+ \cup R_-\)---the formal definition follows;
also consult \cref{fig:neighborhood:decomposition} for a visualization of an example.

\begin{restatable*}[The spine decomposition of a tournament \(T\)]{definition}{defcore}
    \dglabel{def:core}
    For a tournament $T$ of order $k$, we say that  $(R_+, R_-, S)$ for $R_+ \uplus R_-
    \uplus S = V(T)$  is a \emph{spine decomposition of $T$} if
    \(\indGraph{T}{S}\) is a transitive tournament and
    \[S \coloneqq \left(\bigcap_{v \in R_+} N^{+}_T(v) \right) \; \cap \;
    \left(\bigcap_{v \in R_-} N^{-}_T(v) \right).\]
    We also call \(S\) the \emph{spine} of \((R_+, R_-, S)\), call \(R_+\) and \(R_-\)
    the \emph{ribs} of $(R_+, R_-, S)$, and say that the spine decomposition $(R_+, R_-, S)$
    has a spine length of \(|S|\).

    Further, we write $\core{T}$ for the largest spine length of any spine decomposition
    of \(T\).
\end{restatable*}

\begin{figure}[t]
    \centering
    \includegraphics[width=.45\textwidth]{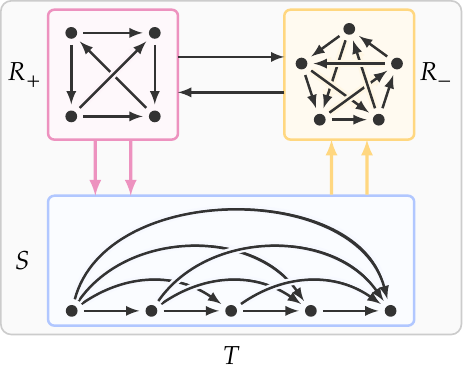}
    \caption{A spine decomposition of a tournament $T$.
        The spine  $S$ forms a transitive tournament. All vertices of the ribs $R_+$ have
        outgoing edges toward the spine and all vertices of the ribs $R_-$ have ingoing
        edges from the spine. Edges inside between ribs $R_+ \uplus R_-$ may be oriented arbitrarily.}
    \label{fig:neighborhood:decomposition}
\end{figure}

Now, when searching for a tournament pattern \(R_+ \cup R_-\cup S\) (we may compute such a
decomposition of the pattern in FPT time), we proceed in the natural
manner: given a host tournament \(G\), we try all possible choices for the (few) vertices
of \(R_+ \cup R_-\). For each such choice of vertices, we compute the corresponding
neighborhood of vertices that may potentially host a vertex of \(N\) (that is, we compute
a set \(N'\) of vertices of \(G\) analogous to the definition of \(N\) in
\cref{def:core}).

Now, if \(N'\) is small (smaller than $2^{|N|}$), we brute-force to find a copy of \(N\);
if \(N'\) is large (larger than $2^{|N|}$), we
apply \cref{thm:directed_ramsey}. In total, this yields \cref{theo:core:easy}.

\paragraph*{Not Efficiently-detectable Tournaments}

For the hardness part of \cref{maintheorem:Dec:sub:para}, we show that \emph{most}
tournaments are in fact hard to detect.

\begin{restatable*}[\(\DecTourProb(\{T\})\) is hard for random tournaments]{theorem}{thmdecsub}
    \dglabel{maintheorem:Dec:sub}[theo:TT:reduction,theo:prob:TT]
    Any tournament $T$ of order $k \geq 10^5$ that is chosen uniformly
    at random from all tournaments of order $k$ admits the following reduction with
    probability at least $1 - 3/k^3$.
    \begin{itemize}
        \item If there is an algorithm that reads the whole
        input and solves $\DecTourProb(\{T\})$ for any tournament of order $n$ in time $O(n^\gamma)$,
        then there exists an algorithm that solves
        $\DecCliqueProb_{\lfloor k/(9\log(k)) \rfloor}$ for any graph of order $n$
        in time $O(n^\gamma)$.
        \item Further, assuming ETH, there is a global constant $\beta > 0$ such that
        no algorithm that reads the whole input solves $\DecTourProb(\{T\})$
        for any graph of order $n$ in time $O(n^{\beta k/\log(k)})$.\qedhere
    \end{itemize}
\end{restatable*}

When counting all solutions, we may employ the colorful variant $\CFsubProb$ as an intermediate problem
to show the hardness of $\tourProb$---as does Yuster~\cite{Yuster25} for his hardness results.
However, such an approach is doomed in the decision case as there is no reduction from $\DecCFsubProb$
to $\DecTourProb$---see \cref{remark:no:reduction:dec:cf} for the technical details.

Hence, unable to use colors directly, we take an indirect route and aim to \emph{simulate}
colors via gadget constructions---since there are tournaments for which $\DecTourProb$ is easy,
our constructions work only for \emph{most} tournaments.
In particular, we show that the reduction of \cref{maintheorem:Dec:sub} works for all
tournaments that contain a specific substructure.
To this end, we write $\TT{T}$ for the largest transitive tournament inside $T$.

\begin{restatable*}[TT-unique]{definition}{TTunique}\dglabel{def:TT:unique}
    For a tournament $T$ of order $k$, we say that a partition of $V(T)$
    into $(\dSet, \zSet)$ is \emph{TT-unique with respect to $T$} if
    \begin{itemize}
        \item $\indGraph{T}{\dSet}$ has a trivial automorphism group,
        \item $\indGraph{T}{\dSet}$ appears exactly once in $T$ (that is, $\subs{\indGraph{T}{\dSet}}{T} = 1$), and
        \item for all $\dSet' \subseteq \dSet$ with $|\dSet'| \geq |\dSet| - \TT{T} \cdot |\zSet|$ and all
        $v \neq u \in V(T) \setminus \dSet'$, we have $\Tvec{v}{\dSet'} \neq \Tvec{u}{\dSet'}$. \qedhere
    \end{itemize}
\end{restatable*}

The intuition behind the first two properties of being TT-unique is that
$\indGraph{T}{\dSet}$ is \emph{distinguished in $T$}. This means that we can uniquely
identify $\indGraph{T}{\dSet}$ inside $T$.
The third property ensures that all vertices in $\zSet$ are distinguishable
by the intersection of their in-neighborhood with respect to $\dSet$.
The uniqueness of each vertex in $\zSet$ enables us to simulate colors.

\begin{restatable*}[Simulating colors via TT-uniqueness]{theorem}{ttReduction}
    \dglabel{theo:TT:construct}[def:TT:unique]
    Let $T$ be a tournament with a TT-unique partition  $(\dSet, \zSet)$ and let $\zCard \coloneqq |\zSet|$.
    Given a $\zCard$-colored graph $G$ of order $n$, we can construct an uncolored
    tournament $G^\ast$ of order $n + |\dSet|$ in time $O({(n + |\dSet|)}^2)$
    such that $T$ is isomorphic to a subtournament of $G^\ast$ if and only if $G$
    contains a colorful $\zCard$-clique.
\end{restatable*}

Assume that $T$ has a TT-unique partition  $(\dSet, \zSet)$ with $\zCard \coloneqq |\zSet|$. Given
a $\zCard$-colored graph $G$ of order $n$ with coloring $c$, let us briefly sketch how
to construct an uncolored tournament $G^\ast$ such that $T$ is isomorphic to a
subtournament of $G^\ast$ if and only if $G$ contains a colorful $\zCard$-clique.

To construct $G^\ast$, we start with the tournament $\tourG{G}{\indGraph{T}{\zSet}}$
(as defined previously, see also \cref{def:TourG}).
Note that $\tourG{G}{\indGraph{T}{\zSet}}$ simulate edges and non-edge in $G$
via the orientation of the edges in $\indGraph{T}{\zSet}$. Further, this tournament
is naturally $\zCard$-colored via the coloring $c$ of $G$. We also ensure that
vertices of the same color in $\tourG{G}{\indGraph{T}{\zSet}}$ always form
a transitive tournament.

To obtain $G^\ast$, we add the tournament $\indGraph{T}{\dSet}$ to $\tourG{G}{\indGraph{T}{\zSet}}$,
the orientation of the edges between $\indGraph{T}{\dSet}$ and $\tourG{G}{\indGraph{T}{\zSet}}$ being
carefully chosen. We give each vertex in $\dSet$ its own color, yielding a new coloring $c^\ast$ for $G^\ast$
(see \cref{fig:dec:reduction} for an example). Note that $c^\ast(x) = c(x)$ for all $x \in V(G)$.
Now, assume that $A \subseteq V(G^\ast)$ with $\indGraph{G^\ast}{A} \cong T$.
If we can ensure that the isomorphism from $\indGraph{G^\ast}{A}$ to $T$
is given via $c^\ast$ (i.e., $\indGraph{G^\ast}{A}$  is color prescribed
with respect to $c^\ast$), then by construction of $\tourG{G}{\indGraph{T}{\zSet}}$, we obtain that
$A \cap V(G)$ is a colorful $\zCard$-clique in $G$.
However, recall that we only have access to $\DecTourProb(\{T\})$
which completely ignores the coloring of $G^\ast$.
Therefore, we have to find a way to ensure that all subtournaments of $G^\ast$
that are isomorphic to $T$ are also always color prescribed.

This is the point where we use that $(\dSet, \zSet)$ is a TT-unique partition.
By the third property of TT-uniqueness, we obtain that the largest transitive subtournament
of $T$ (and therefore $\indGraph{G^\ast}{A}$)
is relatively small. Further, vertices of the same color
in $G^\ast$ always form a transitive tournament. Hence, in a first step,
we can ensure that not many vertices in $\indGraph{G^\ast}{A}$ have the same color, and
subsequently we can upper bound the size of $A \cap \zSet$.
This yields $|A \cap \dSet| \geq |\dSet| - \TT{T} \cdot |\zSet|$, which allows us to use
the third property of TT-uniqueness again. This time, we obtain
that the neighborhoods of two vertices in $V(T) \setminus (A \cap \dSet)$ differ by a lot.
These neighborhoods can then be used to simulate colors.
We combine this with the first two properties
of being TT-unique to directly control the location
of each vertex in $\indGraph{G^\ast}{A}$ to ensure that $\indGraph{G^\ast}{A}$
is color prescribed with respect to $c^\ast$.
This enables us to show that each isomorphic copy of $T$ in $G^\ast$
directly corresponds to a colorful $\zCard$-clique in $G$.

However, \cref{theo:TT:construct} works only if $T$ has a TT-unique
partition with a large set $\zSet$. By utilizing standard techniques from probability theory,
we show that this is the case for
random tournaments.

\begin{restatable*}[Random tournaments have TT-unique
    partition $(\dSet, \zSet)$ with large $|\zSet|$]{theorem}{theoprobTT}\dglabel{theo:prob:TT}[lem:prob:iso,lem:prob:vec]
    Let $T$ be a random tournament of order $k\geq 10^5$, then with probability
    at least $(1 - 3/k^3)$ it admits a TT-unique partition $(\dSet, \zSet)$
    with $|\zSet| \geq \lfloor k / (9 \log(k)) \rfloor$.
\end{restatable*}

Combining \cref{theo:prob:TT} with \cref{theo:TT:construct} yields
that $\DecTourProb(\{T\})$ is at least as hard as $\DecCliqueProb_{k/(9 \log(k))}$ for
almost all tournaments (see \cref{maintheorem:Dec:sub}).

Finally, we are ready to construct $\mathcal{T}_c$ from \cref{maintheorem:Dec:sub:para}.
To this end, we first use \cref{maintheorem:Dec:sub} to obtain
a tournament $T_0$ that is \emph{hard enough} (that is, a tournament that has a lower
bound under ETH: observe that for any specific tournament that admits a reduction,
\cref{maintheorem:Dec:sub} indeed yields a \emph{deterministic} reduction).
Next, we add transitive tournaments of varying sizes to $T_0$
(correspondingly to \cref{def:core}).
Since $T_k$ has a large spine by construction, \cref{theo:core:easy} yields
that $\DecSubProb(\mathcal{T}_c)$ is FPT.
In total, we obtain \cref{maintheorem:Dec:sub:para}.

\thmdecsubpara*

\clearpage
\section{Preliminaries}

We write $\log$ for the logarithm of base 2 and we write $a \equiv_p b$ for $a \equiv b
\bmod{p}$.

For a positive integer $k$, we write $\setn{k}$ for the set $\{1, \dots, k\}$.
Given a set $A$ and a nonnegative integer~$k$, we write
$\binom{A}{k} \coloneqq \{B \subseteq A : |B| = k\}$ for the set of all size-\(k\) subsets of \(A\).

For two sets $A$ and $B$, we write $\symdif{A}{B} \coloneqq (A \cup B) \setminus (A \cap B)$
for the symmetric difference of $A$ and $B$. Observe that an element is in $\symdif{A}{B}$
if it is either in $A$ or in $B$, but not in both sets.
Further, if \(A\) and \(B\) are disjoint, we also write $A \uplus B$ to denote their disjoint
union.

For two functions $f \colon A \to B$ and $g \colon B \to C$, we write $g \circ f \colon A
\to B$ for their concatenation, that is, for the function $(g \circ f)(x) \coloneqq
g(f(x))$.
We write $\sym{k}$ for the symmetric group on $\setn{k}$, that is, for the group of all
permutations of \(k\) elements with function composition.
Given two sets $A, B \subseteq \setn{k}$,
we say that a permutation $\sigma \in \sym{k}$ \emph{maps} $A$ to $B$
if $\sigma(x) \in B$ for all $x \in A$.

\paragraph*{Graphs}
In this paper, we use the term \emph{graph} for an undirected (labeled) graph without multi-edges and self-loops.
We write $\graphs{n}$ for the set of all graphs with vertex set $\setn{n}$.

Given a graph $G$, we write $V(G)$ for the vertex set of $G$ and $E(G)$ for the edge set of
$G$. The \emph{order} of a graph \(G\) is the number of vertices of \(G\).
Two vertices are \emph{adjacent} if they form an edge. An \emph{apex} is a vertex that is
adjacent to all other vertices. Observe that a graph may contain multiple apices.
A graph $G$ is a \emph{matching} if each vertex is adjacent to at most one other vertex.

We say that $H$ is a \emph{subgraph} of $G$ if $V(H) \subseteq V(G)$ and $E(H) \subseteq V(G)$.
Given a set of vertices $A \subseteq V(G)$, we write $\indGraph{G}{A}$ for the subgraph
of $G$ that is \emph{induced} by $A$, meaning that $V(\indGraph{G}{A}) = A$ and
$E(\indGraph{G}{A}) = E(G) \cap \binom{A}{2}$. We say that $H$ is an
\emph{edge-subgraph} of $G$ (denoted as $H \subseteq G$) if $V(H) = V(G)$ and $E(H) \subseteq E(G)$.
Given a set of edges $S \subseteq E(G)$, we write
$\edgesub{G}{S}$ for the edge-subgraph of $G$ that is induced by $S$. Formally,
$V(\edgesub{G}{S}) = V(G)$ and $E(\edgesub{G}{S}) = S$.
We say that $H$ is equal to $G$ (denoted as $H \equiv G$) if $V(H) = V(G)$ and $E(H) = E(G)$.

We write $K_n$ for the complete graph with vertex set $\setn{n}$.
A \emph{$k$-clique} of a graph is a set of $k$ vertices inducing a copy of $K_k$.
Further, we write $\IS_n$ for the graph without edges and vertex set $\setn{n}$.
For a graph $G$, we write $\overline{G}$ for the complement graph of $G$, that is, the
graph with vertex set \(V(G)\) and all edges that are missing from \(E(G)\) to turn \(G\)
into the complete graph.

A \emph{tree-decomposition} of a graph $G=(V,E)$ is a pair $(T,\mathcal{X})$ where $T=(I,F)$ is a tree,
and ${\mathcal{X}=(B_i)}_{i\in I}$ is a family of subsets of $V(G)$, called \emph{bags}
and indexed by the vertices of $T$, such that
\begin{enumerate}
    \item each vertex $v\in V$ appears in at least one bag, that is $\bigcup_{i\in I} B_i= V$,
    \item for each edge $e = \{x, y\} \in E$, there exists $i\in I$ such that $\{x,y\} \subseteq B_i$, and
    \item for each vertex $v\in V$, the set of nodes indexed by $\{ i : i\in I, v\in B_i\}$ forms a subtree of $T$.
\end{enumerate}
The \emph{width} of a tree decomposition is $\max_{i\in I} \{|B_i| -1\}$. The
\emph{treewidth} of $G$, denoted by $\tw (G)$, is the minimum width of a
tree-decomposition of $G$.

\paragraph*{Graph Homomorphisms, Isomorphisms, and Colorings}

A graph \emph{homomorphism} from $H$ to $G$ is a function $f \colon V(H) \to V(G)$ that
preserves edges (but not necessarily non-edges),
meaning that $\{f(u), f(v)\} \in E(G)$ for all $\{u, v\} \in E(H)$.
A graph \emph{isomorphism} from $H$ to $G$ is a function $f \colon V(H) \to V(G)$ that
preserves edges and non-edges, meaning that $\{f(u), f(v)\} \in E(G)$ if and only if $\{u, v\}
\in E(H)$---indeed, \(f\) is a bijection between \(H\) and \(G\).
If there is an isomorphism from
$H$ to $G$ then $H$ and $G$ are \emph{isomorphic}, which we denote by $H \cong G$.
An \emph{automorphism} of $H$ is an isomorphism from $H$ to itself. We write $\auts{H}$
for the set of automorphisms of $H$.

A graph $G$ is \emph{$k$-labeled} if $V(G) = \setn{k}$ and
\emph{$k$-colored} if \(G\) comes with a mapping $c \colon V(G) \to \setn{k}$.
Further, a subgraph $H$ of a $k$-colored graph $G$ is \emph{colorful with respect to $c$}
if the function $c$ the restricted to the vertices of $H$ is a bijection.
Observe that a $k$-labeled graph $G$ comes with a natural coloring $\id \colon \setn{k}
\to \setn{k}; x \mapsto x$.

Given a $k$-labeled graph $H$ and a $k$-colored graph $G$ with coloring $c$,
a subgraph $F$ of $G$ is \emph{color-prescribed} if
\(F\) is isomorphic to $H$ and respects the coloring $c$, that is,
$c$ restricted to $F$ defines an isomorphism to $H$ (or equivalently $\{u, v\} \in E(F)$ if and
only if $\{c(u), c(v)\} \in E(H)$).
Observe that a {color-prescribed} subgraph
is necessarily \emph{colorful}, but the opposite is not true in general.
Consult \cref{fig:cp:cf} for a visualization of an example.

\begin{figure}[tp]
    \renewcommand\tabularxcolumn[1]{m{#1}}
    \centering
    \begin{tabularx}{\linewidth}{*{2}{>{\centering\arraybackslash}X}}
    \includegraphics{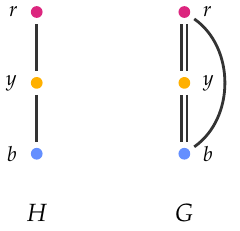}
    &
    \includegraphics{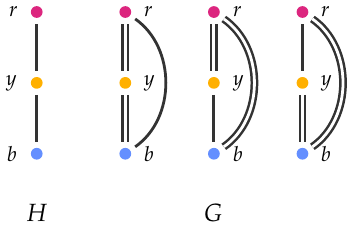}
    \\[-2ex]
    \begin{subfigure}[t]{\linewidth}
        \caption{There is a single color-prescribed copy of the 3-colored
            path \(H\) in the 3-colored triangle \(G\) (depicted by doubled lines).}\label{fig3a}
    \end{subfigure}
    &
    \begin{subfigure}[t]{\linewidth}
        \caption{There are three colorful copies of the 3-colored
            path \(H\) in the 3-colored triangle \(G\) (depicted by doubled lines in the
            three copies of \(G\)).}\label{fig3b}
    \end{subfigure}
    \end{tabularx}
    \caption{An illustration of the difference between color-prescribed subgraphs
        (\cref{fig3a}) and colorful subgraphs (\cref{fig3b}).}%
    \label{fig:cp:cf}
\end{figure}

\begin{definition}
    Let $H$ be a $k$-labeled graph and $G$ be a $k$-colored graph with coloring $c \colon
    V(G) \to \setn{k}$.
    \begin{itemize}
        \item We write $\GraphCFsubs{H}{G}$ for the number of
        subgraphs $F$ of $G$ that are isomorphic to $H$ and colorful with respect to $c$
        (that is, $c$ restricted to $V(F)$ is a bijection).

        \item We write $\CPsubs{H}{G}$ for the number of subgraphs $F$
        of $G$ that are isomorphic to $H$ and that respect the coloring $c$.

        \item We write $\CFsubs{H}{G}$ for the number of induced
        subgraphs $\indGraph{G}{A}$
        of $G$ that are isomorphic to $H$ and colorful with respect to $c$.

        \item We write $\CPindsubs{H}{G}$ for the number of induced
        subgraphs $\indGraph{G}{A}$ of $G$ that are isomorphic to $H$ and that respect the coloring $c$.

        \item For a positive integer $k$,
            we write $\clique{k}{G} \coloneqq\indsubs{K_k}{G}$ for the number of $k$-cliques in $G$
            and
            we write $\CFclique{k}{G} \coloneqq
            \CFsubs{K_k}{G} $ for the number of colorful $k$-cliques in $G$.
        \qedhere
    \end{itemize}
\end{definition}

\paragraph*{Tournaments}
A directed graph is a graph where each edge comes with a direction; we say a directed edge
(or \emph{arc}) \((u, v)\) goes from \(u\) to \(v\).
The \emph{out-neighborhood} $N^+_T(v)$ of a vertex \(v\) (in a directed graph \(T\)) is the set of all vertices \(w\) with a
directed edge from \(v\) to \(w\).
We write $\outDeg{T}{v} \coloneqq |N^+_T(v)|$ for the out-degree of $v$.
The \emph{in-neighborhood} $N^-_T(v)$ of a vertex \(v\) (in a directed graph \(T\))  is the set of all vertices \(u\) with a
directed edge from \(u\) to \(v\).
We write $\inDeg{T}{v} \coloneqq |N^-_T(v)|$ for the in-degree of $v$.

Given two directed graphs \(T\) and \(T'\), a function \(c \colon V(T) \to V(T')\)
and vertices \(u, v \in V\), we say
\begin{itemize}
    \item \(T\) and \(T'\) have the same orientation on \(\{u, v\}\) if
        both \((u, v) \in \DE{T}\) and \((c(u), c(v)) \in \DE{T'}\) or
        both \((v, u) \in \DE{T}\) and \((c(v), c(u)) \in \DE{T'}\).
        In this case, we also say that \(T\) and \(T'\) agree on \(\{u, v\}\).
    \item \(T\) and \(T'\) have the opposite orientation on \(\{u, v\}\) if
        both \((u, v) \in \DE{T}\) and \((c(v), c(u)) \in \DE{T'}\) or
        both \((v, u) \in \DE{T}\) and \((c(u), c(v)) \in \DE{T'}\).
        In this case, we also say that \(T\) and \(T'\) disagree on \(\{u, v\}\).
\end{itemize}
A noteworthy special case occurs for \(V(T) = V(T')\); in this case we typically assume
\(c\) to be the identity.
Finally, a \emph{flip} or \emph{change of orientation} of a directed edge \((u,v)\) is the operation that replaces
\((u,v)\) with \((v,u)\).

A \emph{tournament} $T$ is a directed graph with exactly one directed \emph{edge} between
any two vertices $u$ and $v$.
We write $V(T)$ for the vertex set of $T$ and $\DE{T}$ for the edge set of $T$.
The \emph{order} of a tournament \(T\) is the number of vertices of \(T\).

We say that $T'$ is a \emph{subtournament} of $T$ if $V(T') \subseteq V(T)$,
$\DE{T'} \subseteq \DE{T}$, and $T'$ is also a tournament.
Given a set of vertices $A \subseteq V(T)$, we write $\indGraph{T}{A}$ for the subtournament
of $T$ that is induced by $A$, meaning that $V(\indGraph{T}{A}) = A$ and
$\DE{\indGraph{T}{A}} \coloneqq \{(u, v) \in \DE{T} : u, v \in A\}$.
Observe that all subtournaments $T'$ of $T$ are induced (that is, $T' = \indGraph{T}{A}$ for some $A \subseteq V(T)$).
We say that $T'$ is equal to $T$ (denoted as $T' \equiv T$) if $V(T') = V(T)$ and $\DE{T'} = \DE{T}$.

A \emph{random tournament of order $k$} is a tournament drawn from the uniform
distribution over all $k$-labeled tournaments. Equivalently, a random tournament of order
$k$ can be obtained by orienting each edge of the complete graph $K_k$ independently, with
each direction chosen with probability $1/2$.

A tournament $T$ is {\it transitive} if it does not contain any directed cycle.
We write $\tranTour_n$ for the tournament with vertex set $\setn{n}$ and $(i, j) \in \DE{\tranTour_n}$
if and only if $i < j$. Observe that a tournament $T$ of order $n$ is transitive if and only
if it is isomorphic to $\tranTour_n$. We write $\TT{T}$ for the order of the
largest subtournament of $T$ that is transitive.
The {\it topological ordering} of a transitive tournament
$T$ is the unique ordering $u_1,\dots,u_n$ of $V(T)$ such that, for every $i<j$, $(u_i, u_j)\in \DE{T}$.

A tournament \emph{homomorphism} from $T$ to $T'$ is a function $f \colon V(T) \to V(T')$ that
preserves edges (but not necessarily non-edges), meaning that $(f(u), f(v)) \in \DE{T'}$
for all $(u, v) \in \DE{T}$.
A tournament \emph{isomorphism} from $T$ to $T'$ is a function $f \colon V(T) \to V(T')$ that
preserves edges and non-edges, meaning that $(f(u), f(v)) \in \DE{T'}$
if and only if $(u, v) \in \DE{T}$---indeed \(f\) is a bijection between \(T\) and \(T'\).
If there is an isomorphism between $T$ and $T'$ then \(T\) and \(T'\)
are \emph{isomorphic}, which we denote by $H \cong G$.
An \emph{automorphism} of $T$ is an isomorphism from $T$ to itself. We write $\auts{T}$
for the set of automorphisms of $T$.

A tournament $T$ is \emph{$k$-labeled} if $V(T) = \setn{k}$
and \emph{$k$-colored} if it comes with a mapping $c \colon V(T) \to \setn{k}$.
Further, a subtournament $T'$ of a
$k$-colored tournament $T$ is \emph{colorful with respect to $c$} if the function
$c$ restricted to the vertices of $T'$ is a bijection.
Observe that a $k$-labeled tournament $T$ comes with a natural coloring $\id \colon
\setn{k} \to \setn{k}; x \mapsto x$.

\begin{definition}
    \begin{itemize}
    \item Let $T$ be a directed graph and $G$ be tournament. We write $\tours{T}{G}$ for the number of
    induced subtournaments of $G$ that are isomorphic to $T$.
    \item Let $T$ be a $k$-labeled directed graph and $G$ be a $k$-colored
    tournament with coloring $c \colon V(G) \to \setn{k}$. We write $\CFsubs{T}{G}$
    for the number of induced subtournaments $F$
    of $G$ that are isomorphic to $T$ and colorful with respect to $c$
    (that is, $c$ restricted to $V(F)$ is a bijection).
    \qedhere
    \end{itemize}
\end{definition}

\paragraph*{Parameterized Complexity}

A \emph{parameterized (counting) problem} is a pair \((P, \kappa)\) of a function $P \colon \Sigma^\ast \to \nat$
and a computable parameterization $\kappa \colon \Sigma^\ast \to \nat$.
We say that \((P, \kappa)\) is a decision problem if \(P\) takes the values \(0\) or
\(1\).
A parameterized problem
$(P, \kappa)$ is \emph{fixed-parameter tractable} (FPT) if there is a computable function $f$
and a deterministic algorithm $\mathbb{A}$ that computes $P(x)$ in time $f(\kappa(x))\; |x|^{O(1)}$
for all $x \in \Sigma^\ast$.

A \emph{parameterized Turing reduction} from $(P, \kappa)$ to $(P', \kappa')$ is a deterministic
FPT algorithm that computes $P(x)$ using oracle access to $P'$ where each input \(y\) to
the oracle satisfies $\kappa'(y) \leq g(\kappa(y))$ for a computable function $g$.
We write $P \fpt P'$ whenever there is a parameterized
reduction from $P$ to $P'$. The relation \(\fpt\) is transitive~\cite[Theorem 13.3]{param_algo}.

Next, we introduce the main problems relevant for us.
To this end, we write
$\mathcal{T}$ for a recursively enumerable (r.e.) set of directed graphs and $\mathcal{H}$
for a r.e.\ set of undirected graphs.

\noindent
\begin{minipage}{1\linewidth}%
\begin{problem}-{\NUM{}\ensuremath{\textsc{Clique}}}%
    \label{prob:clique}
    \PInput{A pair of a graph \(G\) and a parameter \(k\).}
    \POutput{$\clique{k}{G}$; that is, the number of subsets $A \subseteq V(G)$ with $\indGraph{G}{A} \cong K_k$}
    \PParameter{\(k\)}
\end{problem}

\begin{problem}+-={\NUM{}\ensuremath{\textsc{cf}\text{-}\textsc{Clique}}}%
    \label{prob:cf:clique}
    \PInput{A pair of a $k$-colored graph \(G\) and a parameter \(k\).}
    \POutput{$\CFclique{k}{G}$; that is, the number of colorful subsets $A \subseteq V(G)$
    with $\indGraph{G}{A} \cong K_k$ }
    \PParameter{\(k\)}
\end{problem}

\begin{problem}+-={$\#\textsc{cf}\text{-}\textsc{IndSub}_{\TourSymb}$}%
    \label{prob:cf:IndSub}
    \PInput{A directed $k$-labeled graph $T \in \mathcal{T}$ and a $k$-colored tournament \(G\).}
    \POutput{$\CFsubs{T}{G}$; that is, the number of colorful subsets $A \subseteq V(G)$
    with $\indGraph{G}{A} \cong T$}
    \PParameter{\(k \coloneqq |V(T)|\)}
\end{problem}

\begin{problem}+={$\#\textsc{cp}\text{-}\textsc{Sub}$}%
    \label{prob:cp:sub}
    \PInput{A $k$-labeled graph $H \in \mathcal{H}$ and a $k$-colored graph \(G\) with
    coloring $c$.}
    \POutput{$\CPsubs{H}{G}$; that is, the number of subgraphs $F$ of $G$ that are
    isomorphic to $H$ via $c$}
    \PParameter{\(k \coloneqq |V(H)|\)}
\end{problem}
\end{minipage}

\begin{remark}
    We may safely assume that \(\mathcal{T}\) contains only tournaments, as
    for any tournament \(G\) and any non-tournament \(T\), we always have
    $\tours{T}{G} = \CFsubs{T}{G} = 0$.
\end{remark}

\begin{remark}
    We also use decision problem variants of the above problems, where we identify all
    nonzero output values.
    We write \emph{Dec} before a counting problem to denote the decision variant of the
    problem. For example, for a $k$-labeled graph $H$ the problem $\DecCPsubProb(\{H\})$ gets as
    input a $k$-colored graph $G$ and is asked to decide if
    $G$ has a subgraph that is isomorphic to $H$ and respects the coloring $c$ of $G$.
\end{remark}

A counting problem $(P,\kappa)$ is \w-hard if there is parameterized Turing reduction
from $\cliqueProb$ to $(P, \kappa)$.
A decision problem $(P,\kappa)$ is \Dw-hard if there is parameterized Turing reduction
from $\DecCliqueProb$ to $(P, \kappa)$.

\paragraph*{The Complexity Exponent and Fine-grained Complexity}

We are also interested in fine-grained lower and upper bounds, that is in the precise
exponent in the polynomial terms of running times.
We always specify the complexity of our graph problems with respect
to the number of vertices of the input graph and use the following shorthand notation.

\begin{definition}[Complexity exponent  $\Exp{P}$ ]\label{def:Exp}
    Given a counting/decision problem $P$ on graphs/tournaments, we write
    $\Exp{P}$ for the infimum over all $\beta$ such that there exists an
    algorithm that reads the whole input  and solves $P$ for all
    input graphs/tournaments $G$ in time $O(|V(G)|^\beta)$.%
    \footnote{
        Observe that $\Exp{P}$ is always at least $2$ since we consider only
        algorithms that read the whole input.
        Other authors (for instance~\cite{Yuster25}) use $c^\ast(k)$ for $\cExp{k}$,
        $c(T)$ for $\tExp{T}$, $d^\ast(k)$ for $\DeccExp{k}$, or $d(T)$ for $\DectExp{T}$.
    }
\end{definition}

\begin{remark}
    For technical reasons, we require that algorithms have to read the whole input.
    For the sake of readability, we henceforth omit this assumption from our statements.
\end{remark}

For fine-grained lower bounds, we rely
on the \emph{Exponential Time Hypothesis}
(ETH) due to~\cite{IMPAGLIAZZO2001512}, as formulated in~\cite{param_algo}.

\begin{conjecture}[ETH, {\cite[Conjecture 14.1]{param_algo}}]
    There is a real value \(\varepsilon > 0\) such that
    for all $c > 0$
    the problem $\sat{3}$ cannot be solved in time
    $O(2^{\varepsilon n} \cdot n^c)$,  where $n$ is the number of
    variables of the formula.
\end{conjecture}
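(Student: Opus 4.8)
My plan is \emph{not} to attempt a proof of the statement itself: the Exponential Time Hypothesis is an unproven complexity-theoretic conjecture --- this is precisely why it appears here as a \emph{Conjecture}, due to Impagliazzo and Paturi~\cite{IMPAGLIAZZO2001512} and formulated as in~\cite[Conjecture 14.1]{param_algo} --- and an unconditional proof is beyond current techniques. In particular, it would imply $\mathsf{P} \neq \mathsf{NP}$: if $\mathsf{P} = \mathsf{NP}$ then $\sat{3}$ is solvable in time $n^{O(1)} = 2^{O(\log n)}$, which is $O(2^{\varepsilon n} n^c)$ for \emph{every} $\varepsilon > 0$ and every $c > 0$, so no $\varepsilon$ could witness the conjecture. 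Consequently, the hypothesis is adopted as a standing assumption, and all fine-grained lower bounds in this paper (notably \cref{maintheorem:sub}, \cref{maintheorem:Dec:sub}, and \cref{maintheorem:Dec:sub:para}) are \emph{conditional} on it.

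What one can and should verify is that the particular quantifier shape used here agrees with the more familiar formulations. First I would check equivalence with ``$\sat{3} \notin \mathrm{TIME}(2^{o(n)})$'': writing $s_3 \coloneqq \inf\{\delta \ge 0 : \sat{3} \in \mathrm{TIME}(2^{\delta n})\}$, the stated form with witness $\varepsilon$ rules out a $2^{\varepsilon n}$-time algorithm (since $2^{\varepsilon n} = O(2^{\varepsilon n} n^c)$ for every $c > 0$) and hence forces $s_3 \ge \varepsilon > 0$, i.e.\ no $2^{o(n)}$ algorithm; conversely, if $s_3 > 0$ then $\varepsilon \coloneqq s_3/2$ is a valid witness, as an $O(2^{\varepsilon n} n^c) = 2^{(s_3/2 + o(1))n}$-time algorithm would contradict the definition of $s_3$. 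Second, the statement is phrased purely in the number of variables $n$, so no appeal to the Sparsification Lemma of Impagliazzo--Paturi--Zane is needed to interpret it; it is exactly the form used downstream.

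The ``main obstacle'' is thus conceptual, not technical: proving ETH outright is not attempted and, with present methods, not attainable. The actual work in the remainder of the paper is to turn this single hypothesis into explicit exponents --- starting from the standard consequence that ETH rules out $n^{o(k)}$-time algorithms for $\cliqueProb_k$ (see~\cite{param_algo,topo}), we transport hardness to tournaments via \cref{maintheorem:cfsub,lem:cmatch:to:clique} in the counting setting and via \cref{theo:TT:construct,theo:prob:TT} in the decision setting, and finally combine the trade-off family $\mathcal{T}_c$ with \cref{theo:core:easy} to obtain \cref{maintheorem:Dec:sub:para}.
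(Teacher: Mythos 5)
You are right that this statement is a hypothesis, not a theorem: the paper gives no proof and simply cites it as a standing assumption from~\cite{IMPAGLIAZZO2001512} in the formulation of~\cite[Conjecture 14.1]{param_algo}, exactly as you describe. Your proposal therefore matches the paper's treatment, and your remarks on why it is unprovable with current techniques and how its quantifier form relates to the usual $2^{o(n)}$ phrasing are sound.
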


The Exponential Time Hypothesis implies
that there are no FPT algorithms for \Dw-hard or \w-hard problems.

\begin{lemmaq}[{Clique lower bounds under ETH,~\cite[Modification of Lemma B.2]{DMW24_arxiv}}]\label{lemma:eth:clique}
    Assuming ETH, there exists a global constant $\alpha > 0$ such that for $k \geq 3$:
    \begin{itemize}
        \item No algorithm solves $\DecCliqueProb_{k}$
        for graphs $G$ of order $n$ in time $O(n^{\alpha k})$.
        Especially, we can choose $\alpha$ such that $\DeccExp{k} > \alpha k$.
        \item No algorithm computes $\cliqueProb_{k}$
        for graphs $G$ of order $n$ in time $O(n^{\alpha k})$.
        Especially, we can choose $\alpha$ such that $\cExp{k} > \alpha k$.
        \qedhere
    \end{itemize}
\end{lemmaq}

\section{Hardness of Counting Tournaments via Signatures}

In this section, we prove our results for the counting problem $\tourProb$; in
particular \cref{maintheorem:sub:para}.
As a first step of our reduction, we move to the colored version $\CFsubProb$.
To this end, we rely on a result by Yuster~\cite{Yuster25}; for completeness, we give the
proof in the appendix.

\begin{restatable*}[{$\tourProb(\{T\})$ is harder than $\CFsubProb(\{T\})$~\cite[Lemma 2.4]{Yuster25}}]{lemma}{lemsubscolor}\dglabel{lem:subs:to:color}
    For a $k$-labeled tournament $T$,
    assume that there is an algorithm that computes $\tourProb(\{T\})$
    for any tournament of order $n$ in time $O(f(n))$.
    Then there is an algorithm that computes $\CFsubProb(\{T\})$ for any $k$-colored tournament
    of order $n$ in time $O(2^{|V(T)|} \cdot f(n))$.
    In particular, $\Exp{\tourProb(\{T\})} \geq \Exp{\CFsubProb(\{T\})}$.

    Further, for an r.e.\ set of tournaments $\mathcal{T}$,
    we obtain $\CFsubProb(\mathcal{T}) \fpt \tourProb(\mathcal{T})$.
\end{restatable*}

First, we show how to obtain hardness results
via a structure of tournaments called \emph{signature} that originated in Yuster's work~\cite{Yuster25}.

\begin{definition}[{{Signature of a tournament, $\sig(T)$,~\cite[Definition 2.2]{Yuster25}}}]\dglabel{def:sig}
    Let \(T\) be a tournament.
    An \emph{edge flip of \(T\) touching \(R \subseteq V(T)\)} is a tournament that is
    obtained from \(T\) by flipping the orientation of at least one edge
    with both endpoints in $V(T)\setminus R$ but not changing the orientation of any edge with
    an endpoint in \(R\).

    A subset \(R \subseteq V(T)\) is a \emph{signature of \(T\)} if no edge flip of \(T\) touching
    \(R\) is isomorphic to \(T\).
    We write $\sig(T)$ for the smallest size of a signature of $T$
    (clearly, the entire vertex set of $T$ is a signature).
\end{definition}

Let $T$ denote a tournament of order $k$ and let $R$ be a signature of $T$ with $r \coloneqq |R|$.
Yuster uses signatures to obtain a reduction
from $\DecCFcliqueProb_{k-r}$ to $\DecCFsubProb(\{T\})$.
For completeness, we repeat the proof of Yuster with updated notation in the appendix.

\begin{restatable*}[{$\CFsubProb$ is harder than $\CFcliqueProb$ for tournaments with
        small signatures,~\cite[Modification of Lemma~2.5]{Yuster25}}]{lemma}{yusterred}\dglabel{lem:CFsubs:CFclique}[def:sig]
    Let $T$ be a tournament with $k$ vertices and $R$ be a signature of $T$ with $|R| = r$.
    Given a $(k-r)$-colored graph $G$ of order $n$,
    we can compute a tournament $G^\ast$ of order $(n + r)$ in time $O({(n + r)}^2)$ such that
    \[\CFsubs{T}{G^\ast} = \CFclique{k-r}{G}.\qedhere\]
\end{restatable*}

Now, to show hardness, we have to ensure that $|V(T)| - \sig(T)$ grows with the order of~$T$.
To this end, we identify a large transitive tournament inside of
$T$, which we are always able to do thanks to the result of Erd\H{o}s and Moser, that we
first recall here for convenience.

\emdirto

\begin{lemma}
    \dglabel*{lem:sig:lower:bound}[thm:directed_ramsey]($|V(T)| - \sig(T) \!\geq\! \log(|V(T)|) / 4$)
    Any tournament \(T\) of order $k$ satisfies $k - \sig(T) \!\geq\! \lceil \log(k)/4 \rceil$.
\end{lemma}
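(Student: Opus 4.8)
The plan is to build a signature $R$ with $|R| \le k - \lceil \log(k)/4 \rceil$, equivalently a set $W \coloneqq V(T) \setminus R$ of size $|W| \geq \lceil \log(k)/4 \rceil$ such that reversing any nonempty set of edges inside $W$ (and no edge with an endpoint outside $W$) always changes the isomorphism type of $T$. The isomorphism invariant I track is the out-degree sequence (score sequence) of the whole tournament.

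The heart of the argument is the following sufficient condition: if $W$ induces a transitive subtournament of $T$ with topological order $w_1, \dots, w_t$ and the out-degrees satisfy $\outDeg{T}{w_1} \leq \outDeg{T}{w_2} \leq \cdots \leq \outDeg{T}{w_t}$, then $V(T) \setminus W$ is a signature. Indeed, fix a nonempty $F \subseteq \binom{W}{2}$ and let $T'$ be the tournament obtained by reversing exactly the edges of $F$. As $T[W]$ is transitive, every edge of $F$ runs forward along $w_1, \dots, w_t$, so setting $\delta_i \coloneqq \outDeg{T'}{w_i} - \outDeg{T}{w_i}$ one sees that the prefix sum $\sum_{i \leq \ell} \delta_i$ equals minus the number of edges of $F$ leaving $\{w_1, \dots, w_\ell\}$; hence every prefix sum is $\leq 0$, while $\sum_i \delta_i = 0$ and $\delta \not\equiv 0$ (the first vertex of $W$ incident to $F$ strictly loses out-degree). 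The out-degrees of $T$ and $T'$ agree outside $W$, so $T' \cong T$ would force $(\outDeg{T}{w_i} + \delta_i)_i$ to be a rearrangement of the \emph{sorted} sequence $(\outDeg{T}{w_i})_i$; but every rearrangement of a sorted sequence has all prefix sums at least as large, which forces each $\sum_{i \leq \ell}\delta_i \geq 0$ as well, hence $\delta \equiv 0$ --- a contradiction. So $T' \not\cong T$.

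It remains to find such a $W$ of size $\geq \lceil \log(k)/4 \rceil$. By the Erd\H{o}s--Moser theorem (\cref{thm:directed_ramsey}), $T$ contains a transitive subtournament $U$ of order $\lfloor \log k \rfloor + 1 > \log k$, and any subset of $U$ is again transitive; from $U$ I extract a subset that additionally has non-decreasing $T$-out-degrees along its (inherited) topological order, losing only a constant factor in size. The two extreme cases are instructive: when a constant fraction of the vertices of $U$ share the same $T$-out-degree they already form a valid $W$ (the monotonicity condition is vacuous for a constant sequence); when instead the out-degrees along $U$ spread over many values, a short selection step --- sorting the chosen vertices by out-degree and exploiting the transitivity of $U$ --- is used to pull out the monotone piece. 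Either way $|W| \geq \lceil \log(k)/4 \rceil$, and the sufficient condition above gives $\sig(T) \leq k - \lceil \log(k)/4 \rceil$.

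The step I expect to be the main obstacle is precisely this last extraction: guaranteeing a monotone piece of size a positive constant fraction of $\log k$ for \emph{every} tournament, the awkward regime being when the out-degrees decrease along the topological order of $U$ (the situation of $U = T = \tranTour_k$, where the score argument does not apply at all). There one resorts to a complementary invariant: reversing an edge $(w_i, w_j)$ of $U$ with a vertex of $U$ strictly between $w_i$ and $w_j$ creates a directed triangle inside $U$, so choosing $W$ to use only every other vertex of $U$ makes $\TT{T}$, the order of the largest transitive subtournament, strictly drop under any nonempty flip; combining this with the score argument is what covers all tournaments and fixes the constant. Throughout, the transitivity of $W$ is what does the work: it forbids ``balanced'' flips --- reversals of unions of directed cycles --- which are the only flips that could preserve the score sequence, so one is left to rule out plain ``block reversals'' of $W$, handled by the monotonicity of the out-degrees (or, in the awkward regime, by the triangle argument).
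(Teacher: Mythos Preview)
Your sufficient condition is correct and its proof is clean: if $T[W]$ is transitive with topological order $w_1,\dots,w_t$ and $\outDeg{T}{w_1}\le\cdots\le\outDeg{T}{w_t}$, then any nonempty flip inside $W$ changes the score sequence, so $V(T)\setminus W$ is a signature. The prefix-sum / majorization argument is fine.

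The gap is the extraction step, and it is not a minor technicality. For $T=\tranTour_k$ your condition is \emph{never} satisfied for $|W|\ge 2$: every vertex has a distinct out-degree $k-i$, and along the topological order of \emph{any} transitive subtournament these out-degrees are strictly decreasing. So from Erd\H{o}s--Szekeres you would get at best a monotone piece of length $\sqrt{\log k}$ in general, and for $\tranTour_k$ the longest non-decreasing piece has length $1$. Your proposed patch for this ``awkward regime'' does not work either: the invariant $\TT{T}$ need not drop under flips inside $W$ when $T$ is not itself transitive. If $T$ has many other maximum-size transitive subtournaments disjoint from $W$ (which is easy to arrange), flipping edges inside $W$ leaves $\TT{T}$ unchanged, so this invariant cannot certify $T'\not\cong T$. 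The sentence ``combining this with the score argument is what covers all tournaments and fixes the constant'' is where the proof is missing, and I do not see how to make this combination rigorous along the lines you sketch.

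The paper's proof avoids this obstacle by replacing your monotonicity requirement with a \emph{spread} requirement that is always achievable. By pigeonhole on residues of out-degrees modulo $p$, one finds $X\subseteq V(T)$ with $|X|\ge 2^{p-1}$ in which any two out-degrees are either equal or differ by at least $p$; Erd\H{o}s--Moser then yields a transitive $Y\subseteq X$ of size $p$. Now flip a nonempty set of edges inside $Y$ and let $v$ be the first touched vertex in the topological order: its out-degree strictly drops, and no other $u\in Y$ can land on the old value $\outDeg{T}{v}$ because each out-degree changes by at most $p-1$ while the nearest distinct value is at distance $\ge p$. Hence the number of vertices with that out-degree changes, and $T'\not\cong T$. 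This gives the same $\lceil\log(k)/4\rceil$ bound without needing any monotone subsequence.
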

\begin{proof}
    For an integer $d$, we write $\CoutDeg{T}{d}$ for the number of vertices
    in $T$ whose out-degree is exactly $d$.

    Let $T$ be a tournament of order $k \geq p 2^{p-1}$.
    We show that $T$ contains a signature of size at most $k - p$.

    \begin{claim}
        There is a set $X \subseteq V(T)$ of $2^{p-1}$ vertices such that for every pair
        $u, v \in X$, we have either
        $\outDeg{T}{u} = \outDeg{T}{v}$ or
        $|\outDeg{T}{u} - \outDeg{T}{v}| \geq p$.
    \end{claim}
    \begin{claimproof}
        For all $i \in \{0, 1, \dots, p-1\}$,
        we define the set $X_i$ as the set of vertices $u$ such that $\outDeg{T}{u} \equiv_p i$.
        Observe that each set  $X_i$ has the desired property.
        Since ${(X_i)}_{i \in \{0, 1, \dots, p-1\}}$
        partitions $V(T)$, it follows that some $X_i$ has size at least $|V(T)|/p = k/p \geq 2^{p-1}$. This
        shows the existence of $X$.
    \end{claimproof}

    By \cref{thm:directed_ramsey}, since $|X|= 2^{p-1}$, there exists
    $Y \subseteq X$ of size $p$, such that $\indGraph{T}{Y}$ is isomorphic to $\tranTour_p$.
    We finish the proof with the following claim.

    \begin{claim}%
        \label{7-15-1}
        The set $R = V(T) \setminus Y$ is a signature of $T$ of size $k - p$.
    \end{claim}
    \begin{claimproof}
        Let us fix a tournament $T^\ast$ that is obtained from $T$ by changing the
        orientation of at least one edge in $\indGraph{T}{Y}$. We show that $T^\ast$ is not
        isomorphic to $T$, which implies the result.

        Let $B \coloneqq \DE{T}\setminus \DE{T^\ast}$ be the set of edges whose orientation changes
        and $y_1,\dots,y_p$ be the topological ordering of $\indGraph{T}{Y}$.
        Among all vertices of $Y$ incident to an edge in $B$, we let $v$ be the
        vertex with the smallest index, and set $d \coloneqq \outDeg{T}{v}$.
        Our goal is to show that
        $\CoutDeg{T}{d} \neq \CoutDeg{T^\ast}{d}$, which immediately yields that
        $T$ and $T^\ast$ are not isomorphic.

        By construction, in $T$ the vertex $v$ has no incoming edge in $B$, but
        at least one outgoing edge. Therefore, $\outDeg{T^\ast}{v} < \outDeg{T}{v}$.
        This means that $\CoutDeg{T}{d} = \CoutDeg{T^\ast}{d}$ could only be true
        if there exists a vertex $u \in Y$ with $\outDeg{T}{u} \neq \outDeg{T^\ast}{u} = d$.
        Let us show that such a vertex $u$ does not exist. Since $Y \subseteq X$,
        and because $\outDeg{T}{u} \neq d$, we obtain $|\outDeg{T}{u} - d| \geq p$.

        Now, since $|Y|=p$, the vertex $u$ is incident to at most $p-1$ edges in $B$.
        Thus, we have\[
            \outDeg{T}{u} - (p-1) \leq \outDeg{T^\ast}{u}  \leq \outDeg{T}{u} + (p-1),
        \]
        which implies that $\outDeg{T^\ast}{u} \neq d$. This in turn shows
        that $\CoutDeg{T}{d} \neq \CoutDeg{T^\ast}{d}$, and in particular
        $T$ is non-isomorphic to $T^\ast$.

        Thus, $R$ is a signature of size $k - p$.
    \end{claimproof}

    From \cref{7-15-1}, we obtain $k - \sig(T) \geq p$.
    Finally, we show how to pick \(
        p \geq \log(k)/4 + 1 \geq \left\lceil \log(k)/4 \right\rceil.
    \)
    Rearranging the terms, we obtain
    \[k \geq  \left(\frac{\log(k)}{4} +1\right)\cdot 2^{\log(k)/4}
    \; \Leftrightarrow \; k^{3/4} \geq \log(k^{1/4}) + 1,\]
    which is true for all $k \geq 1$.
    In total, we obtain that \(R\) is a signature of size at most \(k - \lceil \log(k)/4
    \rceil \).
\end{proof}

We are ready to prove \cref{maintheorem:sub:para}.
In contrast to~\cite{Yuster25}, we use $\CFcliqueProb$ as our source
of hardness. This allows us to strengthen and simplify the overall proof.

\begin{lemmaq}[{\cite[Lemma 1.11]{DBLP:phd/dnb/Curticapean15}}]
    \dglabel{lem:CFclique:w1}($\CFcliqueProb$ is \w-hard,~\cite[Lemma
    1.11]{DBLP:phd/dnb/Curticapean15})
    $\CFcliqueProb$ is \w-hard.%
    \footnote{The problem $\NUM{}\textsc{ColClique} / k$
    from~\cite{DBLP:phd/dnb/Curticapean15}  is equivalent to $\CFcliqueProb$.}
\end{lemmaq}

\thmsubpara
\begin{proof}
    Assume that $\mathcal{T}$ contains finitely many tournaments. Then, there is a $k$ such that
    $|V(T)| \leq k$ for all tournaments $T \in \mathcal{T}$. If $T \in \mathcal{T}$ is not a tournament
    then $\tours{T}{G} = 0$ for all input tournaments $G$. Otherwise,
    we compute $\tours{T}{G}$ in time $O(k^2 \cdot |V(G)|^k)$ by using a brute force algorithm.

    Next, assume that $\mathcal{T}$ contains infinitely many tournaments.
    From \cref{lem:subs:to:color}, we have $\CFsubProb(\mathcal{T}) \fpt \tourProb(\mathcal{T})$.
    Hence, it is enough to show that $\CFsubProb(\mathcal{T})$ is \w-hard.
    We present an FPT-reduction to $\CFcliqueProb$ which is \w-hard due to \cref{lem:CFclique:w1}.

    Let $G$ be a $p$-colored graph of order $n$.
    We start by searching for a tournament $T \in \mathcal{T}$ with $|V(T)| \geq 2^{4 p}$
    and a signature $R \subseteq V(T)$ (by \cref{lem:sig:lower:bound})
    with $|V(T)| - |R| = p$.
    Next, we use \cref{lem:CFsubs:CFclique}
    on \(T\), \(R\) and \(G\) to compute a tournament $G^\ast$.
    Finally we use our oracle for \(\CFsubProb\) to compute and return the number
    \(\CFsubs{T}{G^\ast}\).

    For the correctness, first observe that $\mathcal{T}$ contains tournaments
    of arbitrary large order.
    Now, due to \cref{lem:sig:lower:bound}
    every tournament $T$ with $|V(T)| \geq 2^{4 p}$ contains a signature $R$
    with $|V(T)| - |R| = p$.
    To this end, we observe $|V(T)| - \sig(T) \geq \log(2^{4p}) / 4
    = p$.
    Finally, \cref{lem:CFsubs:CFclique} returns a tournament \(G^\ast\)
    such that $\CFsubs{T}{G^\ast} = \CFclique{|V(T)| - |R|}{G} = \CFclique{p}{G}$;
    yielding correctness.

    For the running time, observe that there are computable function $g'$ and $g''$
    such that we can find $T$ and $R$ in time $O(g'(p))$
    and the parameter $|V(T)|$ of $\CFsubProb(\mathcal{T})$ is at most $O(g''(p))$.
    Further, \cref{lem:CFsubs:CFclique} runs in time $O(n^2)$. Hence,
    $\CFcliqueProb \fpt \CFsubProb(\mathcal{T})$ and therefore
    we obtain that $\CFsubProb(\mathcal{T})$ and $\tourProb(\mathcal{T})$ are both \w-hard.
\end{proof}

\section{Fine-grained Hardness of Counting Tournaments via Complexity Monotonicity}

In this section, we proceed to complement and supersede \cref{maintheorem:sub:para} with
stronger, fine-grained lower bounds.

\thmsub*

We proceed in two main steps. We first show that for understanding \(\tourProb(\{T\})\), it
suffices to understand the complexity of counting colored undirected \emph{anti-matchings}. In a
second step, we show that counting colored anti-matchings has tight lower bounds under ETH.

\begin{definition}[The anti-matching $\cmatch{k}$ of size \(k\)]\dglabel{def:match}
    For any $k$, we write $\match{k}$ for the canonical matching on \(k\) vertices, that
    is the graph with vertex set $\setn{k}$ and edge set
    \begin{itemize}
        \item $\{\{1, 2\}, \{3, 4\}, \dots, \{k-1, k\}\}$ if $k$ is even, or
        \item $\{\{1, 2\}, \{3, 4\}, \dots, \{k-2, k-1\}\}$ if $k$ is odd.
    \end{itemize}
    The \emph{anti-matching} $\cmatch{k}$ is the complement graph of $\match{k}$.
\end{definition}

\subsection{Counting Undirected, Colored Anti-Matchings via Directed Tournaments}
\label{sec:lin:combi}

As the first major step toward \cref{maintheorem:sub}, we show the following reduction.

\sevenfifteentwo*

Let $T$ be a $k$-labeled tournament.
We prove \cref{maintheorem:cfsub} by using the following chain of reductions:
\begin{align*}
    \tourProb(\{T\}) &\xrightarrow{\text{\ref{lem:subs:to:color}}}
    \CFsubProb(\{T\}) \xrightarrow{\text{\ref{theo:tour:ae}}}
    \sum_{H \in \graphs{k}} \ae{T}{H} \cdot \CPsubProb(\{H\})
    \xrightarrow{\text{\ref{cor:ae:reduction} and~\ref{theo:cmatch:not:zero}}}
    \CPsubProb(\{\cmatch{k}\}).
\end{align*}

\subsubsection*{Step 1a: Removing colors}

As mentioned in the last section, we use a result by Yuster~\cite{Yuster25}
to reduce from the uncolored version of counting tournaments to the colored version of the
problem.

\lemsubscolor*

\subsubsection*{Step 1b: Understanding $\CFsubProb(\{T\})$ via Linear Combinations}\label{sec:lin}

Since the input of $\CPsubProb(\{\cmatch{k}\})$
is an undirected $k$-colored graph $G$, we need to find
a way to transform $G$ into a tournament $\tourG{G}{T}$ for some $k$-labeled
tournament $T$.
To this end, we use the following construction that uses
the orientation of edges in $T$ to simulate edges and non-edges
of $G$.

\begin{definition}[The biased tournament $\tourG{G}{T}$ of a labeled tournament \(T\) and a
    colored graph \(G\)]\dglabel{def:TourG}
    Let $T$ be a $k$-labeled tournament and $G$ be a $k$-colored graph with coloring
    $c \colon V(G) \to \setn{k}$.

    The \emph{biased tournament} $\tourG{G}{T}$ of \(G\) and \(T\) is a $k$-colored
    tournament with vertex-set $V(G)$ and coloring $c$, such that for
    every $x,y\in V(G)$ with $c(x)\neq c(y)$, we have:
    \begin{itemize}
        \item if $\{x, y\} \in E(G)$ then $(x,y)\in \DE{\tourG{G}{T}}$ if and only if
            $(c(x),c(y))\in \DE{T}$ (that is, $\tourG{G}{T}$ and $T$ have the same orientation
            on $\{x, y\}$); and
        \item if $\{x, y\} \notin E(G)$ then $(x,y)\in \DE{\tourG{G}{T}}$ if and only if
            $(c(y),c(x))\in \DE{T}$ (that is, $\tourG{G}{T}$ and $T$ have a different
            orientation on $\{x, y\}$).
    \end{itemize}
    If $c(u) = c(v)$ the orientation of $\{u, v\}$ in $\tourG{G}{T}$ is arbitrary.\footnote{
        In this paper, we only use biased tournament when counting colorful tournaments.
        Hence, the orientation between vertices of the same color does not matter.
    }
\end{definition}

See \cref{fig:TourG} for an example of $\tourG{G}{T}$.
Our first hope is that $\CFsubs{T}{\tourG{G}{T}}$ is equal to the number of colorful $k$-cliques
in $G$. This would immediately
yield that $\CFsubProb(\{T\})$ is equal to the problem of counting colorful $k$-cliques which is a hard
problem in its own right. Indeed, $\CFsubs{T}{\tourG{G}{T}}$ counts colorful $k$-cliques
in $G$.
However, as the following example shows, $\CFsubs{T}{\tourG{G}{T}}$
also counts occurrences of other $k$-vertex graphs in $G$.

\begin{figure}[t]
    \centering
    \includegraphics{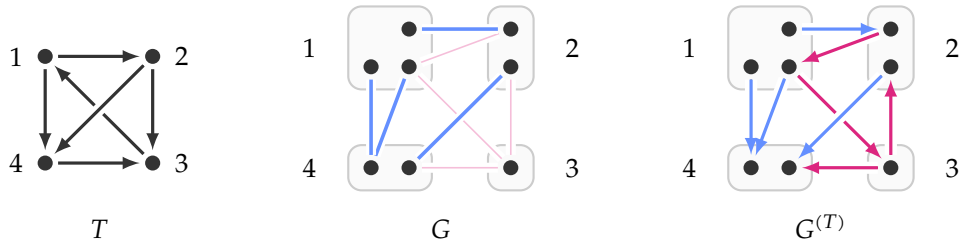}
    \vskip-1ex
    \caption{A tournament $T$, a graph $G$, and their corresponding
        biased tournament  $\tourG{G}{T}$.
        We depict a subset of the (non-)edges of \(G\) and \(\tourG{G}{T}\), where
        blue edges are edges in $G$ and red edges are
        non-edges in $G$.}%
    \label{fig:TourG}
\end{figure}

\begin{example}\label{ex:tri}
    The \emph{triangle tournament} $\tri$ is the tournament with vertex set $\setn{3}$ and edge set
    $\{(1, 2), (2, 3), (3, 1)\}$ (see \cref{fig:flip}). We claim that for any subset $A \subseteq V(G)$ the subtournament
    $\indGraph{\tourG{G}{\tri}}{A}$ is colorful
    and isomorphic to $\tri$ if and only if $\indGraph{G}{A}$ is either a colorful clique
    of a colorful independent set that respects $c$.

    To this end consider a pair of vertices $u, v \in A$. By construction of $\tourG{G}{\tri}$,
    we obtain that
    $\{u, v\}$ is an edge in $\indGraph{G}{A}$ if and only if $T$ and $\indGraph{\tourG{G}{\tri}}{A}$
    have the same orientation on $\{c(u), c(v)\}$ and $\{u, v\}$. Therefore, all
    non-edges of $\indGraph{G}{A}$ correspond to \emph{flipped}
    edges in $\indGraph{\tourG{G}{\tri}}{A}$. Thus, $\indGraph{\tourG{G}{\tri}}{A}$
    is isomorphic to $\tri$ if and only if $\indGraph{\tourG{G}{\tri}}{A}$ is equal to
    a tournament $\tri'$ that is isomorphic to $\tri$ and obtained from $\tri$ by flipping
    all non-edges of $\indGraph{G}{A}$.
    \Cref{fig:flip} shows that $\tri' \cong \tri$ can only be the case if we either
    flip no edges in $\tri$ (in this case $\indGraph{G}{A} \cong K_3$) or if we flip all edges in $\tri$
    (in this case $\indGraph{G}{A} \cong \IS_3$).
    Hence,
    \begin{align*}
        \CFsubs{\tri}{\tourG{G}{\tri}} = \CPindsubs{\IS_3}{G} + \CPindsubs{K_3}{G}.
        \tag*{\qedhere}
    \end{align*}
\end{example}

\begin{figure}[tp]
    \centering
    \includegraphics{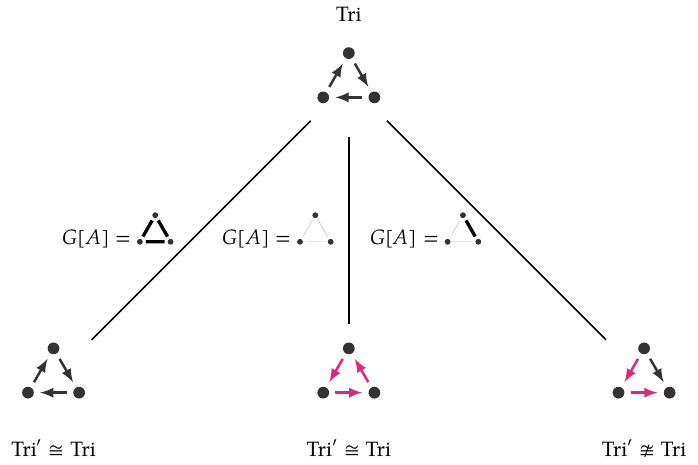}
    \caption{The tournament $\tri$ and the construction of $\tri'$
        depending on the choice of $\indGraph{G}{A}$. For readability, we omit
        the labels of the vertices.}%
    \label{fig:flip}
\end{figure}

We extend \cref{ex:tri} to general tournaments $T$. To this end, we introduce the
notion of \emph{flipping edges of $T$} with respect to some graph $H$ on the same vertex set as $T$.

\begin{definition}[$T_H$, the tournament obtained by flipping edges of a tournament \(T\) along a graph \(H\)]\dglabel{def:flip:graph}
    Let $T$ be a tournament and $H$ be a graph on the same vertex-set as $T$,
    we write $T_{H}$ for the tournament that we obtain from $T$ by flipping
    all edges $(u, v) \in \DE{T}$ with $\{u,v\}\notin E(H)$.
\end{definition}

We use  \cref{def:flip:graph} to write $\CFsubs{T}{\tourG{G}{T}}$
as a linear combination of $\CPindsubProb$-counts.

\begin{lemma}[{Expressing $\CFsubProb(\{T\})$ in the $\CPindsubProb$-basis}]\dglabel{lem:tour:colindsub}[def:flip:graph]
    Let $T$ be a $k$-labeled tournament $T$ and $G$ be a $k$-colored graph. Then
    \begin{align*}
        \CFsubs{T}{\tourG{G}{T}} = \sum_{H \in \graphs{k}} \left[T_{H} \cong T\right] \cdot \CPindsubs{H}{G},
    \end{align*}
    where $\left[T_{H} \cong T\right]$ is equal to 1 if $T_H \cong T$ and 0, otherwise.
\end{lemma}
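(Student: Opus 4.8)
The plan is to set up, for each colorful vertex subset $A \subseteq V(G)$, a correspondence between the isomorphism type of the induced subtournament $\indGraph{\tourG{G}{T}}{A}$ and a flip $T_H$ of $T$ determined by the color-prescribed type of $\indGraph{G}{A}$. First I would fix a colorful $A \subseteq V(G)$, so that $c$ restricted to $A$ is a bijection onto $\setn{k}$, and let $H = H(A) \in \graphs{k}$ be the unique $k$-labeled graph for which $c|_A$ is an isomorphism from $\indGraph{G}{A}$ to $H$; that is, $\{i,j\} \in E(H)$ if and only if $\{(c|_A)^{-1}(i), (c|_A)^{-1}(j)\} \in E(G)$. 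Since $A$ is colorful, no two vertices of $A$ share a color, so the same-color clause of \cref{def:TourG} never applies to a pair inside $A$, and $\tourG{G}{T}$ prescribes a direction on every pair of vertices of $A$.

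The key step is to check that the \emph{same} bijection $c|_A$ is a tournament isomorphism from $\indGraph{\tourG{G}{T}}{A}$ to $T_H$. This follows from a short case distinction on a pair $x, y \in A$ with $i \coloneqq c(x)$, $j \coloneqq c(y)$ (note $i \neq j$): if $\{x,y\} \in E(G)$, then $\{i,j\} \in E(H)$, and by \cref{def:TourG} the orientation of $\{x,y\}$ in $\tourG{G}{T}$ agrees with that of $\{i,j\}$ in $T$, which by \cref{def:flip:graph} is also the orientation of $\{i,j\}$ in $T_H$ since edges of $H$ are not flipped; if $\{x,y\}\notin E(G)$, then $\{i,j\}\notin E(H)$, and by \cref{def:TourG} the orientation of $\{x,y\}$ in $\tourG{G}{T}$ is the reverse of that of $\{i,j\}$ in $T$, which by \cref{def:flip:graph} is exactly the orientation of $\{i,j\}$ in $T_H$ since non-edges of $H$ are flipped. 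Hence $\indGraph{\tourG{G}{T}}{A} \cong T_H$, and therefore $\indGraph{\tourG{G}{T}}{A} \cong T$ if and only if $T_{H(A)} \cong T$.

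It then remains to organize the count. The left-hand side counts colorful sets $A \subseteq V(G)$ with $\indGraph{\tourG{G}{T}}{A} \cong T$, which by the previous paragraph equals the number of colorful $A$ with $T_{H(A)} \cong T$. Partitioning these $A$ according to their type $H = H(A)$, and observing that for a fixed $H \in \graphs{k}$ the number of colorful $A$ with $H(A) = H$ is precisely $\CPindsubs{H}{G}$ (by the definition of color-prescribed induced counts: $\indGraph{G}{A}$ is isomorphic to $H$ via $c$), I would conclude
\[
    \CFsubs{T}{\tourG{G}{T}} = \sum_{\substack{H \in \graphs{k} \\ T_H \cong T}} \CPindsubs{H}{G} = \sum_{H \in \graphs{k}} \left[T_H \cong T\right]\cdot \CPindsubs{H}{G}.
\]
The only delicate point — the "main obstacle", though it is mild — is getting the direction of the flip consistent between \cref{def:TourG} and \cref{def:flip:graph}, and confirming that the same-color clause of \cref{def:TourG} is harmless, which is exactly where colorfulness of $A$ enters.
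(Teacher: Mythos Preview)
Your proof is correct and takes essentially the same approach as the paper: for each colorful $A$ you identify the color-prescribed type $H(A)$ and verify via a two-case check that $c|_A$ is a tournament isomorphism from $\indGraph{\tourG{G}{T}}{A}$ to $T_{H(A)}$, then partition the colorful $A$'s by their type. The only cosmetic difference is that the paper defines its graph $H_X$ via the edges on which $T$ and $\indGraph{\tourG{G}{T}}{X}$ agree and then separately checks (their Claim~2) that this coincides with the color-prescribed type of $\indGraph{G}{X}$, whereas you define $H(A)$ directly from $\indGraph{G}{A}$ and thereby fold that verification into the definition; the case analysis is identical.
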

\begin{proof}
    Let $c \colon V(G)  \to \setn{k}$ be the coloring of $G$ and $\tourG{G}{T}$.
    We show that $X \subseteq V(\tourG{G}{T}) = V(G)$ induces a colorful subtournament
    $\tourG{G}{T}[X]$ that is isomorphic to $T$ if and only if
    $\indGraph{G}{X}$ is isomorphic (with respect to $c$) to a graph $H_X$ with $T \cong T_{H_X}$.
    This proves the claim since $\CFsubs{T}{\tourG{G}{T}}$ counts the number
    of induced, colorful subtournaments of $\tourG{G}{T}$ that are isomorphic to $T$, while the sum
    counts the number of induced subgraphs of $G$ that are isomorphic (with respect to $c$)
    to some $H$ with $T \cong T_{H}$.

    To this end, let $X \subseteq V(\tourG{G}{T}) = V(G)$ be a colorful set of vertices.
    Recall that, since $X$ is colorful, $c$ is a bijection when restricted to $X$.
    For every such set $X$, we define $H_X$ as the graph with vertex-set $V(H_X)=\setn{k}$ and edge-set
    \[
        E(H_X)=\left\{\{u,v\} : (u,v)\in \DE{T} \mbox{~and~} (c^{-1}(u),c^{-1}(v))\in \DE{\tourG{G}{T}[X]}\right\}.
    \]

    \begin{claim}\label{claim:tour:H:iso}
        The function $x \mapsto c(x)$ defines an isomorphism from $\tourG{G}{T}[X]$ to $T_{H_X}$.
    \end{claim}
    \begin{claimproof}
        Let us fix two distinct vertices $x,y\in X$ and assume without loss of generality that $(x,y)$
        is an edge of $\tourG{G}{T}[X]$. We distinguish two cases.
        \begin{itemize}
            \item If $(c(x), c(y)) \in \DE{T}$, then by definition of $H_X$ we have
                $\{c(x), c(y)\} \in E(H_X)$, and by definition of $T_{H_X}$, we have
                $(c(x),c(y))\in \DE{T_{H_X}}$.
            \item Else, $(c(y), c(x)) \in \DE{T}$, then by definition of $H_X$ we have
                $\{c(x), c(y)\} \notin E(H_X)$ and by definition of $T_{H_X}$ we have
                $(c(x),c(y)) \in \DE{T_{H_X}}$.
        \end{itemize}
        In both cases, $(c(x),c(y))$ is an edge of $T_{H_X}$.
        The claim follows.
    \end{claimproof}
    Next, we show that $H_X$ and $\indGraph{G}{X}$ are isomorphic with respect to $c$.
    \begin{claim}\label{claim:H:G:iso}
        The function $f \colon \setn{k} \to X, u \mapsto c^{-1}(u)$ defines an isomorphism from $H_X$ to $\indGraph{G}{X}$.
    \end{claim}
    \begin{claimproof}
        Let us fix two distinct vertices $u,v\in \setn{k}$, and let $x_u = c^{-1}(u)$, $x_v=c^{-1}(v)$. We distinguish two cases.
        \begin{itemize}
            \item  Assume first that $\{x_u, x_v\}$ is an edge of $\indGraph{G}{X}$. Then,
                since $c(x_u)\neq c(x_v)$ and by definition of $\tourG{G}{T}$, $(u,v)\in \DE{T}$
                if and only if $(x_u,x_v) \in \DE{\tourG{G}{T}}$.
                Therefore, by definition of $H_X$, $\{u, v\}$ is an edge of $H_X$.

            \item Conversely, assume now that $\{x_u, x_v\}$ is not an edge of $\indGraph{G}{X}$. Then,
                since $c(x_u)\neq c(x_v)$ and by definition of $\tourG{G}{T}$, $(u,v)\in \DE{T}$
                if and only if $(x_v,x_u) \in \DE{\tourG{G}{T}}$.
                Therefore, by definition of $H_X$, $\{u, v\}$ is not an edge of $H_X$.
        \end{itemize}
        The claim follows.
    \end{claimproof}

    On the one hand, if $\tourG{G}{T}[X]$ is isomorphic to $T$, then by \cref{claim:tour:H:iso} we obtain that $T$ is
    isomorphic to $T_{H_X}$. Further, by \cref{claim:H:G:iso}, we obtain that $H_X$ is isomorphic (with
    respect to $c$) to $\indGraph{G}{X}$.
    Thus, $\indGraph{G}{X}$ is isomorphic to a subgraph $H_X$ with $T \cong T_{H_X}$.

    On the other hand, let $\indGraph{G}{X}$ be isomorphic (with respect to $c$) to some $H$ with $T \cong T_{H}$.
    We show that $H \equiv H_X$, where $H_X$ is the graph that we obtain from $\tourG{G}{T}[X]$.
    By \cref{claim:H:G:iso} the function $c^{-1}$ restricted to $X$ defines
    an isomorphism from $H_X$ to $\indGraph{G}{X}$. Further, by assumption, $c$ restricted to $X$ defines
    an isomorphism from $\indGraph{G}{X}$ to $H$. Thus, $c \circ c^{-1} = \id_{\setn{k}}$ is an isomorphism from $H_X$ to $H$
    and hence $H \equiv H_X$.
    By \cref{claim:tour:H:iso}, we obtain that $\tourG{G}{T}[X]$ is isomorphic to $T_{H_X} = T_{H}$ with respect to $c$.
    Thus, $\tourG{G}{T}[X]$ is also isomorphic to $T \cong T_{H}$.  The lemma follows.
\end{proof}

\begin{remark}\dglabel^{rem:indsub:basis}[lem:tour:colindsub]
    Observe that $\CPindsubs{K_k}{\star}$ is always part of the linear combination
    in \cref{lem:tour:colindsub}. In an ideal world, we would be able to extract the
    term $\CPindsubs{K_k}{\star}$ directly from the linear combination, which would then allow us
    to count $k$-cliques. Unfortunately, this is not possible for linear combinations
    of $\CPindsubProb$-counts. For example, consider the sum over all graphs with $k$-vertices
    \[f(\star) = \sum_{H \in \graphs{k}} \CPindsubs{H}{\star},\]
    then $f(G)$ counts the number of colorful induced subgraph of $G$ of size $k$. This can
    be computed in linear time by simply taking the product over the size of all $k$ color classes
    in $G$.
    However, $f(\star)$ also contains the clique counting term $\CPindsubs{K_k}{\star}$ which cannot
    be computed in linear time, unless ETH fails.
\end{remark}

Due to \cref{rem:indsub:basis}, we need a way to rewrite the linear combination of $\CPindsubProb$-counts
into a new basis that allows us to extract single terms. In \cref{sec:complex:monotone},
we show that the $\CPsubProb$-basis has this property which then yields our
reduction from $\CPsubProb(\{\cmatch{k}\})$ to $\CFsubProb(\{T\})$.
Already in~\cite{DBLP:conf/stoc/CurticapeanDM17}, the authors use similar ideas
of changing bases. In particular, consult~\cite[Section~3.1]{DBLP:conf/stoc/CurticapeanDM17}
for an implicit proof of the following \cref{lem:colindsub:colsub}.
For completeness, we defer the proof of \cref{lem:colindsub:colsub} to the appendix.
\begin{restatable*}[{Basis transformation $\CPindsubProb$-basis to $\CPsubProb$-basis}]{lemma}{lemcolindsub}\dglabel{lem:colindsub:colsub}
    Let $H$ be a $k$-labeled graph and $G$ be a $k$-colored graph, then
    \begin{align*}
        \CPindsubs{H}{G} = \sum_{H \subseteq H'} {(-1)}^{|E(H')| - |E(H)|} \cdot \CPsubs{H'}{G},
        &\qedhere
    \end{align*}
    where the sum ranges over all edge-supergraphs $H'$ of $H$.
\end{restatable*}

\Cref{lem:colindsub:colsub} allows us to rewrite the linear combination of \cref{lem:tour:colindsub}.
We therefore obtain a new linear combination that uses $\CPsubProb$-counts.
Thus, we also obtain new coefficients that are given by the \emph{alternating enumerator}
which we define next.

\begin{definition}[The alternating enumerator $\ae{T}{H}$ of a tournament \(T\) and a
    graph \(H\)]\dglabel{def:alternating:enumerator}
    Let $T$ be a tournament and $H$ a graph on the same vertex-set.
    The \emph{alternating enumerator} of $H$ with respect to $T$ is defined as
    \begin{align*}
        \ae{T}{H} \coloneqq {(-1)}^{|E(H)|} \sum_{H' \subseteq H} {(-1)}^{|E(H')|} \left[T_{H'} \cong T\right],
    \end{align*}
    where the sum ranges over all edge-subgraphs $H'$ of $H$, and $\left[T_{H'} \cong T\right]$
    is equal to $1$ if $T_{H'} \cong T$ and $0$ otherwise.
\end{definition}

\begin{remark}
    Observe that this alternating enumerator is very similar to the alternating enumerator
    of counting graph properties that was studied in~\cite{topo,alge,DMW24,DBLP:conf/soda/DoringMW25,CN24}.
\end{remark}

By combining the results of this section, we obtain
our main technical result.

\sevenfifteenfour
\begin{proof}
    By  \cref{lem:tour:colindsub} and \cref{lem:colindsub:colsub}, we obtain
    \begin{align*}
        \CFsubs{T}{\tourG{G}{T}} &= \sum_{H' \in \graphs{k}} \left[T_{H'} \cong T\right] \cdot \CPindsubs{H'}{G} \\
        &= \sum_{H' \in \graphs{k}} \left[T_{H'} \cong T\right] \sum_{H' \subseteq H} {(-1)}^{|E(H)| - |E(H')|} \cdot \CPsubs{H}{G}
    \end{align*}
    where the last sum ranges over all edge-supergraphs $H$ of $H'$.
    For a fixed $H \in \graphs{k}$ observe that each edge-subgraph $H' \subseteq H$
    contributes to $\CPsubs{H}{G}$ with a factor of $\left[T_{H'} \cong T\right] \cdot {(-1)}^{|E(H)| - |E(H')|}$. By \cref{def:alternating:enumerator}, we obtain
    \[\CFsubs{T}{\tourG{G}{T}} = \sum_{H \in \graphs{k}} \ae{T}{H} \cdot \CPsubs{H}{G}.
        \qedhere
    \]
\end{proof}

\subsubsection*{Step 1c: Understanding the Complexity of Linear Combinations of $\CPsubProb$-counts}
\label{sec:complex:monotone}

In this section, we show that the $\CPsubProb$-basis has a very useful
property that is known as
\emph{complexity monotonicity}.
This allows us to extract single terms from a linear combinations
of $\CPsubProb$-counts.

\sevenfifteensix
\begin{proof}
    Let $G$ be a $k$-colored input graph with $n$ vertices with coloring $c \colon V(G) \to \setn{k}$.
    For any $k$-labeled graph $F$, let $G_F$ be the $k$-colored edge-subgraph of
    $G$ that we obtain by deleting all edges $\{u, v\} \in E(G)$ with
    $\{c(u), c(v)\} \notin E(F)$. The coloring of $G_F$ is given by $c$.

    \begin{claim}\label{claim:colsub:monotone}
        Let $F$ be any $k$-labeled graph, then
        \[f(G_F) = \sum_{\substack{i = 1 \\ H_i \subseteq F}}^m \alpha_i \cdot \CPsubs{H_i}{G},\]
        where the sum ranges over graphs $H_i$ that are edge-subgraphs of $F$.
    \end{claim}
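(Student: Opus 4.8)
The plan is to expand $f(G_F)=\sum_{i=1}^m\alpha_i\cdot\CPsubs{H_i}{G_F}$ by the definition of $f$, and then to evaluate $\CPsubs{H_i}{G_F}$ separately for each $i\in\setn{m}$. The one structural fact I would isolate first is that forming $G_F$ from $G$ keeps precisely those edges of $G$ whose endpoint-colors form an edge of $F$; that is, $\{u,v\}\in E(G_F)$ if and only if $\{u,v\}\in E(G)$ and $\{c(u),c(v)\}\in E(F)$.

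Next I would use the standard combinatorial reading of a color-prescribed copy: a subgraph of a $k$-colored graph $G'$ that is isomorphic to the $k$-labeled graph $H_i$ \emph{via the coloring} is the same data as a choice of one vertex $v_j\in c^{-1}(j)$ for every color $j\in\setn{k}$ such that $\{v_a,v_b\}\in E(G')$ for all $\{a,b\}\in E(H_i)$; the corresponding copy then has vertex set $\{v_1,\dots,v_k\}$ and edge set $\{\{v_a,v_b\}:\{a,b\}\in E(H_i)\}$, and is uniquely determined by the selection. Hence $\CPsubs{H_i}{G'}$ counts exactly these ``valid'' colorful selections.

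Then I would split on whether $H_i$ is an edge-subgraph of $F$. If $E(H_i)\subseteq E(F)$, then for any colorful selection $(v_j)_{j\in\setn{k}}$ and any $\{a,b\}\in E(H_i)$ we have $\{c(v_a),c(v_b)\}=\{a,b\}\in E(F)$, so the requirement $\{v_a,v_b\}\in E(G_F)$ is equivalent to $\{v_a,v_b\}\in E(G)$; thus the valid selections for $H_i$ in $G_F$ and in $G$ coincide, giving $\CPsubs{H_i}{G_F}=\CPsubs{H_i}{G}$. If instead $E(H_i)\not\subseteq E(F)$, fix an edge $\{a,b\}\in E(H_i)\setminus E(F)$; for every colorful selection the pair $\{v_a,v_b\}$ has color-pair $\{a,b\}\notin E(F)$ and is therefore absent in $G_F$, so no valid selection exists and $\CPsubs{H_i}{G_F}=0$. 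Summing over $i\in\setn{m}$ yields $f(G_F)=\sum_{i\,:\,H_i\subseteq F}\alpha_i\cdot\CPsubs{H_i}{G}$, which is the claim.

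I do not anticipate a genuine obstacle here; the only thing requiring care is the bookkeeping that deleting edges to form $G_F$ never removes an edge needed by a color-prescribed copy of an $H_i$ with $H_i\subseteq F$ (it only removes edges whose color-pair lies outside $E(F)$, none of which are required by such a copy), while a single missing color-pair already destroys every color-prescribed copy of an $H_i$ with $H_i\not\subseteq F$.
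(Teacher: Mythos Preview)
Your proposal is correct and follows essentially the same approach as the paper: both reduce the claim to the identity $\CPsubs{H_i}{G_F}=\CPsubs{H_i}{G}$ when $H_i\subseteq F$ and $\CPsubs{H_i}{G_F}=0$ otherwise, arguing via the color-pairs that survive in $G_F$. The only cosmetic difference is that the paper phrases the $H_i\subseteq F$ case as two inequalities (one from $G_F\subseteq G$, the other by checking that every required edge survives), whereas you give a direct bijection via colorful selections; the content is the same.
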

    \begin{claimproof}
        Fix a graph  $F'$.
        The claim directly follows from
        \[\CPsubs{F'}{G_F} =
        \begin{cases}
            \CPsubs{F'}{G} & \text{if $ F' \subseteq F$},\\
            0            & \text{otherwise}.
          \end{cases}
        \]
        First, if $F'$ is not an edge-subgraph of $F$ then $F'$ contains an edge $\{u', v'\} \notin E(F)$.
        Note that all edges $\{u, v\}$ in $G$ with $\{c(u) = u', c(v) = v'\}$ were deleted in $G_F$,
        meaning that there are no subgraphs in $G_F$ that are isomorphic to $F'$ and color prescribed.

        Second, let $F'$ be an edge-subgraph of $F$. Observe that
        $\CPsubs{F'}{G_F} \leq \CPsubs{F'}{G}$ holds since $G_F$ is a subgraph of $G$.
        Let $\edgesub{\indGraph{G}{A}}{S} \cong F'$ be a color respecting subgraph of $G$ with
        $A \subseteq V(G)$, $S \subseteq E(G) \cap A^2$.
        Then $S$ is also a subset of $E(G_F) \cap A^2$, as $\{u, v\} \in S$ implies
        $\{c(u), c(v)\} \in E(F)$ since $c$ defines an isomorphism
        from $\edgesub{\indGraph{G}{A}}{S}$ to $F$. Thus, $\edgesub{\indGraph{G}{A}}{S}$
        is also a color respecting subgraph of $G_F$
        which shows $\CPsubs{F'}{G_F} \geq \CPsubs{F'}{G}$. The claim follows.
    \end{claimproof}
    For every $i \in \setn{m}$, we write $\alpha_{H_i}$ instead of $\alpha_{i}$. We
    further write $H \coloneqq H_j$ and show that
    \begin{align}\label{eq:lin:comb:cpsub}
        \alpha_{H} \cdot \CPsubs{H}{G} = \sum_{F \subseteq H} {(-1)}^{|E(F)| - |E(H)|} f(G_F).
    \end{align}
    By \cref{claim:colsub:monotone}, we obtain
    \[\sum_{F \subseteq H} {(-1)}^{|E(F)| - |E(H)|} f(G_F) = \sum_{F \subseteq H} {(-1)}^{|E(F)| - |E(H)|}
    \sum_{\substack{i = 1 \\ H_i \subseteq F}}^m \alpha_{H_i} \cdot \CPsubs{H_i}{G} =
    \sum_{H_i \subseteq H} \beta_{H_i} \cdot \CPsubs{H_i}{G}, \]
    where we collected the coefficients for each $\CPsubs{H_i}{G}$-term
    to obtain the values $\beta_{H_i}$. In the following, we prove
    \cref{eq:lin:comb:cpsub} by showing that $\beta_H = \alpha_H$ and $\beta_{H_i} = 0$
    for all other graphs. Observe that
    \[\beta_{H_i} = \sum_{H_i \subseteq F \subseteq H} {(-1)}^{|E(F)| - |E(H)|}
        \cdot \alpha_{H_i} = \alpha_{H_i} \cdot {(-1)}^{|E(H)|} \sum_{H_i \subseteq F
    \subseteq H} {(-1)}^{|E(F)|},\]
    where the sum ranges over all graphs $F$ such that $H_i$ is an edge-subgraph of $F$
    and $F$ is an edge-subgraph of $H$. We can now continue with
    \[\sum_{H_i \subseteq F \subseteq H} {(-1)}^{|E(F)|} = \sum_{S \subseteq E(H) \setminus E(H_i)}
        {(-1)}^{|S \uplus E(H_i)|} = {(-1)}^{|E(H_i)|} \cdot
    \sum_{S \subseteq E(H) \setminus E(H_i)} {(-1)}^{|S|},\]
    which is an alternating sum over all subsets of $E(H) \setminus E(H_i)$.
    This sum is zero unless $E(H) \setminus E(H_i) = \emptyset$
    which is equivalent to $H_i \equiv H$.
    Thus, $\beta_H = \alpha_H$ and $\beta_{H_i} = 0$
    for all other graphs.

    By~\cref{eq:lin:comb:cpsub}, we compute
    $\alpha_{H} \cdot \CPsubs{H}{G}$ by calling $f(\star)$ at most
    $2^{|E(H)|} \leq 2^{\binom{k}{2}}$ times on graphs of order $n$ that are all computed
    in time $O(n^2)$.
\end{proof}

\begin{remark}
    Observe that \cref{lem:colsub:monotone} assumes that all graphs $H_i$ have exactly $k$ vertices.
    This is necessary since the input consists of a $k$-colored
    graph $G$ and $\CPsubs{H_i}{G}$ is only well-defined if $H_i$ is $k$-labeled.
    In~\cite[Lemma A.3]{CN24_arxiv}, the authors showed a similar statement for
    the $\subProb$-basis. Again, they assumed that
    all graphs of the linear combination have the same number of vertices.
    This is necessary since otherwise we can obtain linear combination of
    $\subProb$-counts that contain \emph{hard}-terms but whose linear combination is
    easy to compute (see~\cite[Example 1.12]{DBLP:conf/stoc/CurticapeanDM17}).
\end{remark}

Using \cref{lem:colsub:monotone}, we directly obtain a reduction from
$\CPsubProb(\{H\})$ to $\CFsubProb(\{T\})$ whenever $\ae{T}{H}$ is
non-vanishing.

\begin{theorem}[{$\CFsubProb(\{T\})$ is harder than $\CPsubProb(\{H\})$
    for non-vanishing alternating enumerator $\ae{T}{H} \neq 0$}]\dglabel{cor:ae:reduction}[theo:tour:ae,lem:colsub:monotone]
    Let $T$ be a fixed $k$-labeled tournament and $H$ be a $k$-labeled graph with $\ae{T}{H} \neq 0$.
    Further, let $\mathcal{T} = \{T_1, T_2, \dots \}$ be a r.e.\ set of
    tournaments and $\mathcal{H} = \{H_1, H_2, \dots\}$ be a r.e.\ set of
    graphs with $\ae{T_i}{H_i} \neq 0$ for every $i\in \mathbb{N}$.

    Assume that there is an algorithm that
    computes $\CFsubProb(\{T\})$ for any $k$-colored tournament of order $n$
    in time $O(f(n))$. Then
    there exists an algorithm that computes
    $\CPsubProb(\{H\})$ for any $k$-colored graph of order $n$
    in time $O(g(k) \cdot f(n))$ for some computable function $g(k)$.
    In particular, $\Exp{\CFsubProb(\{T\})} \geq \Exp{\CPsubProb(\{H\})}$
    and $\CPsubProb(\mathcal{H}) \fpt \CFsubProb(\mathcal{T})$.
\end{theorem}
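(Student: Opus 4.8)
The plan is to compose \cref{theo:tour:ae} with the complexity-monotonicity statement \cref{lem:colsub:monotone}, exactly along the chain advertised before the theorem. Fix the $k$-labeled tournament $T$ and the $k$-labeled graph $H$ with $\ae{T}{H}\neq 0$, and let $H_1,\dots,H_m$ enumerate all $H'\in\graphs{k}$ with $\ae{T}{H'}\neq 0$, setting $\alpha_i\coloneqq\ae{T}{H_i}$. Both $m$ and all coefficients $\alpha_i$ depend only on $T$ (hence only on $k$) and are computable by brute force over the $2^{\binom{k}{2}}$ edge-subgraphs appearing in \cref{def:alternating:enumerator}. By \cref{theo:tour:ae}, for every $k$-colored graph $G$ of order $n$ we have $\sum_{i=1}^{m}\alpha_i\cdot\CPsubs{H_i}{G}=\CFsubs{T}{\tourG{G}{T}}$; since $\tourG{G}{T}$ is constructible in time $O(n^2)$ and the assumed algorithm for $\CFsubProb(\{T\})$ runs in time $O(f(n))$ on $k$-colored tournaments of order $n$, this exhibits an algorithm that evaluates the linear combination $f(G)\coloneqq\sum_{i=1}^{m}\alpha_i\cdot\CPsubs{H_i}{G}$ in time $O(f(n))$, where we may assume $f(n)=\Omega(n^2)$ because the oracle reads its whole input.

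Next I would invoke \cref{lem:colsub:monotone} on this linear combination together with the index $j$ for which $H_j=H$ (which exists precisely because $\ae{T}{H}\neq 0$). This yields an algorithm computing $\CPsubProb(\{H\})$ that makes at most $h(k)$ calls to $f(\star)$, each on a $k$-colored graph $G^\ast$ of order at most $n$ that is itself constructible in time $O(n^2)$. Replacing each such call by ``build $\tourG{G^\ast}{T}$, then query the $\CFsubProb(\{T\})$-oracle once'' costs $O(f(n))$ per call, so the total running time is $O\bigl(h(k)\cdot f(n)+h(k)\cdot n^2\bigr)=O(g(k)\cdot f(n))$ for a suitable computable $g$, which is the claimed algorithm. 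For the complexity-exponent consequence I would fix any $\beta>\Exp{\CFsubProb(\{T\})}$, instantiate $f(n)=n^\beta$ above, and treat $k$ (hence $g(k)$) as a constant, obtaining an $O(n^\beta)$-time algorithm for $\CPsubProb(\{H\})$; letting $\beta$ tend to $\Exp{\CFsubProb(\{T\})}$ gives $\Exp{\CFsubProb(\{T\})}\ge\Exp{\CPsubProb(\{H\})}$.

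For the parameterized reduction $\CPsubProb(\mathcal{H})\fpt\CFsubProb(\mathcal{T})$, on input $(H_i,G)$ with $H_i\in\mathcal{H}$ I would enumerate the two r.e.\ sets $\mathcal{H}=\{H_1,H_2,\dots\}$ and $\mathcal{T}=\{T_1,T_2,\dots\}$ in lockstep until $H_i$ is produced, thereby recovering the paired tournament $T_i$; since $\ae{T_i}{H_i}\neq 0$ forces $T_i$ and $H_i$ to share the same vertex set, $|V(T_i)|=|V(H_i)|=k$, so every oracle query made below has parameter $k$ and is thus a legal $\CFsubProb(\mathcal{T})$-query. One then runs the single-pattern reduction of the previous two paragraphs with $T\coloneqq T_i$ and $H\coloneqq H_i$. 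I do not expect any single step to be a real obstacle; the genuinely delicate point is the bookkeeping of the composition: one must use that \cref{lem:colsub:monotone} keeps all auxiliary graphs of order at most $n$, so that no polynomial factor accumulates across the $h(k)$ oracle calls, that the ``reads the whole input'' convention makes $f(n)$ dominate the $O(n^2)$ construction costs, and — for the $\fpt$ statement — that the cost of the lockstep enumeration is a computable function of the input and is absorbed into the FPT running time, exactly as in the standard treatments of r.e.\ pattern classes.
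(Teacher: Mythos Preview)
Your proposal is correct and follows essentially the same approach as the paper: use \cref{theo:tour:ae} to express $\CFsubs{T}{\tourG{G}{T}}$ as the linear combination $\sum_{H'}\ae{T}{H'}\cdot\CPsubs{H'}{G}$, then invoke \cref{lem:colsub:monotone} to extract the single term $\CPsubs{H}{G}$, noting that each oracle call is on a graph of order at most $n$ and hence costs $O(f(n))$. Your treatment is in fact slightly more explicit than the paper's in spelling out that each call to $f(\star)$ is realized by first building $\tourG{G^\ast}{T}$ and then querying the $\CFsubProb(\{T\})$-oracle, and in handling the complexity-exponent inequality; the paper's version of the $\fpt$ argument searches $\mathcal{T}$ for any $T$ with $\ae{T}{H}\neq 0$ rather than recovering the paired $T_i$ by index, but both variants rely on the same r.e.-enumeration bookkeeping you flag at the end.
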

\begin{proof}
    Let $G$ be a $k$-colored graph of order $n$.
    By~\cref{theo:tour:ae}, we rewrite
    $\CFsubProb(\{T\})$ as a linear combination of colored subgraph counts
    that has at most $m = 2^{\binom{k}{2}}$ many terms. Note that we can compute all coefficients
    of this linear combination in time $O(g'(k))$ for some computable function $g'$.
    Since the coefficient of $H$ is non-vanishing, we use \cref{lem:colsub:monotone}
    to extract $\CPsubs{H}{G}$ by calling $\CFsubProb(\{T\})$ at most
    $h(k)$ many times on graphs of order at most $n$ that are all computable in time $O(n^2)$.
    Thus, $\CPsubProb(\{H\})$ can be computed in time $O(g(k) \cdot f(n))$
    for some computable function $g$.

    For a recursively enumerable set of tournaments $\mathcal{T}$,
    we use the above construction to obtain a parameterized Turing reduction
    from $\CPsubProb(\mathcal{H})$ to $\CFsubProb(\mathcal{T})$.
    Observe that, given an input $(H, G)$, we first find
    a graph $T \in \mathcal{T}$ with $\ae{T}{H} \neq 0$ in time
    $O(g''(|V(H)|))$ for some computable function $g''$. Note that
    $|V(T)| = |V(H)|$. Thus, the size of the parameter does not change
    in the following. Next,
    we use the above construction to compute $\CPsubs{H}{G}$
    by querying $\CFsubProb(\mathcal{T})$ at most $h(|V(T)|)$ times
    on inputs $(T, G')$ with $|V(G')| \leq |V(G)|$.
    All these computations take time $O(h(k) \cdot |V(G)|^2)$
    for some computable function $h$. This shows that there is a parameterized
    Turing reduction from $\CPsubProb(\mathcal{H})$ to $\CFsubProb(\mathcal{T})$.
\end{proof}

\subsubsection*{Step 1d: Analyzing the Alternating Enumerator of Anti-matchings}

By \cref{cor:ae:reduction}, for any tournament $T$ of order $k$, we obtain a reduction from $\CPsubProb(\{\cmatch{k}\})$
to $\CFsubProb(\{T\})$ if we show that the alternating
enumerator of $\cmatch{k}$ is always non-vanishing.
To this end, we first rewrite $\ae{T}{\cmatch{k}}$ using permutations:
given a $k$-labeled tournament $T$
and a permutation $\sigma \in \sym{k}$, we write
$\Tperm{T}{\sigma}$ for the tournament that we obtain by applying $\sigma$ to $T$.
Formally, $V(\Tperm{T}{\sigma}) \coloneqq V(T)$ and
$\DE{\Tperm{T}{\sigma}} \coloneqq \{(\sigma(u), \sigma(v)) : (u, v) \in \DE{T}\}$.
Now, we use $\Tperm{T}{\sigma}$ to replace the isomorphism test inside
the alternating enumerator with an equality test.

\begin{lemma}[Alternating enumerator via permutations]\dglabel{lem:ae_reformulation}[def:alternating:enumerator]
    Let $H$ be a $k$-labeled, then for any $k$-labeled tournament $T$, we have
    \[ |\auts{T}| \cdot \ae{T}{H} = {(-1)}^{|E(H)|} \sum_{H' \subseteq H} \sum_{\sigma \in \sym{k}}
    {(-1)}^{|E(H')|} \left[T_{H'} \equiv\Tperm{T}{\sigma}\right], \]
    where the sums ranges over all edge-subgraphs $H'$ of $H$ and permutations in $\sym{k}$.
    We define $\left[T_{H'} \equiv T\right]$ to be $1$ if $T_{H'} \equiv T$ and $0$ otherwise.
\end{lemma}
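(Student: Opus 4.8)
The plan is to replace the isomorphism indicator $[T_{H'} \cong T]$ appearing in \cref{def:alternating:enumerator} by an averaged sum over all relabelings of $T$, using the standard fact that the number of isomorphisms between two isomorphic tournaments equals $|\auts{T}|$.

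First I would fix an edge-subgraph $H' \subseteq H$ and note that, since $H$ is $k$-labeled, $H'$ has vertex set $\setn{k}$, hence $T_{H'}$ is again a $k$-labeled tournament. By the definition $\DE{\Tperm{T}{\sigma}} = \{(\sigma(u),\sigma(v)) : (u,v)\in\DE{T}\}$, a permutation $\sigma \in \sym{k}$ satisfies $T_{H'} \equiv \Tperm{T}{\sigma}$ exactly when $\{(\sigma(u),\sigma(v)) : (u,v)\in\DE{T}\} = \DE{T_{H'}}$, i.e.\ exactly when $\sigma$ is a tournament isomorphism from $T$ onto $T_{H'}$. Therefore $\sum_{\sigma \in \sym{k}} [T_{H'} \equiv \Tperm{T}{\sigma}]$ equals the number of isomorphisms $T \to T_{H'}$.

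Next comes the only step needing a little care, the counting of these isomorphisms. If $T_{H'} \not\cong T$, there are none, so the count is $0 = |\auts{T}| \cdot [T_{H'} \cong T]$. If $T_{H'} \cong T$, fix one isomorphism $\phi_0 \colon T \to T_{H'}$; then $\alpha \mapsto \phi_0 \circ \alpha$ is a bijection from $\auts{T}$ onto the set of all isomorphisms $T \to T_{H'}$, so the count is $|\auts{T}| = |\auts{T}| \cdot [T_{H'} \cong T]$. In both cases we obtain the identity $\sum_{\sigma \in \sym{k}} [T_{H'} \equiv \Tperm{T}{\sigma}] = |\auts{T}| \cdot [T_{H'} \cong T]$.

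Finally I would substitute this identity into the right-hand side of the claimed equation, so that ${(-1)}^{|E(H)|} \sum_{H' \subseteq H} \sum_{\sigma \in \sym{k}} {(-1)}^{|E(H')|} [T_{H'} \equiv \Tperm{T}{\sigma}]$ becomes ${(-1)}^{|E(H)|} \sum_{H' \subseteq H} {(-1)}^{|E(H')|} \, |\auts{T}| \cdot [T_{H'} \cong T]$. Pulling the constant $|\auts{T}|$ out of the sum and comparing with \cref{def:alternating:enumerator} gives exactly $|\auts{T}| \cdot \ae{T}{H}$, which is the claim. There is no real obstacle; the proof is a direct unfolding of definitions together with the coset argument for counting isomorphisms, and the interchange of the two finite sums is trivially valid.
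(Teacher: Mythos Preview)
Your proof is correct and follows essentially the same approach as the paper: both establish the identity $\sum_{\sigma \in \sym{k}} [T_{H'} \equiv \Tperm{T}{\sigma}] = |\auts{T}| \cdot [T_{H'} \cong T]$ via the standard coset/orbit argument for counting isomorphisms, then substitute it into the definition of the alternating enumerator. Your write-up is slightly more explicit about the bijection $\alpha \mapsto \phi_0 \circ \alpha$, but there is no substantive difference.
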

\begin{proof}
    By \cref{def:alternating:enumerator} of the alternating enumerator, we have
    \[ \ae{T}{H} \coloneqq {(-1)}^{|E(H)|} \sum_{H' \subseteq H} {(-1)}^{|E(H')|}
    \left[T_{H'} \cong T\right]. \]
    Fix an $H' \subseteq H$. Observe that the tournaments $T_{H'}$ and $T$ are isomorphic if and only if
    there exists a permutation $\sigma \in \sym{k}$---this accounts for the isomorphism between $T_{H'}$
    and $T$. Therefore, $T_{H'}$ and $T$ are isomorphic if and only if there exists a permutation
    $\sigma \in \sym{k}$ such that $\left[ T_{H'} \equiv T_\sigma \right]$. Moreover, there are
    exactly $|\auts{T}|$ such permutations, as two distinct isomorphisms from $T_{H'}$ to $T$ differs
    by an automorphism of $T$.
    We thus obtain
    \[|\auts{T}| \cdot \left[ T_{H'} \cong T \right] =  \sum_{\sigma \in \sym{k}} \left[
    T_{H'} \equiv \Tperm{T}{\sigma} \right]. \]

    The result then follows by replacing the term $\left[ T_{H'} \cong T \right]$ in the definition
    of the alternating enumerator and rearranging the sums.
\end{proof}

We continue with simplifying the sum of \cref{lem:ae_reformulation} by reducing the number of terms.
Currently, for each $H' \subseteq H$, we compute the sum
$\sum_{\sigma } \left[T_{H'} \equiv\Tperm{T}{\sigma}\right]$. Note however
that there is at most one permutation $\sigma \in \sym{k}$ with $T_{H'} \equiv\Tperm{T}{\sigma}$,
meaning that we can replace this sum by checking if there exists a permutation $\sigma$
with  $T_{H'} \equiv\Tperm{T}{\sigma}$. To this end, we introduce the
\emph{symmetric difference} of tournaments.
Given two tournaments $T$ and $T'$
on the same vertex set, we write
$\flip{T}{T'}$ for the set of undirected edges $\{u, v\}$ on which $T$ and $T'$
disagree (that is, $(u, v) \in \DE{T}$ and $(u, v) \notin \DE{T'}$ (or vice versa)).

\begin{lemma}[{Alternating enumerator via symmetric difference}]\dglabel{lem:ae:complement}[lem:ae_reformulation]
    For all tournaments $T$ and graphs $H$ of order $k$, we have
    \[|\auts{T}| \cdot \ae{T}{\overline{H}} = {(-1)}^{|E(H)|} \cdot
        \sum_{\substack{\sigma \in \sym{k} \\ E(H) \subseteq \flip{T}{\Tperm{T}{\sigma}}}}
    {(-1)}^{| \flip{T}{\Tperm{T}{\sigma}} |}. \]
\end{lemma}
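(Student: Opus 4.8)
The plan is to derive this from \cref{lem:ae_reformulation} applied to the complement graph $\overline{H}$, followed by swapping the order of the two sums and collapsing the inner sum over edge-subgraphs to a single term per permutation. Applying \cref{lem:ae_reformulation} with $\overline{H}$ in place of $H$ yields
\[
    |\auts{T}| \cdot \ae{T}{\overline{H}} = {(-1)}^{|E(\overline{H})|} \sum_{\sigma \in \sym{k}} \sum_{H' \subseteq \overline{H}} {(-1)}^{|E(H')|} \left[T_{H'} \equiv \Tperm{T}{\sigma}\right],
\]
where $H'$ ranges over all edge-subgraphs of $\overline{H}$, that is, over all graphs on $V(T)$ with $E(H') \cap E(H) = \emptyset$.

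The crucial observation is that, by \cref{def:flip:graph}, the tournament $T_{H'}$ agrees with $T$ exactly on the pairs in $E(H')$ and disagrees with $T$ on every other pair, so $\flip{T}{T_{H'}} = \binom{V(T)}{2} \setminus E(H')$. Since two tournaments on the vertex set $V(T)$ are equal precisely when their symmetric differences with $T$ coincide, we get $T_{H'} \equiv \Tperm{T}{\sigma}$ if and only if $E(H') = \binom{V(T)}{2} \setminus \flip{T}{\Tperm{T}{\sigma}}$. Thus for each fixed $\sigma$ there is exactly one candidate graph $H'_\sigma$, and it contributes to the restricted inner sum only when it is an edge-subgraph of $\overline{H}$, that is, when $E(H'_\sigma) \cap E(H) = \emptyset$, which is equivalent to $E(H) \subseteq \flip{T}{\Tperm{T}{\sigma}}$. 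Using $|E(H'_\sigma)| = \binom{k}{2} - |\flip{T}{\Tperm{T}{\sigma}}|$, the double sum collapses to
\[
    |\auts{T}| \cdot \ae{T}{\overline{H}} = {(-1)}^{|E(\overline{H})|} \sum_{\substack{\sigma \in \sym{k} \\ E(H) \subseteq \flip{T}{\Tperm{T}{\sigma}}}} {(-1)}^{\binom{k}{2} - |\flip{T}{\Tperm{T}{\sigma}}|}.
\]

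Finally I would tidy up the signs: substituting $|E(\overline{H})| = \binom{k}{2} - |E(H)|$ and using ${(-1)}^{-m} = {(-1)}^{m}$, the two occurrences of ${(-1)}^{\binom{k}{2}}$ cancel, leaving precisely ${(-1)}^{|E(H)|} \sum_{\sigma} {(-1)}^{|\flip{T}{\Tperm{T}{\sigma}}|}$ over the index set $\{\sigma \in \sym{k} : E(H) \subseteq \flip{T}{\Tperm{T}{\sigma}}\}$, which is the claimed identity. No step here is genuinely hard; the only place that needs attention is keeping the various complementations straight---the passage from $E(H')$ to $\binom{V(T)}{2} \setminus E(H')$, from $H$ to $\overline{H}$, and the attendant $\binom{k}{2}$-parity shifts---so the ``obstacle'', such as it is, is purely bookkeeping.
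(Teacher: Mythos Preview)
Your proposal is correct and follows essentially the same argument as the paper's own proof: apply \cref{lem:ae_reformulation} to $\overline{H}$, swap the sums, observe that for each $\sigma$ there is a unique candidate $H'_\sigma$ with $E(H'_\sigma)=\binom{[k]}{2}\setminus\flip{T}{\Tperm{T}{\sigma}}$, check that it lies under $\overline{H}$ precisely when $E(H)\subseteq\flip{T}{\Tperm{T}{\sigma}}$, and cancel the two $(-1)^{\binom{k}{2}}$ factors. The only cosmetic difference is that the paper packages the inner sum as a quantity $x_\sigma$ before evaluating it.
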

\begin{proof}
    \Cref{lem:ae_reformulation} yields
    \begin{align}\label{eq:x:sum1}
        |\auts{T}| \cdot \ae{T}{\overline{H}}  =
        {(-1)}^{|E(\overline{H})|}
        \sum_{H' \subseteq \overline{H}} \sum_{\sigma \in \sym{k}}
        {(-1)}^{|E(H')|} \left[T_{H'} \equiv T_\sigma\right].
    \end{align}
    Next, we define for all permutations $\sigma \in \sym{k}$
    the value $x_\sigma \coloneqq \sum_{H' \subseteq \overline{H}}
    {(-1)}^{|E(H')|} \left[T_{H'} \equiv T_\sigma\right] $.
    By swapping  the order of summation in \cref{eq:x:sum1}, we obtain
    \begin{align}\label{eq:x:sum}
        |\auts{T}| \cdot \ae{T}{\overline{H}}  =
        {(-1)}^{|E(\overline{H})|}
        \sum_{\sigma \in \sym{k}}\sum_{H' \subseteq \overline{H}}
        {(-1)}^{|E(H')|} \left[T_{H'} \equiv T_\sigma\right]
        = {(-1)}^{|E(\overline{H})|} \cdot \sum_{\sigma \in \sym{k}} x_\sigma.
    \end{align}
    Next, we show
    \begin{align}\label{eq:x:sigma}
        x_\sigma = \begin{cases}
            {(-1)}^{\binom{k}{2}} \cdot {(-1)}^{|\flip{T}{T^\sigma}|} &\text{ if $E(H) \subseteq \flip{T}{T^\sigma}$} \\
            0 &\text{otherwise}.
        \end{cases}
    \end{align}
    To this end, observe that $[T_{H'} \equiv T^\sigma] = 1$ if and only if
    the edges on which $T$ and $T^\sigma$ disagree are exactly the missing edges of $H'$.
    In other words, $E(K_k) \setminus E(H') = \flip{T}{T^\sigma}$, meaning
    that there is a unique $H'$ with $[T_{H'} \equiv T^\sigma] = 1$. However,
    this specific $H'$ only appears in the sum of $x_\sigma$ if $H'$
    is an edge-subgraph of $\overline{H}$. This condition is equivalent to
    \[E(H') = E(K_k) \setminus \flip{T}{T^\sigma} \subseteq E(\overline{H}) = E(K_k) \setminus E(H),\]
    which is equivalent to $E(H) \subseteq \flip{T}{T^\sigma}$. So, if this condition is violated
    then $x_\sigma = 0$. Otherwise, we obtain
    \[x_\sigma = {(-1)}^{|E(H')|} = {(-1)}^{|E(K_k) \setminus \flip{T}{T^\sigma}|} = {(-1)}^{\binom{k}{2}} \cdot {(-1)}^{|\flip{T}{T^\sigma}|}, \]
    which proves \cref{eq:x:sigma}. By plugging this into \cref{eq:x:sum}, we obtain
    \[ |\auts{T}| \cdot \ae{T}{\overline{H}}  =  {(-1)}^{\binom{k}{2}} \cdot {(-1)}^{|E(H)|}
        \sum_{\substack{\sigma \in \sym{k} \\ E(H) \subseteq \flip{T}{\Tperm{T}{\sigma}}}}
        {(-1)}^{\binom{k}{2}} \cdot  {(-1)}^{|\flip{T}{\Tperm{T}{\sigma}}|} = {(-1)}^{|E(H)|} \cdot
        \sum_{\substack{\sigma \in \sym{k} \\ E(H) \subseteq \flip{T}{\Tperm{T}{\sigma}}}}
    {(-1)}^{|\flip{T}{\Tperm{T}{\sigma}}|}.
    \qedhere
\]
\end{proof}

\begin{remark}
    Observe that our results so far did not need that \(T\) is a tournament---indeed,
    everything up to this point may be used to analyze $\ae{T}{H}$ for a general graph \(T\).
\end{remark}

Next, we compute $\ae{T}{\cmatch{k}}$.
By \cref{lem:ae:complement}, we have to find the permutations $\sigma$
with $E(\match{k}) \subseteq \flip{T}{\Tperm{T}{\sigma}}$.

\begin{definition}[{Ordered maximal matchings $\matchOrder{k}$, unordered maximal matchings $\matchSet{k}$}]\dglabel{def:ordered:matching}
    For every integer $k$, we define the following.
    \begin{enumerate}
        \item We write $\matchOrder{k}$ for the set of all ordered tuples of $\lfloor k/2 \rfloor$
            edges in $E(K_k)$ that together form a matching.

        \item For $M \in \matchOrder{k}$ and $\sigma \in \sym{\lfloor k/2 \rfloor}$,
            we write $\Mperm{M}{\sigma}$ for the ordered tuples that is obtained by permuting
            the $\lfloor k/2 \rfloor$ edges of $M$ according to $\sigma$.

        \item For two $M, M' \in \matchOrder{k}$, we write $M \sim M'$ if their underlying edges
            are the same. This is equivalent to the existence of a	$\sigma \in \sym{\lfloor k/2 \rfloor}$
            with $\Mperm{M}{\sigma} = M'$. Note that $\sim$ is an equivalence relation on $\matchOrder{k}$.

        \item We write $\matchSet{k}$ for a set of representatives under $\sim$. Note that
            $\matchOrder{k} = \{\Mperm{M}{\sigma} : M \in \matchSet{k}, \sigma \in \sym{\lfloor k/2 \rfloor}  \}$.
            \qedhere
    \end{enumerate}
\end{definition}

Observe that elements of $\matchOrder{k}$ are all maximal matchings on $K_k$
whose edges are ordered. The elements of $\matchSet{k}$ are
\emph{unordered maximal matchings on $K_k$}.

\begin{lemma}[Cardinality of matching set $\matchSet{k}$]\dglabel{lem:card:matching}[def:ordered:matching]
    For all $k \geq 2$, the cardinality of $\matchSet{k}$ is odd.
\end{lemma}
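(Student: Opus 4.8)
The plan is to identify $\matchSet{k}$ with the set of maximum matchings of the complete graph $K_k$ and then count these by a short induction on $k$. First I would note that, by \cref{def:ordered:matching}, an element of $\matchOrder{k}$ is nothing but an ordering of the $\lfloor k/2 \rfloor$ edges of a matching of $K_k$ that saturates $2\lfloor k/2\rfloor$ vertices; such a matching is perfect when $k$ is even and near-perfect (missing exactly one vertex) when $k$ is odd. Since $\matchSet{k}$ is a set of representatives of $\matchOrder{k}$ under the relation of having the same underlying edge set, the quantity $m_k \coloneqq |\matchSet{k}|$ equals precisely the number of maximum matchings of $K_k$; this reduces the claim to a statement about an ordinary combinatorial object.

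Next I would set up two recurrences by examining the vertex $1 \in \setn{k}$. For even $k$, the matching edge incident to vertex $1$ may go to any of the $k-1$ other vertices, and removing it leaves a perfect matching of a $K_{k-2}$; this yields $m_k = (k-1)\cdot m_{k-2}$ with base case $m_2 = 1$, so $m_k = (k-1)(k-3)\cdots 3 \cdot 1$ is a product of odd integers and hence odd. For odd $k \geq 3$, vertex $1$ is either the unique unsaturated vertex, in which case the remaining vertices carry a perfect matching of $K_{k-1}$ (contributing $m_{k-1}$), or it is matched to one of the $k-1$ other vertices, after which the rest is a maximum matching of $K_{k-2}$ (contributing $(k-1)\cdot m_{k-2}$); hence $m_k = m_{k-1} + (k-1)\cdot m_{k-2}$. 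As $k-1$ is even, the second summand is even, whereas $m_{k-1}$ is odd by the already-treated even case; therefore $m_k$ is odd.

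I do not expect a real obstacle: the only points that need a little care are the clean identification of $\matchSet{k}$ with the maximum matchings of $K_k$ (so that one is counting a basis-free object), and verifying that the edge-deletion maps underlying the two recurrences are genuine bijections rather than over- or under-counts. If a non-inductive argument is preferred, one can instead invoke the closed form $m_k = k!\,/\,\bigl(2^{\lfloor k/2\rfloor}\,\lfloor k/2\rfloor!\bigr)$ and compute its $2$-adic valuation by Legendre's formula; a direct calculation shows this valuation is $0$ for every $k$, so $m_k$ is odd. The inductive route above is shorter and fully self-contained.
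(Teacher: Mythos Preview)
Your proposal is correct and takes essentially the same approach as the paper: identify $\matchSet{k}$ with the set of maximum matchings of $K_k$ and show this count is odd. The only difference is cosmetic---the paper quotes the closed forms $(k-1)!!$ (even $k$) and $k\cdot(k-2)!! = k!!$ (odd $k$) directly, whereas you derive their oddness via the recurrences $m_k=(k-1)m_{k-2}$ and $m_k=m_{k-1}+(k-1)m_{k-2}$; both routes are equally short and yield the same conclusion.
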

\begin{proof}
    If $k$ is even then each element of $\matchSet{k}$ is a perfect matching
    on $K_k$. The number of perfect matchings on $K_k$ is equal to $(k-1)!!$
    (see~\cite{callan2009combinatorialsurveyidentitiesdouble})
    and therefore odd.\footnote{We write $k!!$ for the double factorial of $k$, that is the product of all the positive integers up to $k$ that have the same parity as $k$.}

    If $k$ is odd then each $\matchSet{k}$ is a matching on $k-1$ vertices, where one
    vertex remains unmatched. There are $k$ possible choices for the unmatched vertex and
    $(k-2)!!$ possible perfect matchings for the remaining $k-1$ vertices. Thus, $|\matchSet{k}|$
    is equal to $k \cdot (k-2)!! = k!!$, which is odd.
\end{proof}

We now show how to transform an ordered maximal matching into a permutation.

\begin{definition}[Maximal matchings and permutations $\MTour{M}{T}$]\dglabel{def:Miso}[def:match,def:ordered:matching]
    For each $k$-labeled tournament $T$ and
    $M = (\{u_1, v_1\}, \dots, \{u_{\lfloor k/2 \rfloor}, v_{\lfloor k/2 \rfloor}\}) \in \matchOrder{k}$,
    we define $\MTour{M}{T} \coloneqq \Tperm{T}{\Miso{M}{T}}$ as the tournament
    that is obtained by applying the following permutation $\Miso{M}{T} \in \sym{k}$ to $T$.

    For all $i \in \setn{\lfloor k/2 \rfloor}$, define $\Miso{M}{T}$ such that $\Miso{M}{T}$ maps
    $\{u_i, v_i\}$ to $\{2i - 1, 2i\}$ and the tournaments $\Tperm{T}{\Miso{M}{T}}$ and $T$
    have the opposite orientation on
    $\{\Miso{M}{T}(u_i), \Miso{M}{T}(v_i)\} = \{2i- 1, 2i\}$. Further,
    if $k$ is odd, then $\Miso{M}{T}$ maps the unique unmatched vertex in $M$ to $k$.
\end{definition}

\begin{figure}[t]
    \centering
    \includegraphics{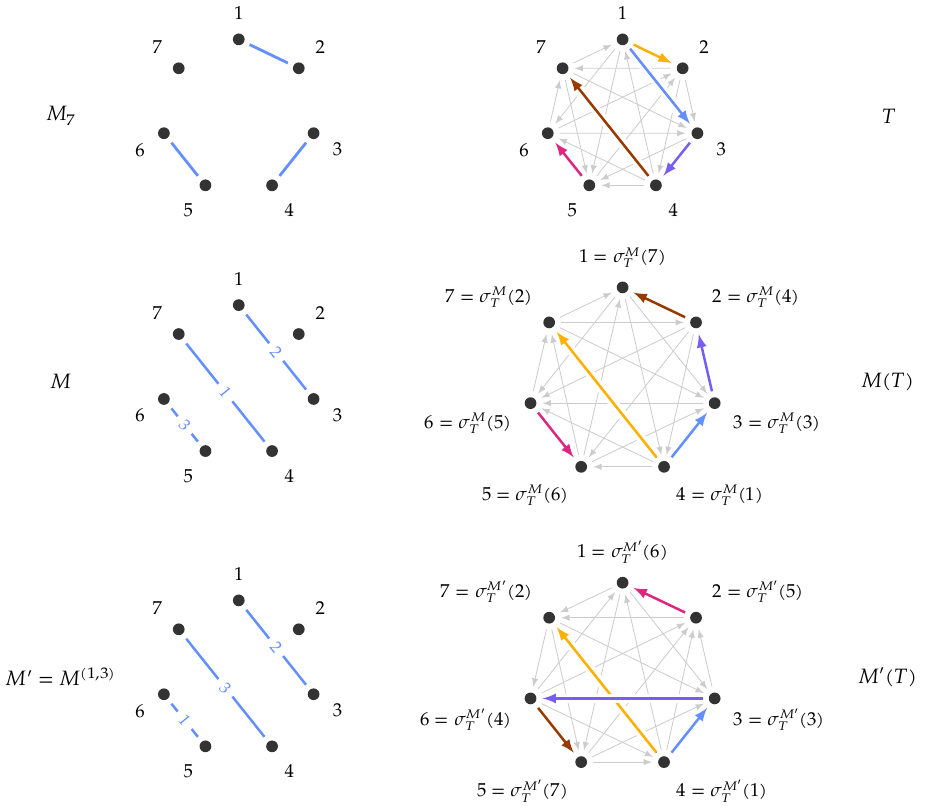}
    \caption{The 7-matching $M_7$, a tournament $T$,
        the ordered maximal matching $M\in \matchOrder{7}$ consisting
        of $(\{4,7\}, \{1,3\}, \{5,6\})$, the tournament $\MTour{M}{T}$,
        the ordered maximal matching $M'\in \matchOrder{7}$ obtained from $M$
        via the transposition $(1,3)$, and the tournament $\MTour{M'}{T}$.
    We highlight select edges of $T$, $\MTour{M}{T}$, and $\MTour{M'}{T}$.}%
    \label{fig:Mperm}
\end{figure}

See \cref{fig:Mperm} for an example.
Observe that two different
ordered maximal matchings $M, M' \in \matchOrder{k}$ define
two different permutations $\Miso{M}{T}$ and $\Miso{M'}{T}$.
We show that these permutations are precisely those
permutations with the property that $E(\match{k}) \subseteq \flip{T}{\Tperm{T}{\sigma}}$
meaning that we can use them for \cref{lem:ae:complement}.

\begin{lemma}[Symmetric difference and permutations]\dglabel{lem:bij:matching}[def:Miso]
    For each $k$-labeled tournament $T$, we have
    $\{\sigma \in \sym{k}: E(\match{k}) \subseteq \flip{T}{\Tperm{T}{\sigma}}\} =
    \{\Miso{M}{T} : M \in \matchOrder{k}\}$.
\end{lemma}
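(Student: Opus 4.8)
The plan is to prove the set equality by double inclusion, unwinding \cref{def:Miso}, \cref{def:match}, and \cref{def:ordered:matching}. Throughout I use that, since here the relevant colouring is the identity, two tournaments on the same vertex set disagree on an edge $\{u,v\}$ exactly when they have opposite orientation on $\{u,v\}$; hence $\{u,v\}\in\flip{T}{T'}$ iff $T$ and $T'$ have opposite orientation on $\{u,v\}$. I also record the (implicit) fact that, for a fixed $M\in\matchOrder{k}$, the permutation $\Miso{M}{T}$ of \cref{def:Miso} is uniquely determined: it must send $\{u_i,v_i\}$ onto $\{2i-1,2i\}$, and of the two ways to do so the orientation of the edge $\{2i-1,2i\}$ in $\Tperm{T}{\sigma}$ comes out opposite for one choice and equal for the other (compared against the fixed orientation of $\{2i-1,2i\}$ in $T$), so exactly one choice meets the ``opposite orientation'' requirement; for odd $k$ the remaining vertex is forced to $k$.

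For the inclusion $\{\Miso{M}{T}:M\in\matchOrder{k}\}\subseteq\{\sigma:E(\match{k})\subseteq\flip{T}{\Tperm{T}{\sigma}}\}$, I would simply read off \cref{def:Miso}: for every $M\in\matchOrder{k}$ and every $i\in\setn{\lfloor k/2\rfloor}$, the tournaments $\Tperm{T}{\Miso{M}{T}}$ and $T$ have opposite orientation on $\{2i-1,2i\}$, that is, $\{2i-1,2i\}\in\flip{T}{\Tperm{T}{\Miso{M}{T}}}$. Since $E(\match{k})=\{\{2i-1,2i\}:i\in\setn{\lfloor k/2\rfloor}\}$ by \cref{def:match}, this gives $E(\match{k})\subseteq\flip{T}{\Tperm{T}{\Miso{M}{T}}}$, so $\Miso{M}{T}$ lies in the left-hand set.

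For the reverse inclusion, fix $\sigma\in\sym{k}$ with $E(\match{k})\subseteq\flip{T}{\Tperm{T}{\sigma}}$ and set $M\coloneqq\bigl(\{\sigma^{-1}(1),\sigma^{-1}(2)\},\dots,\{\sigma^{-1}(2\lfloor k/2\rfloor-1),\sigma^{-1}(2\lfloor k/2\rfloor)\}\bigr)$. Because $\sigma$ is a bijection, the sets $\{\sigma^{-1}(2i-1),\sigma^{-1}(2i)\}$ are pairwise disjoint, so $M$ is an ordered tuple of $\lfloor k/2\rfloor$ edges of $K_k$ forming a matching; i.e.\ $M\in\matchOrder{k}$. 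It then remains to show $\sigma=\Miso{M}{T}$, and by the uniqueness noted above it suffices to check that $\sigma$ has the three defining properties of \cref{def:Miso}: (i) $\sigma$ maps $\{\sigma^{-1}(2i-1),\sigma^{-1}(2i)\}$ onto $\{2i-1,2i\}$, which holds by construction; (ii) $\Tperm{T}{\sigma}$ and $T$ have opposite orientation on each $\{2i-1,2i\}$, which is exactly the hypothesis $E(\match{k})\subseteq\flip{T}{\Tperm{T}{\sigma}}$; and (iii) if $k$ is odd, the unique unmatched vertex of $M$ is $\{1,\dots,k\}\setminus\sigma^{-1}(\{1,\dots,k-1\})=\sigma^{-1}(k)$, which $\sigma$ sends to $k$. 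Hence $\sigma=\Miso{M}{T}$, completing the inclusion and the proof.

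I do not anticipate a genuine obstacle; the statement is essentially a bookkeeping identity matching permutations with ordered maximal matchings. The only point needing care is the well-definedness of $\Miso{M}{T}$ used to justify ``it suffices to check properties (i)--(iii)'': one must confirm that, within each pair, the two candidate assignments onto $\{2i-1,2i\}$ produce opposite orientations of that edge in the resulting tournament, so that the opposite-orientation constraint against $T$ pins down a unique permutation. If one prefers to bypass uniqueness, the same conclusion follows directly: both $\sigma$ and $\Miso{M}{T}$ send $\{u_i,v_i\}$ onto $\{2i-1,2i\}$ and make the edge $\{2i-1,2i\}$ opposite to $T$, so if they disagreed on some pair they would disagree on that edge's orientation in $\Tperm{T}{\cdot}$, a contradiction; for odd $k$ both send the unmatched vertex to $k$, and agreement on all pairs forces $\sigma=\Miso{M}{T}$.
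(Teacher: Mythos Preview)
Your proposal is correct and follows essentially the same approach as the paper: both directions are handled identically, constructing $M$ from $\sigma^{-1}$ on the matching edges and then verifying $\sigma=\Miso{M}{T}$ by comparing orientations on each pair $\{2i-1,2i\}$. Your explicit treatment of the well-definedness of $\Miso{M}{T}$ is a point the paper leaves implicit, and your alternative argument at the end is precisely what the paper does.
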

\begin{proof}
    Let $M \in \matchOrder{k}$. By definition of $\Miso{M}{T}$ (see \cref{def:Miso}), tournaments
    $\Tperm{T}{\Miso{M}{T}}$ and $T$ disagree on all edges
    $\{2i - 1, 2i\}$. Thus, $E(\match{k})  \subseteq \flip{T}{T^{\Miso{M}{T}}}$ which proves
    $\{\Miso{M}{T} : M \in \matchOrder{k}\} \subseteq \{\sigma \in \sym{k}:E(\match{k}) \subseteq \flip{T}{\Tperm{T}{\sigma}}\}$.

    For the other direction, let
    $\sigma \in  \sym{k}$ be such that $E(\match{k}) \subseteq \flip{T}{\Tperm{T}{\sigma}}$.
    For every $i \in \setn{\lfloor k/2 \rfloor}$, there are exactly two vertices $u_i$ and
    $v_i$ with $\sigma(u_i) = 2i-1$ and $\sigma(v_i) = 2i$. We define
    $M = (\{u_1, v_1\}, \dots, \{u_{\lfloor k/2 \rfloor}, v_{\lfloor k/2 \rfloor}\})$.
    To see that $M \in \matchOrder{k}$, note that $M$ is a
    matching since $\sigma^{-1}$ is an isomorphism that maps the matching $\match{k}$ to the
    edge set $M$.

    We show that $\sigma = \Miso{M}{T}$. Note that
    $\Miso{M}{T}$ is defined in a way such that  $\Tperm{T}{\Miso{M}{T}}$ and $T$
    disagree on
    $\{\Miso{M}{T}(u_i), \Miso{M}{T}(v_i)\} = \{2i- 1, 2i\}$.
    Moreover, $\Tperm{T}{\sigma}$ and $T$ disagree on
    $\{\sigma(u_i), \sigma(v_i)\} = \{2i- 1, 2i\}$ since $\{2i - 1, 2i\} \in  \flip{T}{\Tperm{T}{\sigma}}$.
    Since  $\{\sigma(u_i), \sigma(v_i)\} = \{\Miso{M}{T}(u_i), \Miso{M}{T}(v_i)\}  = \{2i- 1, 2i\}$,
    we obtain that $\Tperm{T}{\Miso{M}{T}}$ and $\Tperm{T}{\sigma}$ agree on $\{2i - 1, 2i\}$.
    This implies $\sigma(u_i) = \Miso{M}{T}(u_i)$ and $\sigma(v_i) = \Miso{M}{T}(v_i)$. If
    $k$ is even, then this shows that $\sigma(u_i)$ and $\Miso{M}{T}$ coincide on all elements.
    If $k$ is odd, then this shows that this shows that $\sigma(u_i)$ and $\Miso{M}{T}$ coincide
    on all but in a single element. Since two distinct permutations differ on at least
    two elements, we obtain $\sigma = \Miso{M}{T}$ in this case, too. This shows
    $\{\sigma \in \sym{k}: E(\match{k}) \subseteq \flip{T}{\Tperm{T}{\sigma}}\} \subseteq \{\Miso{M}{T} : M \in \matchOrder{k}\}$.
\end{proof}

By combining  \cref{lem:ae:complement,lem:bij:matching}, we obtain
\begin{align}\label{eq:t:ae:mk}
    |\auts{T}| \cdot \ae{T}{\overline{\match{k}}} = {(-1)}^{|E(\match{k})|} \cdot
    \sum_{M \in \matchSet{k}}  \sum_{\sigma \in \sym{\lfloor k/2 \rfloor}}
    {(-1)}^{| \flip{T}{\MTour{\Mperm{M}{\sigma}}{T}} | }.
\end{align}
Here, we use that $\matchOrder{k} = \{\Mperm{M}{\sigma} : M \in \matchSet{k}, \sigma \in \sym{\lfloor k/2 \rfloor}  \}$.
Thus, it is instructive to prove that $| \flip{T}{\MTour{\Mperm{M}{\sigma}}{T}} |$ has the same parity for all
permutations $\sigma \in \sym{\lfloor k/2 \rfloor}$.

\begin{lemma}
    \dglabel{lemma:parity:matching:permutation}($|\flip{\MTour{M}{T}}{\MTour{\Mperm{M}{\varphi}}{T}}|$ is even)
    Let $\varphi = (i, j)$ be a transposition and $M \in \matchOrder{k}$. Then
    the cardinality of $\flip{\MTour{M}{T}}{\MTour{\Mperm{M}{\varphi}}{T}}$ is even.
\end{lemma}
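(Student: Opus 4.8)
The plan is to first collapse the two ``matching-permuted'' tournaments onto $T$ itself, and then to reduce everything to an elementary parity computation. Writing $\sigma \coloneqq \Miso{M}{T}$ and $\sigma' \coloneqq \Miso{\Mperm{M}{\varphi}}{T}$, we have $\MTour{M}{T} = \Tperm{T}{\sigma}$ and $\MTour{\Mperm{M}{\varphi}}{T} = \Tperm{T}{\sigma'}$. I would use two routine facts: relabelling the vertices of two tournaments on $\setn{k}$ by a common permutation preserves the size of their symmetric difference, and $\Tperm{(\Tperm{T}{\alpha})}{\beta} = \Tperm{T}{\beta \circ \alpha}$. Relabelling by $\sigma^{-1}$ then yields $|\flip{\MTour{M}{T}}{\MTour{\Mperm{M}{\varphi}}{T}}| = |\flip{\Tperm{T}{\sigma}}{\Tperm{T}{\sigma'}}| = |\flip{T}{\Tperm{T}{\tau}}|$ with $\tau \coloneqq \sigma^{-1} \circ \sigma' \in \sym{k}$, so it suffices to show that $|\flip{T}{\Tperm{T}{\tau}}|$ is even.

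The step I expect to be the fiddliest is identifying $\tau$ by unwinding \cref{def:Miso}. Since $\Mperm{M}{\varphi}$ agrees with $M$ coordinate-wise except that the $i$-th and $j$-th edges are interchanged, and since for each matching edge the orientation requirement in \cref{def:Miso} pins down \emph{uniquely} to which of $2\ell-1, 2\ell$ each endpoint is sent, the permutations $\sigma$ and $\sigma'$ agree on every vertex outside the two edges $e_i, e_j$ of $M$ indexed by $i$ and $j$ (and on the unmatched vertex if $k$ is odd). Reading off how $\sigma^{-1}$ and $\sigma'$ act on $e_i \cup e_j$ (using that $\sigma$ maps the endpoints of $T$'s arc on $e_\ell$ to those of $T$'s arc on $\{2\ell-1, 2\ell\}$ so as to reverse the orientation), one finds that $\tau$ fixes everything outside $e_i \cup e_j$ and exchanges the two endpoints of $e_i$ with the two endpoints of $e_j$. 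Because $i \neq j$ and $M$ is a matching, the edges $e_i, e_j$ are disjoint, so these four vertices are distinct and $\tau$ is a product of two disjoint transpositions; write $\tau = \psi_2 \circ \psi_1$ with $\psi_1, \psi_2$ transpositions, so that $\Tperm{T}{\tau} = \Tperm{(\Tperm{T}{\psi_1})}{\psi_2}$.

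Finally I would run the parity computation. First I would prove the auxiliary claim that $|\flip{T'}{\Tperm{T'}{\psi}}|$ is \emph{odd} for every tournament $T'$ on a vertex set $V$ and every transposition $\psi = (x\ y)$: the two tournaments disagree on $\{x, y\}$, agree on every pair avoiding $\{x, y\}$, and for each other vertex $b$ they disagree on $\{x, b\}$ exactly when they disagree on $\{y, b\}$ (both conditions say that exactly one of the arcs $(x, b), (y, b)$ lies in $\DE{T'}$), so those pairs contribute an even total; hence the count is odd. Combining this with the ``triangle identity mod $2$'' $|\flip{A}{C}| \equiv |\flip{A}{B}| + |\flip{B}{C}| \pmod 2$ for tournaments on a common vertex set (on each pair, disagreement of $A, C$ is the XOR of disagreement of $A, B$ and of $B, C$) applied to $A = T$, $B = \Tperm{T}{\psi_1}$, $C = \Tperm{(\Tperm{T}{\psi_1})}{\psi_2} = \Tperm{T}{\tau}$ gives $|\flip{T}{\Tperm{T}{\tau}}| \equiv |\flip{T}{\Tperm{T}{\psi_1}}| + |\flip{\Tperm{T}{\psi_1}}{\Tperm{(\Tperm{T}{\psi_1})}{\psi_2}}| \equiv 1 + 1 \equiv 0 \pmod 2$. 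Together with the first step, this shows that $|\flip{\MTour{M}{T}}{\MTour{\Mperm{M}{\varphi}}{T}}|$ is even.
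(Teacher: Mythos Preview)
Your argument is correct, and it is a genuinely different (and cleaner) route than the paper's. Both proofs share the same key structural step: the permutation relating $\MTour{M}{T}$ and $\MTour{\Mperm{M}{\varphi}}{T}$ is a product of two disjoint transpositions (the paper states this as its \cref{claim:M:transposition}, and you obtain it as $\tau = \sigma^{-1}\circ\sigma'$ after relabelling). From there the paper proceeds by an explicit edge-by-edge case analysis, partitioning the flip set into edges inside the four moved vertices, outside them, and between, and verifying directly that each block has even size. You instead decompose $\tau = \psi_2\circ\psi_1$, prove once that a single transposition always yields an \emph{odd} flip count, and combine via the mod-$2$ triangle identity to get $1+1\equiv 0$. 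Your approach is more modular and avoids the case distinctions; the paper's approach extracts slightly more (e.g.\ that the ``inside'' block has size exactly $2$ or $4$), which is not needed for the lemma.

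One point worth making explicit when you write it out: the assertion that $\tau$ is a product of two \emph{disjoint} transpositions rather than a $4$-cycle is exactly where the orientation clause of \cref{def:Miso} enters. Since both $\sigma$ and $\sigma'$ send the tail of $T$'s arc on $e_\ell$ to the head of $T$'s arc on the target pair, $\tau$ sends tail of $e_i$ to tail of $e_j$ and head of $e_i$ to head of $e_j$, forcing the double-transposition shape. You hint at this, but since a $4$-cycle would make the transposition count odd and break the parity, it deserves a sentence.
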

\begin{proof}
    In the following, we write $\sigma$ instead of $\Miso{M}{T}$.
    Thus, $\MTour{M}{T} \equiv \Tperm{T}{\sigma}$.
    Further, without loss of generality we assume that
    $M \coloneqq (\{u_1, v_1\}, \dots, \{u_{\lfloor k/2 \rfloor}, v_{\lfloor k/2 \rfloor}\})$,
    where we use the convention that $u_t$ and $v_t$ are named in such a way
    that $(\sigma(u_t), \sigma(v_t)) \in \DE{T^\sigma}$.  We start by proving the following claim.

    \begin{claim}\label{claim:M:transposition}
        Let $\psi \coloneqq (\sigma(u_i), \sigma(u_j)) \circ (\sigma(v_i), \sigma(v_j))$
        be the permutation that swaps
        $\sigma(u_i)$ with $\sigma(u_j)$ and $\sigma(v_i)$ with $\sigma(v_j)$, then
        $\Mperm{(\MTour{M}{T})}{\psi}
        \equiv \MTour{\Mperm{M}{\varphi}}{T}$.
    \end{claim}
    \begin{claimproof}
        Observe that we obtain $\Mperm{M}{\varphi}$
        by swapping $\{u_i, v_i\}$ with $\{u_j, v_j\}$ in $M$. This means that the only
        difference between $\MTour{M}{T}$ and $\MTour{\Mperm{M}{\varphi}}{T}$
        is that in $\MTour{M}{T}$ the vertices $\{u_i, v_i\}$ maps to
        $\{2i- 1, 2i\} = \{\sigma(u_i), \sigma(v_i)\}$, and the vertices $\{u_j, v_j\}$ maps to
        $\{2j- 1, 2j\} = \{\sigma(u_j), \sigma(v_j)\}$. While in $\MTour{\Mperm{M}{\varphi}}{T}$
        the vertices $\{u_i, v_i\}$ maps to
        $\{2j- 1, 2j\} = \{\sigma(u_j), \sigma(v_j)\}$, and the vertices $\{u_j, v_j\}$ maps to
        $\{2i- 1, 2i\} = \{\sigma(u_i), \sigma(v_i)\}$. Thus, we obtain $\MTour{\Mperm{M}{\varphi}}{T}$
        by applying a permutation $\psi$ to $\MTour{M}{T}$ (i.e.,
        $\Mperm{(\MTour{M}{T})}{\psi} = \MTour{\Mperm{M}{\varphi}}{T}$)
        that swaps the vertices $\{\sigma(u_i), \sigma(v_i)\}$ with
        $\{\sigma(u_j), \sigma(v_j)\}$. There are in principle two possible
        permutations that swaps the elements of $\{\sigma(u_i), \sigma(v_i)\}$ with $\{\sigma(u_j), \sigma(v_j)\}$. However,
        we show that only the permutation $\psi$ with $\psi(\sigma(u_i)) = \sigma(u_j)$
        and $\psi(\sigma(v_i)) = \sigma(v_j)$ is possible. For this
        observe that, by definition (see \cref{def:Miso}), $\MTour{M}{T}$ and $\MTour{\Mperm{M}{\varphi}}{T}$ both disagree with $T$ on $\{\sigma(u_i), \sigma(v_i)\}$
        and $\{\sigma(u_j), \sigma(v_j)\}$, meaning that $\MTour{M}{T}$ and $\MTour{\Mperm{M}{\varphi}}{T}$ have the same orientation on $\{\sigma(u_i), \sigma(v_i)\}$
        and $\{\sigma(u_j), \sigma(v_j)\}$. Since $(\sigma(u_i), \sigma(v_i)) \in \DE{\MTour{M}{T}}$
        and $(\sigma(u_j), \sigma(v_j)) \in \DE{\MTour{M}{T}}$, this implies $\psi(\sigma(u_i)) = \sigma(u_j)$
        and $\psi(\sigma(v_i)) = \sigma(v_j)$, proving the claim.
    \end{claimproof}

    We define the sets $A = \{\sigma(u_i), \sigma(v_i), \sigma(u_j), \sigma(v_j)\}$
    and $B = V(T) \setminus A$. This allows us to partition the set
    $\flip{\MTour{M}{T}}{\MTour{\Mperm{M}{\varphi}}{T}}$ into three sets
    $\outE$, $\inE$, and $\intE$, where $\outE$ contains all edges included
    in $B$,  $\inE$ contains all edges included in $A$, and $\intE$ contains
    all edges between in $A$ and $B$.
    We show
    \[|\outE| \equiv_2 0, \quad |\inE| \equiv_2 0, \quad |\intE| \equiv_2 0.\]
    Observe that $|\outE| \equiv_2 0$ due to \cref{claim:M:transposition}. To this end, observe that
    $\MTour{M}{T}$ and $\MTour{\Mperm{M}{\varphi}}{T}$ are identical on $B$
    which implies $\outE = \emptyset$.

    Next, we show $|\intE| \equiv_2 0$.
    For all $z \in B$,
    write $B^u_z = \{\{\sigma(u_i), z\}, \{\sigma(u_j), z\}\}$.
    We show that $|B^u_z \cap \intE|$ is even.
    By \cref{claim:M:transposition}, when going from
    $\MTour{M}{T}$ to $\MTour{\Mperm{M}{\varphi}}{T}$,
    we swap the edge $\{\sigma(u_i), z\}$ in $\MTour{M}{T}$ with the edge
    $\{\sigma(u_j), z\}$ in $\MTour{M}{T}$.%
    \footnote{For instance
        if $(\sigma(u_i), z) \in \DE{\MTour{M}{T}}$ and $(\sigma(u_j), z) \in \DE{\MTour{M}{T}}$
        then $(\sigma(u_j), z) \in \DE{\MTour{\Mperm{M}{\varphi}}{T}}$ and
        $(\sigma(u_j), z) \in \DE{\MTour{\Mperm{M}{\varphi}}{T}}$.
        Next, if $(\sigma(u_i), z) \in \DE{\MTour{M}{T}}$ and $(z, \sigma(u_j)) \in \DE{\MTour{M}{T}}$
        then $(\sigma(u_j), z) \in \DE{\MTour{\Mperm{M}{\varphi}}{T}}$ and
        $(z, \sigma(u_j)) \in \DE{\MTour{\Mperm{M}{\varphi}}{T}}$.
    }

    If both edges have the same
    orientation\footnote{i.e., if either ($(\sigma(u_i), z) \in \DE{\MTour{M}{T}}$ and
    $(\sigma(u_j), z) \in \DE{\MTour{M}{T}}$) or ($(z, \sigma(u_i)) \in \DE{\MTour{M}{T}}$ and
    $(z, \sigma(u_j)) \in \DE{\MTour{M}{T}}$). }
    in $\MTour{M}{T}$ then $\{\sigma(u_i), z\}$  and  $\{\sigma(u_j), z\}$
    also have the same orientation in $\MTour{\Mperm{M}{\varphi}}{T}$.
    Hence, $B^u_z \cap \intE = \emptyset$.
    If both edges have the opposite orientation\footnote{i.e., if either
    ($(\sigma(u_i), z) \in \DE{\MTour{M}{T}}$ and
    $(z, \sigma(u_j)) \in \DE{\MTour{M}{T}}$) or ($(z, \sigma(u_i)) \in \DE{\MTour{M}{T}}$ and
    $(\sigma(u_j), z) \in \DE{\MTour{M}{T}}$).}
    then $\{\sigma(u_i), z\}$  and  $\{\sigma(u_j), z\}$ both
    have the opposite orientation in $\MTour{M}{T}$ and $\MTour{\Mperm{M}{\varphi}}{T}$.
    Hence, $B^u_z \subseteq \intE$. This proofs that $|B^u_z \cap \intE|$
    is either 0 or 2 and therefore even.

    We define $B^v_z = \{\{\sigma(v_i), z\}, \{\sigma(v_j), z\}\}$. By
    swapping the role of $\sigma(u_i)$ with $\sigma(v_i)$ and
    $\sigma(u_j)$ with $\sigma(v_j)$, we obtain that $|B^v_z \cap \intE|$
    is also even. We use $B^u_z$ and $B^v_z$ to partition $\intE$ which yields
    \begin{align*}
        |\intE| &= \left|\biguplus_{z \in B} \left(B^u_z \cap \intE\right) \uplus \left(B^v_z \cap \intE\right)\right|  = \sum_{z \in B} |B^u_z \cap \intE| + |B^v_z \cap \intE| \equiv_2 \sum_{z \in B} 0 + 0  \equiv_2 0 .
    \end{align*}

    Finally, we show $|\inE| \equiv_2 0$ by proving $|\inE| \in \{2, 4\}$. There
    are 6 edges between the vertices of $A = \{\sigma(u_i), \sigma(v_i), \sigma(u_j), \sigma(v_j)\}$ that
    we analyze in the following. First observe that $\{\sigma(u_i), \sigma(v_i)\}$ and
    $\{\sigma(u_j), \sigma(v_j)\}$ are both not in $\inE$. The reason is that by
    \cref{claim:M:transposition} both $\MTour{M}{T}$ and
    $\MTour{\Mperm{M}{\varphi}}{T}$ agree on $\{\sigma(u_i), \sigma(v_i)\}$ and
    $\{\sigma(u_j), \sigma(v_j)\}$. Next, by \cref{claim:M:transposition},
    observe that the edges
    $\{\sigma(u_i), \sigma(u_j)\}$ and $\{\sigma(v_i), \sigma(v_j)\}$ both are flipped
    when applying $\psi$. Hence, $\{\sigma(u_i), \sigma(u_j)\}, \{\sigma(v_i), \sigma(v_j)\} \in \inE$.

    Lastly, consider the edges $\{\sigma(u_i), \sigma(v_j)\}$ and $\{\sigma(u_j), \sigma(v_i)\}$.
    We consider all four cases:
    \begin{itemize}
        \item $(\sigma(u_i), \sigma(v_j)) \in \DE{\MTour{M}{T}}$
        and $(\sigma(u_j), \sigma(v_i)) \in \DE{\MTour{M}{T}}$:
        Then $(\sigma(u_j), \sigma(v_i)) \in \DE{\MTour{\Mperm{M}{\varphi}}{T}}$
        and $(\sigma(u_i), \sigma(v_j)) \in \DE{\MTour{\Mperm{M}{\varphi}}{T}}$,
        thus $\{\sigma(u_i), \sigma(v_j)\}, \{\sigma(u_j), \sigma(v_i)\} \notin \inE$.
        \item $(\sigma(v_j), \sigma(u_i)) \in \DE{\MTour{M}{T}}$
        and $(\sigma(u_j), \sigma(v_i)) \in \DE{\MTour{M}{T}}$:
        Then $(\sigma(v_i), \sigma(u_j)) \in \DE{\MTour{\Mperm{M}{\varphi}}{T}}$
        and $(\sigma(u_i), \sigma(v_j)) \in \DE{\MTour{\Mperm{M}{\varphi}}{T}}$,
        thus $\{\sigma(u_i), \sigma(v_j)\}, \{\sigma(u_j), \sigma(v_i)\} \in \inE$.
        \item  $(\sigma(u_i), \sigma(v_j)) \in \DE{\MTour{M}{T}}$
        and $(\sigma(v_i), \sigma(u_j)) \in \DE{\MTour{M}{T}}$:
        Then $(\sigma(u_j), \sigma(v_i)) \in \DE{\MTour{\Mperm{M}{\varphi}}{T}}$
        and $(\sigma(v_j), \sigma(u_i)) \in \DE{\MTour{\Mperm{M}{\varphi}}{T}}$,
        thus $\{\sigma(u_i), \sigma(v_j)\}, \{\sigma(u_j), \sigma(v_i)\} \in \inE$.
        \item  $(\sigma(v_j), \sigma(u_i)) \in \DE{\MTour{M}{T}}$
        and $(\sigma(v_i), \sigma(u_j)) \in \DE{\MTour{M}{T}}$:
        Then $(\sigma(v_i), \sigma(u_j)) \in \DE{\MTour{\Mperm{M}{\varphi}}{T}}$
        and $(\sigma(v_j), \sigma(u_i)) \in \DE{\MTour{\Mperm{M}{\varphi}}{T}}$,
        thus $\{\sigma(u_i), \sigma(v_j)\}, \{\sigma(u_j), \sigma(v_i)\} \notin \inE$.
    \end{itemize}
    In each case, either $\{\sigma(u_i), \sigma(v_j)\}$ and $\{\sigma(u_j), \sigma(v_i)\}$ are both in
    $\inE$, or neither are. This shows together with the other cases that
    $|\inE| \in \{2, 4\}$. The result now follows from
    \[|\flip{\MTour{M}{T}}{\MTour{\Mperm{M}{\varphi}}{T}}| = |\outE| + |\intE| + |\inE| \equiv_2 0 + 0 + 0.
    \qedhere\]
\end{proof}

\begin{lemma}
    \dglabel{lem:parity}[lemma:parity:matching:permutation]($| \flip{T}{\MTour{M}{T}} | \equiv_2 |\flip{T}{\MTour{\Mperm{M}{\varphi}}{T}}|$)
    Let $T$ be a $k$-labeled tournament, $M \in \matchSet{k}$ and
    $\varphi \in \sym{\lfloor k/2 \rfloor}$. Then
    \[| \flip{T}{\MTour{M}{T}} | \equiv_2 |\flip{T}{\MTour{\Mperm{M}{\varphi}}{T}}|. \]
\end{lemma}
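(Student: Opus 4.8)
The plan is to reduce the general statement to the transposition case already settled in \cref{lemma:parity:matching:permutation}, by a short telescoping argument based on the elementary fact that parities of symmetric differences against a fixed tournament propagate through chains.

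First I would record the following parity identity. For any three tournaments $T, A, B$ on the same vertex set we have $\flip{T}{A} \symdifOperator \flip{T}{B} = \flip{A}{B}$: on a fixed edge $\{u,v\}$ each of $T$, $A$, $B$ prescribes one of two orientations, and ``$T$ disagrees with exactly one of $A, B$'' is literally the same condition as ``$A$ and $B$ disagree''. Since $|X| + |Y| = |X \symdifOperator Y| + 2\,|X \cap Y| \equiv_2 |X \symdifOperator Y|$ for all finite sets $X, Y$, this gives $|\flip{T}{A}| \equiv_2 |\flip{T}{B}|$ whenever $|\flip{A}{B}|$ is even. Applying this with $A = \MTour{M'}{T}$ and $B = \MTour{\Mperm{M'}{\tau}}{T}$ and invoking \cref{lemma:parity:matching:permutation}, we obtain: for every $M' \in \matchOrder{k}$ and every transposition $\tau \in \sym{\lfloor k/2 \rfloor}$,
\[ |\flip{T}{\MTour{M'}{T}}| \equiv_2 |\flip{T}{\MTour{\Mperm{M'}{\tau}}{T}}|. \]

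Next I would decompose $\varphi$ into transpositions. Write $\varphi = \tau_1 \tau_2 \cdots \tau_\ell$ with each $\tau_t$ a transposition in $\sym{\lfloor k/2 \rfloor}$, set $M_0 \coloneqq M$ and $M_t \coloneqq \Mperm{M_{t-1}}{\tau_t}$ for $t \in \setn{\ell}$. By the composition behaviour of the edge-permutation operation $\Mperm{\cdot}{\cdot}$ (see \cref{def:ordered:matching}), we have $M_t = \Mperm{M}{\tau_1 \cdots \tau_t}$, and in particular $M_\ell = \Mperm{M}{\varphi}$. Crucially, permuting the coordinates of an ordered maximal matching again yields an ordered maximal matching, so each $M_t$ lies in $\matchOrder{k}$ (we never need $M_t \in \matchSet{k}$). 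Hence the displayed congruence above applies at every step, giving $|\flip{T}{\MTour{M_{t-1}}{T}}| \equiv_2 |\flip{T}{\MTour{M_t}{T}}|$ for all $t \in \setn{\ell}$. Chaining these over $t = 1, \dots, \ell$ yields
\[ |\flip{T}{\MTour{M}{T}}| = |\flip{T}{\MTour{M_0}{T}}| \equiv_2 |\flip{T}{\MTour{M_\ell}{T}}| = |\flip{T}{\MTour{\Mperm{M}{\varphi}}{T}}|, \]
which is precisely the claim.

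The only genuinely delicate point is the bookkeeping around the convention under which $\Mperm{M}{\sigma}$ permutes the edge-tuple of $M$, so that the telescoping identity $M_\ell = \Mperm{M}{\varphi}$ holds on the nose (this is just tracking whether $\Mperm{(\Mperm{M}{\sigma})}{\rho}$ equals $\Mperm{M}{\sigma\rho}$ or $\Mperm{M}{\rho\sigma}$, and choosing the decomposition order of $\varphi$ accordingly). Once that is fixed, the proof is the two-line parity identity above plus a single appeal to \cref{lemma:parity:matching:permutation}.
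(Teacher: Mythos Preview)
Your proof is correct and takes essentially the same approach as the paper: both reduce to the transposition case via \cref{lemma:parity:matching:permutation}, and both use the symmetric-difference identity $\flip{T}{A}\symdifOperator\flip{T}{B}=\flip{A}{B}$ together with $|X\symdifOperator Y|\equiv_2 |X|+|Y|$ to propagate the parity. Your telescoping over a transposition decomposition of $\varphi$ is exactly what the paper's line ``it is enough to show the statement for $\varphi=(i,j)$'' is alluding to, spelled out a bit more carefully.
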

\begin{proof}
    It is enough to show the statement for $\varphi = (i, j)$ since
    each permutation can be written as a composition of transpositions.
    We first show that
    \begin{align}\label{eq:sym:dif:flip}
        \flip{T}{\MTour{\Mperm{M}{\varphi}}{T}} =
    \symdif{(\flip{T}{\MTour{M}{T}})}{(\flip{\MTour{M}{T}}{\MTour{\Mperm{M}{\varphi}}{T}})}.
    \end{align}
    To see this, recall that $\{u, v\} \in  \flip{T}{\MTour{\Mperm{M}{\varphi}}{T}}$
    if and only if $T$ and $\MTour{\Mperm{M}{\varphi}}{T}$ disagree  on $\{u, v\}$.
    This is logical equivalent to either
    \begin{itemize}
        \item $T$ disagrees with $\MTour{M}{T}$ on $\{u, v\}$ and
         $\MTour{M}{T}$ agrees with $\MTour{\Mperm{M}{\varphi}}{T}$ on $\{u, v\}$, or
         \item $T$ agrees with $\MTour{M}{T}$ on $\{u, v\}$ and
         $\MTour{M}{T}$ disagrees with $\MTour{\Mperm{M}{\varphi}}{T}$ on $\{u, v\}$.
    \end{itemize}
    This statement is logical equivalent to
    $\{u, v\} \in \symdif{(\flip{T}{\MTour{M}{T}})}{(\flip{\MTour{M}{T}}{\MTour{\Mperm{M}{\varphi}}{T}})}$,
    thus implying \cref{eq:sym:dif:flip}.
    Lastly, observe that
    \begin{align}\label{eq:sym:dif:size}
        |\symdif{A}{B}| = |A| + |B| - 2|A \cap B|
    \end{align}
    To see this, note that $|A| + |B|$ counts each element that either appears in $A$ or
    $B$ once, and each element that appears in both $A$ and $B$ twice. Thus,
    $|A| + |B| - 2|A \cap B|$ only counts elements that either appear in $A$ or
    $B$. With this, we have everything to prove the theorem. We obtain
    \begin{align*}
        |\flip{T}{\MTour{\Mperm{M}{\varphi}}{T}}| &\stackrel{(\ref{eq:sym:dif:flip})}{\equiv_2}
        |\symdif{(\flip{T}{\MTour{M}{T}})}{(\flip{\MTour{M}{T}}{\MTour{\Mperm{M}{\varphi}}{T}})}| \\
        &\stackrel{(\ref{eq:sym:dif:size})}{\equiv_2} |\flip{T}{\MTour{M}{T}}| +
        |\flip{\MTour{M}{T}}{\MTour{\Mperm{M}{\varphi}}{T}}| + 0 \\
        &\equiv_2 |\flip{T}{\MTour{M}{T}}| + 0 + 0,
    \end{align*}
    where we use \cref{lemma:parity:matching:permutation} for the last step.
\end{proof}

We now combine \cref{lem:parity,eq:t:ae:mk} to show that
$\ae{T}{\cmatch{k}}$ is always non-vanishing.

\begin{theorem}
    \dglabel{theo:cmatch:not:zero}[lem:ae:complement,lem:card:matching,lem:bij:matching,lem:parity](The
    alternating enumerator of the anti-matching $\ae{T}{\cmatch{k}}$ is nonzero)
    Every $k$-labeled tournament \(T\) satisfies $\ae{T}{\cmatch{k}} \neq 0$.
\end{theorem}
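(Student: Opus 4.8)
The plan is to read off the result from \cref{eq:t:ae:mk} by combining it with the parity statements \cref{lemma:parity:matching:permutation,lem:parity} and the counting statement \cref{lem:card:matching}. Recall that $\cmatch{k} = \overline{\match{k}}$ by \cref{def:match}, so \cref{eq:t:ae:mk} reads
\[
|\auts{T}| \cdot \ae{T}{\cmatch{k}} = {(-1)}^{|E(\match{k})|} \cdot \sum_{M \in \matchSet{k}} \sum_{\sigma \in \sym{\lfloor k/2 \rfloor}} {(-1)}^{|\flip{T}{\MTour{\Mperm{M}{\sigma}}{T}}|}.
\]
(For $k \leq 1$ the graph $\cmatch{k}$ is edgeless and its unique edge-subgraph $H'$ satisfies $T_{H'} = T$, so $\ae{T}{\cmatch{k}} = 1 \neq 0$ directly from \cref{def:alternating:enumerator}; we may thus assume $k \geq 2$.)

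The first step is to collapse the inner sum. Fix $M \in \matchSet{k}$. By \cref{lem:parity}, for every $\sigma \in \sym{\lfloor k/2 \rfloor}$ we have $|\flip{T}{\MTour{\Mperm{M}{\sigma}}{T}}| \equiv_2 |\flip{T}{\MTour{M}{T}}|$, hence every summand of the inner sum equals the same sign ${(-1)}^{|\flip{T}{\MTour{M}{T}}|}$. As $|\sym{\lfloor k/2 \rfloor}| = (\lfloor k/2 \rfloor)!$, substituting back gives
\[
|\auts{T}| \cdot \ae{T}{\cmatch{k}} = {(-1)}^{|E(\match{k})|} \cdot (\lfloor k/2 \rfloor)! \cdot \sum_{M \in \matchSet{k}} {(-1)}^{|\flip{T}{\MTour{M}{T}}|}.
\]
The remaining sum ranges over $|\matchSet{k}|$ terms, each equal to $\pm 1$; if $a$ of them are $+1$ and $b$ are $-1$, its value is $a - b \equiv a + b = |\matchSet{k}| \pmod 2$. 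By \cref{lem:card:matching}, $|\matchSet{k}|$ is odd, so $\sum_{M \in \matchSet{k}} {(-1)}^{|\flip{T}{\MTour{M}{T}}|}$ is odd and in particular nonzero. Since $|\auts{T}| \geq 1$ and $(\lfloor k/2 \rfloor)! \geq 1$, this forces $\ae{T}{\cmatch{k}} \neq 0$.

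The substantive work sits in the lemmas already proved: \cref{lemma:parity:matching:permutation} (and hence \cref{lem:parity}), which controls the parity of $|\flip{T}{\MTour{M}{T}}|$ under permuting the edges of a maximal matching, and \cref{lem:card:matching}, the oddness of the number of unordered maximal matchings on $K_k$. Given these, the proof of the theorem itself carries no real obstacle — it is the observation that passing to the $\sym{\lfloor k/2 \rfloor}$-orbit preserves parities, so the alternating enumerator becomes a signed count over an index set of odd cardinality and therefore cannot vanish. The only point requiring a little care is to sum over a set of orbit representatives $\matchSet{k}$ rather than over all of $\matchOrder{k}$, so that the collapsed sum has exactly $|\matchSet{k}|$ terms, matching the parity input from \cref{lem:card:matching}.
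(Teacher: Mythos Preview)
Your proof is correct and follows essentially the same route as the paper: start from \cref{eq:t:ae:mk}, use \cref{lem:parity} to collapse the inner sum over $\sym{\lfloor k/2 \rfloor}$ to $(\lfloor k/2\rfloor)!$ copies of a single sign, and then invoke \cref{lem:card:matching} to see that the remaining $\pm 1$-sum over $\matchSet{k}$ has odd length and hence is nonzero. Your explicit handling of the degenerate case $k\le 1$ is a small addition the paper leaves implicit.
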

\begin{proof}
    By \cref{lem:ae:complement}, we first obtain
    \[X \coloneqq \frac{|\auts{T}| \cdot \ae{T}{\cmatch{k}}}{(\lfloor k/2 \rfloor)!} =
    \frac{{(-1)}^{|E(\match{k})|}}{(\lfloor k/2 \rfloor)!}  \cdot
    \sum_{\substack{\sigma \in \sym{k} \\ E(\match{k}) \subseteq \flip{T}{\Tperm{T}{\sigma}}}}
    {(-1)}^{\left| \flip{T}{\Tperm{T}{\sigma}} \right|}. \]
    Next, by \cref{lem:bij:matching} we can take the sum over the set
    $\{\Miso{M}{T} : M \in \matchOrder{k}\}$ instead. Thus, we obtain
    \[X = \frac{{(-1)}^{|E(\match{k})|}}{(\lfloor k/2 \rfloor)!}
    \cdot \sum_{M \in \matchOrder{k}} {(-1)}^{| \flip{T}{T^{\Miso{M}{T}}} |} =
    \frac{{(-1)}^{|E(\match{k})|}}{(\lfloor k/2 \rfloor)!}
    \cdot \sum_{M \in \matchSet{k}}  \sum_{\sigma \in \sym{\lfloor k/2 \rfloor}}
    {(-1)}^{| \flip{T}{\MTour{\Mperm{M}{\sigma}}{T}} | }, \]
    where we use $\matchOrder{k} = \{\Mperm{M}{\sigma} : M \in \matchSet{k}, \sigma \in \sym{\lfloor k/2 \rfloor}  \}$.
    By \cref{lem:parity}, we have ${(-1)}^{|\flip{T}{\MTour{M}{T}}|} =
    {(-1)}^{|\flip{T}{\MTour{\Mperm{M}{\sigma}}{T}}|}$ for all $\sigma \in \sym{\lfloor k/2 \rfloor}$.
    Thus, we collect all these terms to obtain
    \[X = \frac{{(-1)}^{|E(\match{k})|}}{(\lfloor k/2 \rfloor)!}
    \cdot \sum_{M \in \matchSet{k}} (\lfloor k/2 \rfloor)!  \cdot {(-1)}^{| \flip{T}{\MTour{M}{T}}
    |} =
    {(-1)}^{|E(\match{k})|}
    \cdot \sum_{M \in \matchSet{k}} {(-1)}^{| \flip{T}{\MTour{M}{T}} |}.\]
    Lastly, by \cref{lem:card:matching}, the cardinality of $\matchSet{k}$ is odd, which
    means that $\sum_{M \in \matchSet{k}} {(-1)}^{| \flip{T}{\MTour{M}{T}} |}$ is an alternating
    sum with an odd number of terms and therefore odd, too. This proves that $X$ is odd, and in particular
    $X \neq 0$, which implies $\ae{T}{\cmatch{k}} \neq 0$.
\end{proof}

We now have everything to prove our reduction from $\CPsubProb(\{\cmatch{k}\})$ to $\tourProb(\{T\})$.

\sevenfifteentwo
\begin{proof}
    Assume that there is
    an algorithm that reads the whole input and computes $\tourProb(\{T\})$
    for any tournament of order $n$ in time $O(n^\gamma)$. Now, \cref{lem:subs:to:color}
    shows that there is an algorithm that reads the whole input\footnote{This is implicitly given
    since by assumption $\gamma \geq 2$.} and computes $\CFsubProb(\{T\})$ in time $O(n^\gamma)$.
    Since \cref{theo:cmatch:not:zero} implies that $\ae{T}{\cmatch{k}} \neq 0$,
    we continue with \cref{cor:ae:reduction} which yields an algorithm that
    computes $\CPsubProb(\{\cmatch{k}\})$ for $k$-colored graphs of order $n$ in time $O(n^{\gamma})$.
\end{proof}

\subsection{Showing that
\texorpdfstring{$\CPsubProb(\{\cmatch{k}\})$}{\#cp-Sub(Mk)} is hard}

To finish our hardness results, we show that
$\CPsubProb(\{\cmatch{k}\})$ is hard to solve. To this end, we show that
$\cmatch{k}$ has large treewidth and large clique minors.

\begin{lemma}
    \dglabel{lem:matching:tw}(Anti-matchings have unbounded treewidth)
    For every integer $k\geq 2$, the graph $\cmatch{k}$ has treewidth $k-2$.
\end{lemma}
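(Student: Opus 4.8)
The plan is to prove the two inequalities $\tw(\cmatch{k}) \le k-2$ and $\tw(\cmatch{k}) \ge k-2$ separately, both by short and essentially elementary arguments.

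\textbf{Upper bound.} I would exhibit an explicit width-$(k-2)$ tree decomposition. For $k \ge 2$ the pair $\{1,2\}$ is an edge of $\match{k}$, hence a non-edge of $\cmatch{k}$. So I take the tree on two adjacent nodes $r,s$ and set $B_r \coloneqq \setn{k}\setminus\{1\}$ and $B_s \coloneqq \setn{k}\setminus\{2\}$. Every edge of $\cmatch{k}$ is distinct from $\{1,2\}$, hence avoids $1$ or avoids $2$, and is therefore contained in $B_r$ or in $B_s$; each vertex other than $1$ and $2$ lies in both bags (a connected subtree), while $1$ lies only in $B_s$ and $2$ only in $B_r$. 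This is a valid tree decomposition of width $\max(|B_r|,|B_s|)-1 = (k-1)-1 = k-2$. (Equivalently: $\cmatch{k}$ is a subgraph of $K_k$ minus a single edge, which is an interval graph, hence chordal, with clique number $k-1$.)

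\textbf{Lower bound.} I would invoke the standard fact that every graph of treewidth at most $w$ is $w$-degenerate; in particular its minimum degree is at most its treewidth. (If a self-contained argument is preferred: pick a tree decomposition of $\cmatch{k}$ of minimum width and, among those, one with fewest nodes; a leaf bag $B_\ell$ cannot be contained in its neighbour's bag, so it contains a vertex $v$ appearing in no other bag, and then all edges incident to $v$ must be covered inside $B_\ell$, whence $N[v]\subseteq B_\ell$ and $\deg_{\cmatch{k}}(v)\le |B_\ell|-1\le \tw(\cmatch{k})$.) It then suffices to compute the minimum degree of $\cmatch{k}$: since $\match{k}$ has maximum degree $1$, every vertex of $\cmatch{k}$ is non-adjacent to at most one other vertex, so it has degree $k-2$ if it is matched in $\match{k}$ (there is at least one such vertex as $k\ge 2$) and degree $k-1$ otherwise. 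Hence $\cmatch{k}$ has minimum degree $k-2$, giving $\tw(\cmatch{k})\ge k-2$. Combining the two bounds yields $\tw(\cmatch{k}) = k-2$.

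I do not expect a genuine obstacle: both directions are one-paragraph arguments. The only points needing a little care are (i) verifying the connectivity axiom for the two-bag decomposition and noting that $k\ge 2$ is exactly what guarantees the non-edge $\{1,2\}$, and (ii) correctly reading off the minimum degree of $\cmatch{k}$, which amounts to the harmless observation that matched vertices contribute degree $k-2$ and the single unmatched vertex (present only for odd $k$) contributes degree $k-1$, so the minimum is $k-2$ in all cases.
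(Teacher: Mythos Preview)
Your proof is correct and follows essentially the same approach as the paper: the paper also uses the two-bag decomposition $V\setminus\{u\}$, $V\setminus\{v\}$ for a non-edge $\{u,v\}$ to get the upper bound, and the inequality $\tw(G)\ge \delta(G)$ together with $\delta(\cmatch{k})=k-2$ for the lower bound (citing a reference for the former, where you instead supply a self-contained argument).
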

\begin{proof}
    First note that $K_k$ is the unique graph of order $k$ and treewidth $k-1$
    (if $\{u,v\}$ is a non-edge of $G$, then there is a trivial tree-decomposition
    of $G$ consisting of two bags $V(G)\setminus \{u\}$ and $V(G)\setminus \{v\}$
    respectively, the width of which is $|V(G)|-2$),
    hence $\tw(\cmatch{k}) \leq k-2$. On the other side, the minimum degree of
    $\cmatch{k}$ is equal to $k-2$ which implies $\tw(\cmatch{k}) \geq k-2$
    (see~\cite[Lemma 4]{BODLAENDER20111103}).
\end{proof}

\begin{lemma}
    \dglabel{lem:cmatch:clique:minor}(The anti-matchings \(\cmatch{k}\) has clique-minor
    $K_{\lfloor 3k/4 \rfloor}$)
    For every integer $k \geq 2$, the graph $K_{\lfloor 3k/4 \rfloor}$ is a minor of $\cmatch{k}$.
    Further, if $k$ is odd then $K_{1 + \lfloor 3(k-1)/4 \rfloor}$ is a minor of $\cmatch{k}$.
\end{lemma}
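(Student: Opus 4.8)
The plan is to exhibit explicit branch sets of a $K_{\lfloor 3k/4\rfloor}$-minor, exploiting that $\cmatch{k}$ is (a supergraph of) a cocktail-party graph. For $k=2m$ even, the vertices split into the $m$ pairs $P_i=\{2i-1,2i\}$, and two vertices are adjacent in $\cmatch{k}$ exactly when they lie in different pairs. The key observation is that from any two pairs $\{a,b\}$ and $\{a',b'\}$ one can extract three pairwise-adjacent, connected branch sets, namely $\{a,a'\}$, $\{b\}$, and $\{b'\}$: the set $\{a,a'\}$ is connected since $a$ and $a'$ lie in different pairs, and the three sets are pairwise adjacent via the cross-pair edges $\{a',b\}$, $\{a,b'\}$, $\{b,b'\}$.

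First I would treat $k=2m$ with $m$ even, say $m=2\ell$: partition the $m$ pairs into $\ell$ groups of two pairs and apply the observation inside each group, obtaining $3\ell$ branch sets. Branch sets from different groups are automatically adjacent, since their vertices lie in different pairs, so we obtain a $K_{3\ell}=K_{\lfloor 3k/4\rfloor}$-minor. For $m=2\ell+1$ odd I would do the same for the first $2\ell$ pairs and add one extra branch set consisting of a single vertex of the last pair, which is adjacent to everything used so far; this gives $3\ell+1=\lfloor 3k/4\rfloor$ branch sets.

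Next I would handle $k$ odd. Here $\match{k}$ leaves vertex $k$ unmatched, so $k$ is an apex of $\cmatch{k}$ and $\indGraph{\cmatch{k}}{\setn{k-1}}\equiv\cmatch{k-1}$. Applying the even case to $k-1$ gives a $K_{\lfloor 3(k-1)/4\rfloor}$-minor of $\cmatch{k-1}$; adjoining $\{k\}$ as one further branch set (adjacent to all others, being an apex) yields the claimed $K_{1+\lfloor 3(k-1)/4\rfloor}$-minor. The first statement for odd $k$ then follows for free from the elementary inequality $\lfloor 3k/4\rfloor\le\lfloor 3(k-1)/4\rfloor+1$, since whenever $\cmatch{k}$ has a $K_t$-minor it also has a $K_{t'}$-minor for every $t'\le t$.

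I expect the construction itself to be routine; the only place needing care is the bookkeeping of the floor functions across the residues of $k$ modulo $4$ (and the degenerate small cases $k=2,3$), checking that the number of branch sets produced in each parity case really equals $\lfloor 3k/4\rfloor$, respectively $1+\lfloor 3(k-1)/4\rfloor$. I do not anticipate a genuine obstacle beyond this arithmetic.
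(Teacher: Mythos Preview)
Your proposal is correct and takes a genuinely different route from the paper. The paper does not construct branch sets directly; instead, for even $k$ it invokes an external result (\cite[Lemma~4.3]{DW22}) giving the Hadwiger number of the path complement $\overline{P_k}$ as $\lfloor (k+\omega(\overline{P_k}))/2\rfloor$, observes that $\omega(\overline{P_k})\ge\lceil k/2\rceil$ via the odd-indexed vertices, and then uses that $\overline{P_k}$ is an edge-subgraph of $\cmatch{k}$ to transfer the bound. The odd case is handled exactly as you do, via the apex.

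Your argument has the advantage of being entirely self-contained and constructive: the ``two pairs yield three branch sets'' observation directly exploits the cocktail-party structure of $\cmatch{k}$ and avoids any appeal to outside lemmas. The paper's route, on the other hand, is shorter to write down (given the citation) and in fact determines $\eta(\overline{P_k})$ exactly rather than just a lower bound. Both approaches give the same numerical conclusion, and your floor-function bookkeeping across the residues of $k$ modulo $4$ checks out.
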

\begin{proof}
    In the following, we define $\eta(H)$ to be the
    size of the largest clique-minor of $H$. We start by considering the case that $k$ is even.
    Let $P_k$ be a path with $k$ vertices (i.e., $E(P_k) = \{\{1, 2\}, \{2, 3\}, \dots, \{k-1, k\} \}$).
    By~\cite[Lemma 4.3]{DW22}, $\eta(\overline{P_k}) = \lfloor(k + \omega(\overline{P_k}))/2\rfloor$,
    where $\omega(\overline{P_k})$ is the size of the largest clique in $\overline{P_k}$.
    Observe that the vertices $\{1, 3, 5, \dots\}$ form a clique in $\overline{P_k}$ of size
    $\lceil k/2 \rceil$. Hence,
    \[\eta(\overline{P_k}) \geq \left\lfloor \frac{k + \lceil \frac{k}{2} \rceil}{2} \right\rfloor \geq \left\lfloor \frac{3k}{4} \right\rfloor.\]
    Now, the result follows from $\eta(\cmatch{k}) \geq \eta(\overline{P_k})$ since $\overline{P_k}$
    is an edge-subgraph of $\cmatch{k}$ and adding more edges only increase the size of
    the largest clique-minor. Thus, $\eta(\cmatch{k}) \geq \lfloor 3k/4 \rfloor$.

    If $k$ is odd then $\cmatch{k}$ is composed of an apex $x$ that is attached to a graph
    that is isomorphic to $\cmatch{k-1}$.
    By using the above construction on the $\cmatch{k-1}$-part, we obtain
    $\eta(\cmatch{k}) \geq 1 + \lfloor 3(k-1)/4 \rfloor \geq \lfloor 3k/4 \rfloor$.
\end{proof}

Having large clique-minors is important since there
is a reduction from $\CPsubProb(\{H'\})$ to $\CPsubProb(\{H\})$
whenever $H'$ is a minor of $H$ (see \cref{lem:cmatch:clique:minor}).
With this, we obtain our hardness result for $\CPsubProb(\{\cmatch{k}\})$.

\sevenfifteenseven
\begin{proof}
    By \cref{lem:cmatch:clique:minor}, $\cmatch{k}$ contains $K_{\lfloor 3k/4 \rfloor}$
    as a minor. Thus, \cref{lem:cpsub:minor} shows that there is an algorithm
    that computes $\CPsubProb(\{K_{\lfloor 3k/4 \rfloor}\})$ in time $O(n^\gamma)$.
    Next, \cref{lem:cp:cf} shows the existence of an algorithm that computes
    $\GraphCFsubProb(\{K_{\lfloor 3k/4 \rfloor}\}) = \CFcliqueProb_{\lfloor 3k/4 \rfloor}$
    in time $O(n^\gamma)$.
    Lastly, \cref{lem:CFclique:clique} implies an algorithm that
    solves $\cliqueProb_{\lfloor 3k/4 \rfloor}$ for any graph of order $n$
    in time $O(n^{\gamma})$.
\end{proof}

\subsection{Main Hardness Results for Counting Tournaments}\label{sec:counting:main:results}

We now prove our hardness results for $\tourProb(\{T\})$.

\thmsub
\begin{proof}
    Let $T$ be a tournament of order $k$ such that there is
    an algorithm that reads the whole input and computes $\tourProb(\{T\})$
    for any tournament of order $n$ in time $O(n^\gamma)$.
    By combining \cref{maintheorem:cfsub,lem:cmatch:to:clique}, we
    obtain an algorithm that computes $\cliqueProb_{\lfloor 3k/4 \rfloor}$ for graphs of order $n$
    in time $O(n^{\gamma})$.

    Assuming ETH, there is a constant $\alpha > 0$
    such that no algorithm computes $\cliqueProb_{\lfloor 3k/4 \rfloor}$
    in time $O(n^{\alpha \cdot 3k/4})$ (see \cref{lemma:eth:clique}).
    If we set $\beta = \alpha \cdot 3/4$ then this implies that no algorithm computes $\tourProb(\{T\})$
    for graphs of order $n$ in time $O(n^{\beta k})$.
\end{proof}

\subsection{Further Implications of Our Approach}

In this section we fill in some of the details left open in \cref{sec:techov}.

\begin{definition}[The pied graph $\GraphT{G}{T}$ of a labeled tournament \(T\) and a
    colored tournament \(G\)]\dglabel{def:GraphT}
    Let $T$ be a $k$-labeled tournament and $G$ be a $k$-colored tournament with coloring
    $c \colon V(G) \to \setn{k}$.

    The \emph{pied graph}%
    \footnote{``pied'' as in ``thrown into disorder'' \emph{and} ``pied'' as in ``of two
    or more colors''.}
    $\GraphT{G}{T}$ of \(T\) and \(G\) is the $k$-colored
    graph with vertex-set $V(G)$, coloring $c$, and edges $\{x, y\} \in E(\GraphT{G}{T})$
    if and only if
        $c(x) \neq c(y)$ and
        $G$ and $T$ have the same orientation on $\{x, y\}$ and $\{c(x),c(y)\}$.
\end{definition}

Next, we use pied graphs to compute $\CFsubs{T}{G}$ by using a linear combination of $\CPsubProb$-counts.

\sevenfifteeneight
\begin{proof}
    By \cref{theo:tour:ae}, we obtain
    \[\CFsubs{T}{\tourG{(\GraphT{G}{T})}{T}} = \sum_{H \in \graphs{k}} \ae{T}{H} \cdot \CPsubs{H}{\GraphT{G}{T}}.\]
    We show $\CFsubs{T}{G} = \CFsubs{T}{\tourG{(\GraphT{G}{T})}{T}}$. Observe that the tournaments $G$ and $\tourG{(\GraphT{G}{T})}{T}$ have the same
    vertex set and same coloring $c$. It is therefore enough to show that they
    have the same orientation on all edges $\{x, y\}$ with $c(x) \neq c(y)$.
    By \cref{def:GraphT}, $\{x, y\} \in E(\GraphT{G}{T})$ if
    and only if $G$ and $T$ have the same orientation on $\{x, y\}$ and $\{c(x),c(y)\}$.
    Further, by \cref{def:TourG}, $\tourG{(\GraphT{G}{T})}{T}$  and $T$
    have the same orientation on $\{x, y\}$ and $\{c(x),c(y)\}$ if and only if $\{x, y\} \in E(\GraphT{G}{T})$.

    Combining these two statements yields that $\tourG{(\GraphT{G}{T})}{T}$ and $G$ have the
    same orientation on $\{x, y\}$. Hence, $\CFsubs{T}{G} = \CFsubs{T}{\tourG{(\GraphT{G}{T})}{T}} $,
    proving the first part of the theorem.

    For the second part, first without loss of generality, we assume $\gamma \geq 2$.
    Let $G$ be a $k$-colored tournament of order $n$.
    Observe that we can compute the coefficients $\ae{T}{H}$ for a graph $H$ in time $O(g(k))$,
    where $g$ is a computable function. Further, the graph $\GraphT{G}{T}$, can be computed in
    time $O(n^2)$. Thus, we can compute $\CFsubs{T}{G}$ in time $O(2^{\binom{k}{2}} \cdot g(k) \cdot n^\gamma)$
    by evaluating $\sum_{H \in \graphs{k}} \ae{T}{H} \cdot \CPsubs{H}{\GraphT{G}{T}}$.
    This yields an algorithm that computes $\CFsubProb(\{T\})$ in time $O(n^\gamma)$ since $k$ is fixed.
\end{proof}

\theocfsub
\begin{proof}
    The statement immediately follows from \cref{theo:GraphT,cor:ae:reduction}.
\end{proof}

\begin{remark}
    The following is a reformulation of \cref{theo:cfsub:hardest:term}.
    For all $k$-labeled tournaments $T$, we have
    \[\Exp{\CFsubProb(\{T\})} = \max_{\substack{H \in \graphs{k}, \ae{T}{H} \neq 0}} \Exp{\CPsubProb(\{H\})}.\]
\end{remark}

\begin{remark}\label{remark:yuster:limits}
    By \cref{theo:cfsub:hardest:term}, the problem $\CFsubProb(\{T\})$ is exactly as hard
    as the linear combination of $\CPsubProb$-counts from \cref{theo:tour:ae}.
    Thus, our approach only transforms $\CFsubProb(\{T\})$ into a different
    problem with the same complexity. This way, we obtain
    very precise complexity results.

    In contrast, the
    previous approach by Yuster~\cite{Yuster25} transforms $\CFsubProb(\{T\})$ into an easier
    problem. For example, our approach shows
        $\Exp{\CFsubProb(\{\tranTour_k\})} \geq \Exp{\cliqueProb_{\lfloor 3k/4\rfloor}}$.
        However, Yuster's approach only shows
        $\Exp{\CFsubProb(\{\tranTour_k\})} \geq \Exp{\cliqueProb_{\lceil k/2\rceil}}$,
    since $\sig(\tranTour_{k}) = \lfloor k/2\rfloor$ (see~\cite[Lemma 2.5]{Yuster25}).\footnote{
        To see $\sig(\tranTour_{k}) = \lfloor k/2\rfloor$ observe that $\{2, 4, 6, \dots\}$
        is a signature of $\tranTour_{k}$ of size $\lfloor k/2\rfloor$.  Hence,
        $\sig(\tranTour_{k}) \leq \lfloor k/2\rfloor$.
        In contrast, if $R$ is a set of vertices such that for some $v \in \setn{k-1}$
        we have $\{v, v+1\} \cap R = \emptyset$ then we immediately obtain that
        $R$ is not a signature of $\tranTour_k$ since we can flip the edge $\{v, v+1\}$ to obtain
        an isomorphic tournament. Thus, each signature of $\tranTour_{k}$
        contains at least $\lfloor k/2\rfloor$ many vertices and therefore
        $\sig(\tranTour_{k}) \geq \lfloor k/2\rfloor$.
    }
\end{remark}

By \cref{theo:cfsub:hardest:term}, the problem $\tourProb(\{T\})$
is exactly as hard as the hardest $\CPsubProb(\{H\})$ term with $\ae{T}{H} \neq 0$.
Next, we argue why the anti-matching $\cmatch{k}$ is a
good candidate for the hardest $\CPsubProb(\{H\})$ term with $\ae{T}{H} \neq 0$.
(Recall that $\ae{T}{\cmatch{k}} \neq 0$ due to \cref{theo:cmatch:not:zero}).
We start by proving that $\ae{T}{H}$ is vanishing if $H$ has two apices
(an apex is vertex $v \in H$ that is adjacent to all other vertices in $H$).

\begin{lemma}
    \dglabel{lem:ae:apex}(Graphs with two apices have a vanishing alternating enumerator
    $\ae{T}{H} = 0$)
    Let $T$ be a $k$-labeled tournament and $H$ be a $k$-labeled graph with at least two apices, then
    $\ae{T}{H} = 0$.
\end{lemma}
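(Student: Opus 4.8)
The plan is to prove $\ae{T}{H}=0$ by a sign-reversing, fixed-point-free involution, applied after a suitable regrouping of the defining sum. Fix two apices $a,b$ of $H$ and set $V' := \setn{k}\setminus\{a,b\}$. Unfolding \cref{def:alternating:enumerator}, it suffices to show $\sum_{H'\subseteq H}(-1)^{|E(H')|}[T_{H'}\cong T]=0$, the sum ranging over edge-subgraphs $H'$ of $H$. The structural point is that, $a$ and $b$ being apices, every edge incident to $a$ or $b$ — that is, $\{a,b\}$ and all $\{a,v\},\{b,v\}$ with $v\in V'$ — lies in $E(H)$; hence in an edge-subgraph $H'$ these $2k-3$ edges can be toggled completely freely, independently of the remaining edges $S\subseteq E(H)\cap\binom{V'}{2}$.

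Using this, I reorganize the sum by first fixing $S$. Then $\indGraph{T_{H'}}{V'}$ depends only on $S$ (call it $U'$), and as the apex-incident edges of $H'$ vary, the tournament $W:=T_{H'}$ ranges bijectively over all tournaments on $\setn{k}$ with $\indGraph{W}{V'}\equiv U'$ — since such a tournament is precisely a choice of the orientation of $\{a,b\}$ together with the sets $N^+_W(a)\cap V'$ and $N^+_W(b)\cap V'$. Tracking $|E(H')|$ along this bijection, one finds that modulo $2$ the sign is $(-1)^{|S|+\epsilon(W)+|N^+_W(a)\cap V'|+|N^+_W(b)\cap V'|}$ up to a factor depending only on $T$, where $\epsilon(W)\in\{0,1\}$ records whether the arc $\{a,b\}$ of $W$ agrees with that of $T$; the two $|V'|$ terms coming from the $\{a,v\}$- and $\{b,v\}$-edges cancel, and the symmetric differences with $N^+_T(a)$, $N^+_T(b)$ only contribute the said global factor. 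So it suffices to show that for each $S$ the sum $\sum_{W:\indGraph{W}{V'}\equiv U'}(-1)^{\epsilon(W)+|N^+_W(a)\cap V'|+|N^+_W(b)\cap V'|}[W\cong T]$ vanishes.

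For this I use the involution $W\mapsto\Tperm{W}{\tau}$, with $\tau=(a\;b)$ the transposition of the two apices (i.e.\ the relabeling of $W$ along $\tau$). It is an involution; it preserves $\indGraph{W}{V'}$ because $\tau$ fixes $V'$ pointwise; it preserves $[W\cong T]$ because $\Tperm{W}{\tau}\cong W$; and it has no fixed point, because $\tau$ reverses the arc between $a$ and $b$. Finally it reverses the recorded sign: under $W\mapsto\Tperm{W}{\tau}$ we get $\epsilon(\Tperm{W}{\tau})=1-\epsilon(W)$ while the sets $N^+_W(a)\cap V'$ and $N^+_W(b)\cap V'$ are merely interchanged, so the exponent changes by exactly $1$ modulo $2$. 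Pairing $W$ with $\Tperm{W}{\tau}$ then collapses the inner sum to $0$, yielding $\ae{T}{H}=0$. I expect the sign bookkeeping in the middle paragraph to be the main obstacle: one has to account separately for the contributions of the three kinds of apex-incident edges so that precisely the single edge $\{a,b\}$ is left to produce the parity flip while all the rest cancels in pairs — and this is exactly where both apices (not just one) are needed, consistent with $\ae{T}{\cmatch{k}}\neq 0$ for odd $k$, where $\cmatch{k}$ has a single apex.
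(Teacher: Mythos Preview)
Your proof is correct and uses the same core idea as the paper: the sign-reversing, fixed-point-free involution induced by the transposition $\tau=(a\;b)$ of the two apices. The paper phrases this directly on edge-subgraphs, defining $f(F)$ by $T_{f(F)}\equiv (T_F)^\tau$ and then verifying $|E(F)\symdifOperator E(f(F))|$ is odd via a case split into edges inside $\{a,b\}$, between $\{a,b\}$ and $V'$, and inside $V'$; you instead change variables to $W=T_{H'}$, write the sign explicitly as $(-1)^{|S|+\epsilon(W)+|N^+_W(a)\cap V'|+|N^+_W(b)\cap V'|}$ up to a $T$-dependent constant, and observe that swapping $a,b$ exchanges the two neighborhood terms while flipping $\epsilon$. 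Both are the same involution with different bookkeeping for the parity step.
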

\begin{proof}
    For a fixed tournament $T$ and a graph $H$ with two apices, we define a function
    $f \colon \graphs{k} \to \graphs{k}$ with the following three properties for all $F \subseteq H$:
    \begin{enumerate}
        \item $f(F) \subseteq H$ and $f\circ f(F) \equiv F$,
        \item $T_{F} \cong T_{f(F)}$, and
        \item $|E(F)| \equiv_2 |E(f(F))| + 1 $.
    \end{enumerate}
    If such a function exists, then we can partition
    the set $\mathcal{H} = \{H' \subseteq H \colon T_{H'} \cong T\}$ into pairs $\{F, f(F)\}$.
    For a system of representatives  $\tilde{\mathcal{H}}$
    (that is, $\mathcal{H} = \tilde{\mathcal{H}} \uplus \{f(F) : F \in \tilde{\mathcal{H}}  \}$), we obtain
    \[
    \ae{T}{H} \coloneqq {(-1)}^{|E(H)|} \sum_{H' \subseteq H} {(-1)}^{|E(H')|} \left[T_{H'} \cong T\right]
     = {(-1)}^{|E(H)|}\sum_{F \in \tilde{\mathcal{H}}} \Big({(-1)}^{|E(F)|} + {(-1)}^{|E(f(F))|} \Big)
     =  0,\]
     where we use that ${(-1)}^{|E(F)|} + {(-1)}^{|E(f(F))|} = 0$ due to property 3.

    It is therefore enough to show that such a function $f$ exists.
    To this end, let $u$ and $v$ be two apices in $H$.
    Further, let $\psi = (u, v)$ be the permutation that permutes $u$ with $v$.
    We define $f$ in the following way: given a $k$-labeled graph $F$, define $f(F)$ as the $k$-labeled graph with edge set
    $E(K_k) \setminus (\flip{T}{(T_F)^\psi})$.

    First, note that $f(F)$ is the unique graph with
    \begin{align}\label{eq:T:fF}
        T_{f(F)} \equiv (T_F)^\psi,
    \end{align}
    since the non-edges of $f(F)$ are exactly the edges on which $T$ and $(T_F)^\psi$ disagree.
    This shows that $T_{F} \cong T_{f(F)}$.
    Further, \cref{eq:T:fF} shows that $f\circ f(F) = F$
    since $\psi \circ \psi = \id$. To see that $f(F) \subseteq H$, note that \cref{eq:T:fF}
    also implies that $F$ and $f(F)$ are equal on all edges $\{x, y\}$ with $x, y \notin \{u, v\}$
    since $\psi$ only changes edges adjacent to $u$ or $v$. Since $F \subseteq H$, this
    immediately yields that all edges of $f(F)$ that are non-adjacent to $u$ or $v$ are also
    in $H$. Lastly, since $u$ and $v$ are both apices, we also obtain that all edges
    adjacent to $u$ or $v$ are in $H$ which yields $f(F) \subseteq H$.

    Next, we show that $|E(F)| \equiv_2 |E(f(F))| + 1$. We show that
    $|\symdif{E(F)}{E(f(F))}|$ is odd, which yields that we have to change an odd number of
    edges to transform $F$ into $f(F)$. To this end, we define the sets $A = \{u,v\}= \{\psi(u), \psi(v)\}$
    and $B = V(T) \setminus A$. This allows us to partition the set
    $\symdif{E(F)}{E(F(f))}$ into three sets $\outE$, $\inE$, and $\intE$, where
    $\outE$ contains all edges that start and end in $B$,  $\inE$ contains all edges that start and
    end in $A$, and $\intE$ contains all edges that between $A$ and $B$.
    To prove the statement, we show
    $|\outE| \equiv_2 0$, $|\inE| \equiv_2 1$, and $|\intE| \equiv_2 0$.

    To this end, we start with $|\outE| \equiv_2 0$. Due to \cref{eq:T:fF},
    $F$ and $f(F)$ are identical on $B$, thus $\outE = \emptyset$.
    Next, note that $\inE = \{\{u, v\}\}$ since $T_{F}$ and $T_{f(F)}$ have a different
    orientation on $\{u, v\}$ due to \cref{eq:T:fF}.

    Lastly, we us show $|\intE| \equiv_2 0$. For all $z \in B$,
    write $B_z = \{\{\psi(u), z\}, \{\psi(v), z\}\}$. We show that $|B_z \cap \intE|$ is even.
    By \cref{eq:T:fF}, when going from $T_F$ to $T_{f(F)}$,
    we swap the edge $\{\psi(u), z\}$ in $T_F$ with the edge
    $\{\psi(v), z\}$ in $T_F$. Assume that $\{\psi(u), z\}$ and $\{\psi(v), z\}$ have the same
    orientation in $T_F$,\footnote{i.e., if either ($(\psi(u), z) \in \DE{T_F}$ and
    $(\psi(v), z) \in \DE{T_F}$) or ($(z, \psi(u)) \in \DE{T_F}$ and
    $(z, \psi(v)) \in \DE{T_F}$). }
    then they also have the same orientation in
    $T_{f(F)}$. Now, since $T_F$ and $T_{f(F)}$ agree on these edges, we obtain that
    $B_z \cap E(F) = B_z \cap E(f(F))$. Hence, $B_z \cap \intE = \emptyset$.

    If both edges have the opposite orientation on $T_F$,\footnote{i.e.
    if either ($(\psi(u), z) \in \DE{T_F}$ and
    $(z, \psi(v)) \in \DE{T_F}$) or ($(z, \psi(u)) \in \DE{T_F}$ and
    $(\psi(v), z) \in \DE{T_F}$). } then $T_F$ and $T_{f(F)}$
    disagree on $\{\psi(u), z\}$, and $T_F$ and $T_{f(F)}$
    disagree on $\{\psi(v), z\}$. This implies for $\{a, b\} \in B_z$
    that $\{a, b\} \in E(F)$ if and only if $\{a, b\} \notin E(f(F))$. Hence,
    $B_z \subseteq \intE$. In both cases we obtain that $|B_z \cap \intE|$ is even. This yields
    \begin{align*}
        |\intE| &= \left|\biguplus_{z \in B} \left(B_z \cap \intE\right) \right|
        \equiv_2 \sum_{z \in B} 0 \equiv_2 0
    \end{align*}
     The result now follows from
    $|\symdif{E(F)}{E(f(F))}| = |\outE| + |\intE| + |\inE| \equiv_2 0 + 0 + 1$.
\end{proof}

\begin{corollary}
    \dglabel{lem:ae:clqiue}[lem:ae:apex](The alternating enumerator vanishes for cliques)
    Let $T$ be a $k$-labeled tournament with $k \geq 2$ then $\ae{T}{K_k} = 0$.
\end{corollary}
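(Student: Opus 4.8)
The plan is to simply apply \cref{lem:ae:apex}. The only thing to observe is that in the complete graph $K_k$ every vertex is adjacent to all other vertices, so every vertex of $K_k$ is an apex. Since we assume $k \geq 2$, the graph $K_k$ has at least two vertices and hence at least two apices. Thus $H = K_k$ satisfies the hypothesis of \cref{lem:ae:apex}, and we conclude $\ae{T}{K_k} = 0$.

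There is no real obstacle here: this is a direct corollary, and the entire work was already done in the (considerably more involved) proof of \cref{lem:ae:apex}, where the flip-along-an-apex involution argument shows that whenever $H$ has two apices the edge-subgraphs $H' \subseteq H$ with $T_{H'} \cong T$ pair up with opposite edge-parity, forcing the alternating sum defining $\ae{T}{H}$ to vanish. The proof is therefore just:
\begin{proof}
    Every vertex of $K_k$ is adjacent to all other vertices, hence every vertex of $K_k$ is an apex. Since $k \geq 2$, the graph $K_k$ has at least two apices, so \cref{lem:ae:apex} applies with $H = K_k$ and yields $\ae{T}{K_k} = 0$.
\end{proof}
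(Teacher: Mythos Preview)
Your proof is correct and follows exactly the same approach as the paper: observe that $K_k$ has (at least) two apices for $k \ge 2$ and invoke \cref{lem:ae:apex}.
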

\begin{proof}
    Since $K_k$ has two apices,
    the claim directly follows from \cref{lem:ae:apex}.
\end{proof}

Note that $\cmatch{k}$ is \emph{just below} of having two apices,
meaning that $\cmatch{k}$ would obtain two apices if we add a single edge to it.
We use this to show that $\cmatch{k}$ is the densest graph with $\ae{T}{H} \neq 0$.

\sevenfifteennine
\begin{proof}
    Suppose that $|E(H)| > |E(\cmatch{k})|$. This is equivalent to $|E(\overline{H})| < |E(\match{k})| = \lfloor k/2 \rfloor$.
    Hence there are at least two isolated vertices in $\overline{H}$ implying that $H$
    has at least two apices, and therefore $\ae{T}{H} = 0$ by \cref{lem:ae:apex}.
\end{proof}

\begin{remark}
    \dglabel^{7-15-10}[lem:matching:tw]
    By~\cite{DBLP:phd/dnb/Curticapean15,DBLP:journals/toc/Marx10,CDNW24}, the problem $\CPsubProb(\{H\})$
    is harder to solve for graphs $H$ with high treewidth. Further,
    \cref{lem:cpsub:minor} shows that $\CPsubProb(\{H\})$ is at least as hard
    as $\CPsubProb(\{H'\})$ for all edge-subgraphs $H'$ of $H$. Hence,
    the problem $\CPsubProb(\{H\})$ also becomes harder to solve for denser graphs.\footnote{Also note
    that graphs with more edges tend to have a higher treewidth.} Now,
    \cref{cor:cmatch:densest} shows that $\cmatch{k}$ is the densest
    graph with $\ae{T}{H} \neq 0$, which makes $\CPsubProb(\{\cmatch{k}\})$ a good candidate for
    the complexity of $\CFsubProb(\{T\})$. Lastly, by \cref{lem:matching:tw}
    we obtain $\tw(\cmatch{k}) = k - 2$ which is also the highest possible
    treewidth for a graph with non-vanishing alternating enumerator.
    To see this, observe that $\tw(H) = k - 1$ is only possible if $H \equiv K_k$
    and $\ae{T}{K_k}$ is zero due to \cref{lem:ae:clqiue}.
\end{remark}

\clearpage
\section{The Complexity of Finding Tournaments}

In this section, we study the complexity of finding a fixed tournament $T$
inside an input tournament $T'$ (that is, deciding if $T$ is isomorphic to a subtournament
of $T'$).

\subsection{Easy Cases for Finding Tournaments}

By \cref{thm:directed_ramsey}, every tournament $T'$ of order at least $2^{k-1}$
contains a subtournament that is isomorphic to $\tranTour_k$. This immediately
yields that $\DecTourProb(\{\tranTour_k\})$ is easy to compute since we may always return
true for large enough input tournaments. In the following,
we use this observation to find other tournaments $T$ for which $\DecTourProb(\{T\})$
is also easy to solve. To this end, we split $T$ into two parts. A small
part that we may find via brute force (that is, iterate over all possibilities inside $T'$)
and a large remaining part that is isomorphic to a transitive
tournament and is therefore easy to find.

\begin{figure}[t]
    \centering
    \includegraphics[width=.45\textwidth]{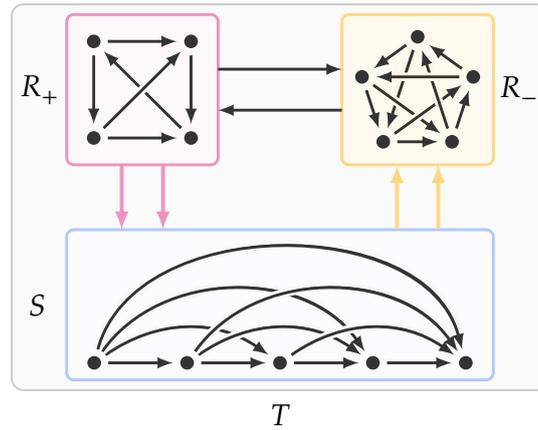}
    \caption{A spine decomposition of a tournament $T$.
        The spine $S$ forms a transitive tournament. All vertices of the $R_+$-part have
        outgoing edges toward the $S$-part and all vertices of the $R_-$-part have ingoing
        edges from the $S$-part. Edges inside $R_+ \uplus R_-$ may be oriented arbitrarily.}%
    \label{fig:neighborhood:decomposition2}
\end{figure}

\defcore

Consult \cref{fig:neighborhood:decomposition2} for a visualization of a spine decomposition.

We now show that $\DecTourProb(\{T\})$ is easy whenever $\core{T}$ is large.

\thmdecsubeasy
\begin{proof}
    Let $T$ be a tournament of order $k$. We start by computing a spine decomposition
    $(R_+, R_-, S)$ of $T$ with $c \coloneqq |R_+| + |R_c| = |V(T)| - \core{T}$.
    Observe that this can be done by iterating over all partitions $(R_+, R_-, S)$ and
    checking which of them form a spine decomposition.
    Thus, we can find $(R_+, R_-, S)$ in time $O(g(k))$ where $g$ is some compute function.
    Further, let $R_+ = \{u_1, \dots, u_a\}$ and
    $R_- = \{w_1, \dots, w_b\}$, where $a \coloneqq |R_+|$ and $b \coloneqq |R_-|$.
    We show in the following
    that $\DecTourProb(\{T\})$ can be solved in time $O(n^{c+2})$ for input tournaments of order $n$.

    For a tournament $T'$, we start by iterating through all tuples
    $(\hat{u}_1, \dots, \hat{u}_a) \in {V(T')}^a$ and
    $(\hat{w}_1, \dots, \hat{w}_b) \in {V(T')}^b$. We write
    $R'_+ \coloneqq \{\hat{u}_1, \dots, \hat{u}_a\}$,
    $R'_- \coloneqq \{\hat{w}_1, \dots, \hat{w}_b\}$ and
    check if $\varphi(u_i) = \hat{u}_i$, $\varphi(w_i) = \hat{w}_i$  defines
    an isomorphism from $\indGraph{T}{R_+ \cup R_-}$ to $\indGraph{T'}{R'_+ \cup R'_-}$. If this is the
    case, we compute
    \[N' \coloneqq \left(\bigcap_{v \in R'_+} N^{+}_{T'}(v) \right) \; \cap \;
    \left(\bigcap_{v \in R'_-} N^{-}_{T'}(v) \right).\]
    If $|N'| \geq 2^{k - c - 1}$, return true. Otherwise, check if $\indGraph{T'}{N'}$ contains
    $\tranTour_{\core{T}}$ as a subtournament via a brute-force algorithm. If this is the case return true,
    otherwise continue with the pair of tuples. Finally, return
    false after checking all pair of tuples.

    Observe that checking if $\varphi$ defines an isomorphism from
    $\indGraph{T}{R_+ \cup R_-}$ to $\indGraph{T'}{R'_+ \cup R'_-}$ can be done time $O(c^2)$. Further, the set
    $N'$ can be computed in time $O(n \cdot c)$. Finally, if $|N'| <  2^{k - c - 1}$,
    then checking if $\indGraph{T'}{N'}$ contains
    $\tranTour_{k - c}$ is in time $O(g'(k))$ for some computable function $g'$.
    Since there are $n^c$ many pair of tuples, the above algorithm runs in time
    $O(g(k) + n^c \cdot (c^2 + g'(k) \cdot n \cdot c))$ which is in $O(f(k) \cdot n^{c+2})$ for
    some computable function $f$.~\footnote{It is actually in time $O(f(k) \cdot n^{c+1})$. However,
    we still want that our algorithm reads the whole input if $c = 0$.}
    If $k$ is fixed then this is
    in $O(n^{c+2})$.

    To prove the correctness, we start by assuming that our algorithm returns
    true on input $T'$. Note that this can only happen if there are vertices
    $(\hat{u}_1, \dots, \hat{u}_a) \in {V(T')}^a$
    and $(\hat{w}_1, \dots, \hat{w}_b) \in {V(T')}^b$
    such that $\varphi(u_i) = \hat{u}_i$, $\varphi(w_i) = \hat{w}_i$ defines
    an isomorphism from $\indGraph{T}{R_+ \cup R_-}$ to $\indGraph{T'}{R'_+ \cup R'_-}$ and
    $N'$ either contains at least $2^{k - c - 1}$ many vertices
    or $\indGraph{T'}{N'}$ contains  $\tranTour_{\core{T}}$ as a subtournament. If
    $|N'| \geq 2^{k - c - 1}$ then $\indGraph{T'}{N'}$ contains $\tranTour_{\core{T}}$
    due to \cref{thm:directed_ramsey}. Let $M' \subseteq N'$ such that $\indGraph{T'}{M'}$
    is isomorphic to  $\tranTour_{\core{T}}$. Since $(R_+, R_-, S)$ is a spine decomposition
     of $T$, we obtain that $\indGraph{T'}{R'_+ \uplus R'_- \uplus M'}$
    is isomorphic to $T$.

    In contrast, let $A \subseteq V(T')$ be set of vertices such that
    there is an isomorphism $\varphi$
    from $\indGraph{T'}{A}$ to $T$. Set $(\hat{u}_1 = \varphi^{-1}(u_1), \dots,
    \hat{u}_a = \varphi^{-1}(u_a)) \in {V(T')}^a$, $(\hat{w}_1 = \varphi^{-1}(w_1), \dots,
    \hat{w}_b = \varphi^{-1}(w_b)) \in {V(T')}^b$, $R'_+ \coloneqq \{\hat{u}_1, \dots, \hat{u}_a\}$,
    and $R'_- \coloneqq \{\hat{w}_1, \dots, \hat{w}_b\}$
    then  $\varphi(u_i) = \hat{u}_i$, $\varphi(w_i) = \hat{w}_i$  defines
    an isomorphism from $\indGraph{T}{R_+ \cup R_-}$ to $\indGraph{T'}{R'_+ \cup R'_-}$. Since $(R_+, R_-, S)$ is a
    spine decomposition of $T$,
    we obtain that $\indGraph{T'}{N'}$
    contains $\tranTour_{\core{T}}$ as a subtournament. Observe that our algorithm
    successively detects if this is the case since it iterates through all possible tuples.

    Lastly, let $\mathcal{T}$ be a set of tournaments such that there
    is a constant $c$ with $|V(T)| - \core{T} \leq c$ for all $T \in \mathcal{T}$, then
    we can use the algorithm from above to compute $\DecTourProb(\mathcal{T})$
    in time $O(f(k) \cdot n^{c+2})$, proving that $\DecTourProb(\mathcal{T})$ is FPT.
\end{proof}

\begin{remark}
    \Cref{theo:core:easy} shows that $\DecTourProb(\{T\})$ is easy for some tournaments $T$
    that are close to being transitive. On the other side, there are tournaments that
    are close to being transitive for which \cref{theo:core:easy} fails. For example,
    let $F_k$ the tournament obtained from $\tranTour_k$ by flipping the edge
    $\{\lfloor k/2\rfloor - 1, \lfloor k/2\rfloor + 1 \}$.
    Now, $F_k$ is very close to being transitive but $|V(F_k)| - \core{F_k} \geq \lfloor k/2\rfloor - 1$.
\end{remark}

\subsection{Analyzing Tournaments that Have a Large TT-unique Partition}

From the last section, we know that $\DecTourProb(\{T\})$ is easy for particular
tournaments that are close to being transitive. However, almost all tournaments are far away
from being transitive. Hence, in this section we show that $\DecTourProb(\{T\})$
is almost surely hard for a random tournament $T$ of order $k$. Here,
hard means that we can use $\DecTourProb(\{T\})$ to solve
$\DecCliqueProb_{\lfloor k/(9\log(k)) \rfloor}$.

\TTunique

Given a TT-unique partition of $T$ and an input tournament $G$,
we now show how to construct a tournament $G^\ast$ such that $G^\ast$ has a colorful clique
if and only if $T$ is isomorphic to subtournament of $G$.

\begin{figure}[tp]
    \centering
    \includegraphics{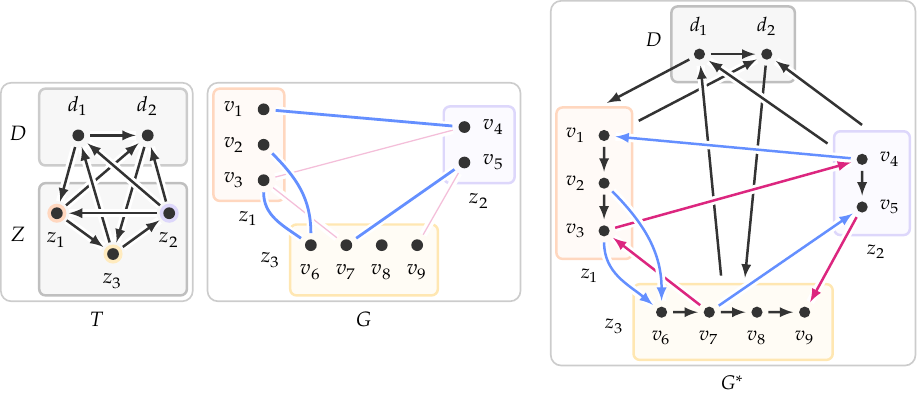}
    \caption{A tournament $T$ with a TT-unique partition $(\dSet, \zSet)$, a
        graph $G$, and the corresponding tournament \(G^\ast\).
        Edges of \(G\) are depicted in blue (and thick); non-edges of \(G\) are depicted
        in red (and thin); we depict only a subset of the (non-)edges.
        Further, in \(G^\ast\), each of the components (\(\zCard_1, \zCard_2, \zCard_3,\) and \(\dSet\))
        induce a transitive tournament.
        In \(G^\ast\), blue arcs between vertices of components correspond
        to edges in \(G\) and thus have the same orientation as the corresponding edge of
        the tournament \(T\).
        In \(G^\ast\), red arcs between vertices of components correspond
        to non-edges in \(G\) and thus have the opposite orientation as the corresponding edge of
        the tournament \(T\).}%
    \label{fig:dec:reduction}
\end{figure}

\ttReduction
\begin{proof}
	Let $k$ be the order of $T$.
    Without loss of generality, we assume
    $\zSet = [\zCard]$  and $\dSet = \{\zCard+1, \dots, \zCard + |\dSet|\}$.
    We write $\dCard \coloneqq |\dSet|$, $V(G) = \{v_{1}, \dots, v_{n}\}$ for the vertices of $G$,
    and $c \colon V(G) \to [\zSet]$ for the coloring of $G$.

    We construct $G^\ast$ in the following way. The vertex set of $G^\ast$ is
    $V(G^\ast) \coloneqq V(G) \uplus \dSet^\ast$, where $\dSet^\ast \coloneqq \{v_{n+1}, \dots, v_{n+\dCard}\}$
    is a set of $\dCard$ new vertices. We define a coloring $c^\ast \colon V(G^\ast) \to [\dCard+\zCard]$ via
    \[c^\ast(v_i) = \begin{cases}
        c(v_i), \quad &\text{if $i \leq n$} \\
        (i - n) + \zCard, \quad &\text{if $i > n$}
    \end{cases}.\]
    Observe that $c^\ast$ is equal to $c$ on all vertices in $V(G)$ and
    that it maps $v_{n+i} \in \dSet^\ast$ to $\zCard+i \in \dSet$. Even though $c^\ast$ defines a coloring on $G^\ast$,
    we still consider $G^\ast$ to be an uncolored tournament.
    The orientation of $\{v_i, v_j\}$ in $G^\ast$ is defined in the following way:
    \begin{itemize}
        \item If $i, j \in V(G)$ and $c(v_i) = c(v_j)$ then $(v_i, v_j) \in \DE{G^\ast}$
        if and only if $i < j$. Note that $\indGraph{G^\ast}{c^{-1}(i)}$ is a transitive tournament.
        \item If $i, j \in V(G)$, $c(v_i) \neq c(v_j)$, $\{v_i, v_j\} \in E(G)$ then
        $(v_i, v_j)\in \DE{G^\ast}$ if and only if $(c^\ast(v_i),c^\ast(v_j))\in \DE{T}$.
        \item If $i, j \in V(G)$, $c(v_i) \neq c(v_j)$, $\{v_i, v_j\} \notin E(G)$
        then $(v_i, v_j)\in \DE{G^\ast}$ if and only if $(c^\ast(v_j),c^\ast(v_i))\in \DE{T}$.
        \item If $v_i \in \dSet^\ast$ or $v_j \in \dSet^\ast$ then
        $(v_i, v_j)\in \DE{G^\ast}$ if and only if $(c^\ast(v_i), c^\ast(v_j))\in \DE{T}$.
        Note that $\indGraph{G^\ast}{\dSet^\ast} \cong \indGraph{T}{\dSet}$.
    \end{itemize}
    See \cref{fig:dec:reduction} for an example of $G^\ast$. Note that $G^\ast$
    can be constructed in time $O({(n + \dCard)}^2)$. It remains to show that
    $T$ is isomorphic to a subtournament of $G^\ast$ if and only if $G$ contains a colorful $\zCard$-clique.

    First, assume that there is a set of vertices $A \subseteq V(G)$ such that
    $\indGraph{G}{A}$ is a colorful $\zCard$-clique. We show that $\indGraph{G^\ast}{A \uplus \dSet^\ast}$
    is isomorphic to $T$ via $c^\ast$. Let $v_i, v_j \in A \uplus \dSet^\ast$. If $v_i \in \dSet^\ast$
    or $v_j \in \dSet^\ast$ then by construction $(v_i, v_j)\in \DE{G^\ast}$ if and only if
    $(c^\ast(v_i), c^\ast(v_j))\in \DE{T}$. In contrast, if $v_i, v_j \in V(G)$
    then $c(v_i) \neq c(v_j)$ since $A$ is colorful. Furthermore, since $\{v_i, v_j\} \in E(G)$
    we obtain that $(v_i, v_j)\in \DE{G^\ast}$ if and only if $(c^\ast(v_i),c^\ast(v_j))\in \DE{T}$.
    Thus, $c^\ast$ is an isomorphism from  $\indGraph{G^\ast}{A \uplus \dSet^\ast}$
    is isomorphic to $T$. This proves that if $V(G)$ contains a colorful $\zCard$-clique then
    there is a subtournament of $G^\ast$ that is isomorphic to $T$.

    Next, let $A^\ast \subseteq V(G^\ast)$ be a set of vertices such that $\indGraph{G}{A^\ast}$
    is isomorphic to $T$ via an isomorphism $\varphi \colon A^\ast \to V(T)$. Observe that $|A^\ast| = k$.
    To show that $G$ contains a
    colorful $\zCard$-clique, we first justify that $\varphi$ is equal to $c^\ast$, which requires multiple steps.
    We start by proving that there are many vertices in $\dSet^\ast \cap A^\ast$
    that get mapped to $\dSet$ via $\varphi$.
    \begin{claim}\label{claim:TT:card}
        Let $\delta \coloneqq \dCard - \TT{T} \cdot \zCard = k-\TT{T}\cdot \zCard - \zCard$, then
        $|\varphi^{-1}(\dSet) \cap \dSet^\ast|\geq \delta$.
    \end{claim}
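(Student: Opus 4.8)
The plan is to combine two structural facts about the construction of $G^\ast$: that $\indGraph{G^\ast}{A^\ast}$ is isomorphic to $T$ and hence contains no transitive subtournament of order larger than $\TT{T}$, and that, by construction, any set of vertices of $V(G)$ sharing the same color induces a transitive subtournament of $G^\ast$ (this is the first case in the definition of the orientation of $G^\ast$). First I would record that, since $\varphi$ is a bijection onto $V(T)$, we have $|\varphi^{-1}(\dSet)| = |\dSet| = \dCard$, and that (because $A^\ast \subseteq V(G) \uplus \dSet^\ast$) the set $\varphi^{-1}(\dSet)$ splits as the disjoint union of $\varphi^{-1}(\dSet)\cap V(G)$ and $\varphi^{-1}(\dSet)\cap\dSet^\ast$. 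Thus it suffices to show $|\varphi^{-1}(\dSet)\cap V(G)| \leq \TT{T}\cdot\zCard$.

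For this bound I would in fact estimate the potentially larger set $A^\ast \cap V(G)$. Partition $A^\ast\cap V(G)$ according to the coloring $c$ of $G$; there are at most $\zCard$ nonempty parts, since $c$ takes only values in $\setn{\zCard}$. Each part consists of equally $c$-colored vertices of $V(G)$, so by the first case in the definition of $G^\ast$ it induces a transitive subtournament of $\indGraph{G^\ast}{A^\ast}$. As $\indGraph{G^\ast}{A^\ast}\cong T$, each such subtournament has at most $\TT{T}$ vertices, so summing over the at most $\zCard$ parts gives $|A^\ast\cap V(G)| \leq \TT{T}\cdot\zCard$.

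Putting this together, $|\varphi^{-1}(\dSet)\cap\dSet^\ast| = \dCard - |\varphi^{-1}(\dSet)\cap V(G)| \geq \dCard - |A^\ast\cap V(G)| \geq \dCard - \TT{T}\cdot\zCard = \delta$, which is exactly the claim. I do not expect a genuine obstacle here: the only point that really needs to be spelled out is that each $c$-color class inside $A^\ast\cap V(G)$ induces a transitive subtournament of $\indGraph{G^\ast}{A^\ast}$ — this is immediate from the construction, but it is the observation that drives the whole count, so I would invoke it explicitly before performing the summation.
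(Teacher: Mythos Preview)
Your proposal is correct and follows essentially the same approach as the paper: both hinge on the observation that each $c$-color class inside $A^\ast\cap V(G)$ induces a transitive subtournament of $\indGraph{G^\ast}{A^\ast}$, and hence $|A^\ast\cap V(G)|\le \TT{T}\cdot\zCard$ by pigeonhole. The only difference is bookkeeping: the paper first bounds $|\dSet^\ast\cap A^\ast|\ge k-\TT{T}\cdot\zCard$ and then subtracts the at most $\zCard$ elements that $\varphi$ sends to $\zSet$, whereas you start from $|\varphi^{-1}(\dSet)|=\dCard$ and subtract the part lying in $V(G)$; both arrive at the same inequality.
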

    \begin{claimproof}
        We first show that
        $|\dSet^\ast \cap A^\ast| \geq k - \TT{T} \cdot \zCard$. Assume for a contradiction
        that there are more than $\TT{T} \cdot \zCard$ vertices $x$ in $A^\ast$
        with $c^\ast(x) \leq \zCard$ (i.e., $x \in V(G)$). By the pigeonhole-principle,
        we obtain that there is an $i \in [\zCard]$ such that  $S_i \coloneqq \{x \in A^\ast : c^\ast(x) = i\}$
        has strictly more than $\TT{T}$ vertices.
        By construction of $G^\ast$, $\indGraph{G^\ast}{S_i}$ is a
        transitive subtournament of $\indGraph{G^\ast}{A^\ast}$ of order at least $\TT{T} + 1$ which
        is a contradiction since  $\indGraph{G^\ast}{A^\ast}$ is isomorphic to $T$ and therefore
        $\TT{\indGraph{G^\ast}{A^\ast}} = \TT{T}$.

        This shows that $ |\dSet^\ast \cap A^\ast | \geq k - \TT{T} \cdot \zCard$. Observe that
        there are at most $\zCard$ elements in $ \dSet^\ast \cap A^\ast$ that get mapped to $\zSet$ via
        $\varphi$ since $\varphi$ is injective. Thus,
        $|\varphi^{-1}(\dSet) \cap \dSet^\ast \cap A^\ast| = |\varphi^{-1}(\dSet) \cap \dSet^\ast| \geq k - \TT{T} \cdot \zCard - \zCard$.
    \end{claimproof}
    In the following, we define $X \coloneqq \varphi^{-1}(\dSet) \subseteq A^\ast$
    and $Y \coloneqq \varphi^{-1}(\zSet) \subseteq A^\ast$.
    Observe that $(X, Y)$ is TT-unique with respect to $\indGraph{G^\ast}{A^\ast}$
    since $\varphi$ is an isomorphism from $\indGraph{G^\ast}{A^\ast}$ to $T$.
    Now, \cref{claim:TT:card} ensures that $X \cap \dSet^\ast$ is a large
    subset of $X$. Using \cref{def:TT:unique}, along the next three claims we
    show that $\varphi(v_i) = c^\ast(v_i)$ for all $v_i \in A^\ast$.
    \begin{claim}\label{claim:TT:dif}
        For all vertices $u, v \in A^\ast$ with $u, v \in V(G)$, we have $c^\ast(v) \neq c^\ast(u)$.
    \end{claim}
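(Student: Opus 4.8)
The plan is to argue by contradiction using the third property of TT-uniqueness. Suppose there were distinct vertices $u, v \in A^\ast \cap V(G)$ with $c^\ast(u) = c^\ast(v)$. Recall from the paragraph preceding the claim that, writing $T' \coloneqq \indGraph{G^\ast}{A^\ast}$, the partition $(X, Y)$ with $X = \varphi^{-1}(\dSet)$ and $Y = \varphi^{-1}(\zSet)$ is TT-unique with respect to $T'$ (because $\varphi$ is an isomorphism $T' \to T$ and TT-uniqueness transfers along isomorphisms). I would apply the third property of \cref{def:TT:unique} to $T'$ with the distinguished subset $\dSet' \coloneqq X \cap \dSet^\ast \subseteq X$.

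First I would check that $\dSet'$ is an admissible choice, i.e.\ that $|\dSet'| \geq |X| - \TT{T'}\cdot|Y|$. Here $|X| = \dCard$, $|Y| = \zCard$, and $\TT{T'} = \TT{T}$ since $T' \cong T$; hence the required bound is exactly $|\dSet'| \geq \dCard - \TT{T}\cdot\zCard = \delta$, which is precisely what \cref{claim:TT:card} gives (as $X = \varphi^{-1}(\dSet)$). Moreover $\dSet' \subseteq \dSet^\ast$ is disjoint from $V(G)$, so $u, v \in A^\ast \setminus \dSet'$, meaning $u, v$ are indeed a pair of distinct vertices to which the third property applies. Thus the third property yields $N^{-}_{T'}(u) \cap \dSet' \neq N^{-}_{T'}(v) \cap \dSet'$, and it remains to contradict this.

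The key observation is that every $w \in \dSet'$ lies in $\dSet^\ast$, so by the fourth bullet in the construction of $G^\ast$ the orientation of the arc between $w$ and any other vertex $x$ is determined solely by $(c^\ast(w), c^\ast(x))$: concretely $(w, u) \in \DE{G^\ast} \iff (c^\ast(w), c^\ast(u)) \in \DE{T}$ and likewise for $v$. Since $c^\ast(u) = c^\ast(v)$, these two conditions coincide, so $w \in N^{-}_{G^\ast}(u) \iff w \in N^{-}_{G^\ast}(v)$. As $w, u, v \in A^\ast$ this also gives $w \in N^{-}_{T'}(u) \iff w \in N^{-}_{T'}(v)$, and since $w \in \dSet'$ was arbitrary we conclude $N^{-}_{T'}(u) \cap \dSet' = N^{-}_{T'}(v) \cap \dSet'$, contradicting the previous paragraph. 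Hence no such pair $u, v$ exists, proving the claim. The only mildly delicate point is the bookkeeping that identifies the admissibility bound of the third TT-uniqueness property with the cardinality guarantee of \cref{claim:TT:card}; everything else is a direct unwinding of the edge-orientation rules defining $G^\ast$.
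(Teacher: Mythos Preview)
Your proof is correct and follows essentially the same approach as the paper: set $\dSet' = X \cap \dSet^\ast$, use \cref{claim:TT:card} to verify the size bound, and then derive a contradiction to the third TT-uniqueness property by observing that the construction of $G^\ast$ forces $u$ and $v$ to have identical in-neighborhoods within $\dSet^\ast$ whenever $c^\ast(u)=c^\ast(v)$. You spell out more of the bookkeeping (transferring TT-uniqueness along $\varphi$, matching the admissibility bound to $\delta$), but the argument is the same.
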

    \begin{claimproof}
        Assume otherwise, then there are two vertices $u, v \in V(G)$
        such that $c^\ast(u) = c^\ast(v)$.
        By \cref{claim:TT:card}, the set
        $X' = X \cap \dSet^\ast$ has at least $\delta$ elements.
        By construction of $G^\ast$, we obtain that $u, v \notin X'$
        since $u, v \notin \dSet^\ast$. Further,
        $\Tvec{u}{X'} = \Tvec{v}{X'}$ since all vertices with the same
        color have the same orientation towards vertices of $\dSet^\ast$.
        Hence, the existence of $u$ and $v$ contradictions to the TT-uniqueness
        of $(X, Y)$.
    \end{claimproof}
    Using \cref{claim:TT:dif}, we now show that $\varphi$ and $c^\ast$ coincide on
    all vertices that live in $\dSet^\ast$.
    \begin{claim}\label{claim:TT:R}
        $\dSet^\ast$ is a subset of $A^\ast$ and $\varphi(v_i) = c^\ast(v_i)$ for all $v_i \in \dSet^\ast$.
    \end{claim}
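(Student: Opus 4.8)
The plan is to pin down $\varphi$ on $\dSet^\ast$ exactly, by feeding the three defining properties of a TT-unique partition, together with the fact (recorded in the construction of $G^\ast$) that $\indGraph{G^\ast}{\dSet^\ast} \cong \indGraph{T}{\dSet}$ via $c^\ast$, into one short argument. I would split this into two steps: first show $\dSet^\ast \subseteq A^\ast$, then identify the image $\varphi(\dSet^\ast)$ and finally the map itself.

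\emph{Step 1: $\dSet^\ast \subseteq A^\ast$.} I would bound $|A^\ast \cap V(G)|$ using \cref{claim:TT:dif}: distinct vertices of $A^\ast \cap V(G)$ receive distinct values under $c^\ast$, and $c^\ast$ restricted to $V(G)$ takes values in $[\zCard]$, so $|A^\ast \cap V(G)| \leq \zCard$. Since $|A^\ast| = |V(T)| = k = \dCard + \zCard$, this forces $|A^\ast \cap \dSet^\ast| \geq \dCard = |\dSet^\ast|$, hence $\dSet^\ast \subseteq A^\ast$ (and, as a byproduct, $|A^\ast \cap V(G)| = \zCard$, which is convenient later).

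\emph{Step 2: locating $\varphi(\dSet^\ast)$ and then $\varphi|_{\dSet^\ast}$.} Because $\dSet^\ast \subseteq A^\ast$, restricting the isomorphism $\varphi$ yields an isomorphism $\indGraph{G^\ast}{\dSet^\ast} \to \indGraph{T}{\varphi(\dSet^\ast)}$; combined with $\indGraph{G^\ast}{\dSet^\ast} \cong \indGraph{T}{\dSet}$ this shows $\indGraph{T}{\varphi(\dSet^\ast)} \cong \indGraph{T}{\dSet}$, so the second property of TT-uniqueness ($\subs{\indGraph{T}{\dSet}}{T} = 1$) gives $\varphi(\dSet^\ast) = \dSet$ — in particular $\dSet^\ast = \varphi^{-1}(\dSet) = X$. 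Now both $\varphi|_{\dSet^\ast}$ and $c^\ast|_{\dSet^\ast}$ are isomorphisms from $\indGraph{G^\ast}{\dSet^\ast}$ to $\indGraph{T}{\dSet}$, so $\varphi|_{\dSet^\ast} \circ (c^\ast|_{\dSet^\ast})^{-1}$ is an automorphism of $\indGraph{T}{\dSet}$; the first property of TT-uniqueness says this group is trivial, whence $\varphi(v_i) = c^\ast(v_i)$ for every $v_i \in \dSet^\ast$, which is the claim.

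Every step is short, so I do not expect a genuine obstacle. The only points that need care are the routine bookkeeping that $\indGraph{\indGraph{G^\ast}{A^\ast}}{\dSet^\ast} = \indGraph{G^\ast}{\dSet^\ast}$ (valid since $\dSet^\ast \subseteq A^\ast$ and induced subtournaments compose) and being consistent about whether the ``appears exactly once'' and ``trivial automorphism group'' conditions are invoked for $\indGraph{T}{\dSet}$ inside $T$ or for $\indGraph{\indGraph{G^\ast}{A^\ast}}{X}$ inside $\indGraph{G^\ast}{A^\ast}$ — both readings are available, since $(\dSet,\zSet)$ is TT-unique with respect to $T$ while $(X,Y)$ is TT-unique with respect to $\indGraph{G^\ast}{A^\ast}$.
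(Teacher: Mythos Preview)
Your proposal is correct and follows essentially the same approach as the paper: Step~1 uses \cref{claim:TT:dif} to cap $|A^\ast \cap V(G)| \leq \zCard$ and hence force $\dSet^\ast \subseteq A^\ast$ (the paper phrases this by contradiction, you phrase it directly), and Step~2 composes $\varphi|_{\dSet^\ast}$ with $(c^\ast|_{\dSet^\ast})^{-1}$ to land in $\indGraph{T}{\dSet}$, then invokes the ``appears once'' property to get $\varphi(\dSet^\ast) = \dSet$ and the trivial-automorphism property to conclude $\varphi|_{\dSet^\ast} = c^\ast|_{\dSet^\ast}$ --- exactly as in the paper.
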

    \begin{claimproof}
        We first show $\dSet^\ast \subseteq A^\ast$. Assume otherwise, then $|\dSet^\ast \cap A^\ast| < \dCard$.
        Since $k = \dCard + \zCard$, we obtain that $|V(G) \cap A^\ast| \geq \zCard+1$. By the pigeonhole-principle, we
        now obtain two vertices
        $u, v \in A^\ast$ with $u, v \in V(G)$ and $c^\ast(v) = c^\ast(u)$,
        a contradiction to  \cref{claim:TT:dif}. Hence, $\dSet^\ast \subseteq A^\ast$.
        
        To show $\varphi(v_i) = c^\ast(v_i)$ for all $v_i \in \dSet^\ast$, we first use that
        ${(c^\ast)}^{-1}$ restricted to $\dSet$ is an isomorphism from
        $\indGraph{T}{\dSet}$ to $\indGraph{G^\ast}{\dSet^\ast}$.
        Further, since $\dSet^\ast \subseteq A^\ast$, we obtain that $\varphi$ restricted to $\dSet^\ast$
        is an isomorphism from $\indGraph{G^\ast}{\dSet^\ast}$ to a subtournament $T'$ of $T$. 
        Thus, $\varphi \circ {(c^\ast)}^{-1}$ defines an isomorphism
        from $\indGraph{T}{\dSet}$ to $T'$. Now, the TT-uniqueness of $(\dSet, \zSet)$
        yields that $T' = \indGraph{T}{\dSet}$ since otherwise $T$ would contain 
        two different isomorphic copies of $\indGraph{T}{\dSet}$. Hence,
        $\varphi \circ {(c^\ast)}^{-1}$ is an automorphism which further implies 
        that $\varphi \circ {(c^\ast)}^{-1} = \id_R$ since $\indGraph{T}{\dSet}$
        has only trivial automorphisms. Because $\varphi$ restricted to $\dSet^\ast$
        and ${(c^\ast)}^{-1}$ restricted to $\dSet$ are both bijections, we obtain
        $\varphi(v_i) = c^\ast(v_i)$ for all $v_i \in \dSet^\ast$.
    \end{claimproof}

    \Cref{claim:TT:R} allows us to define $\zSet^\ast \coloneqq A^\ast \setminus \dSet^\ast$.
    Observe that $\zSet^\ast \subseteq V(G)$ and $|\zSet^\ast| = k - \dCard = \zCard$. In remains
    to show that $\varphi$ and $c^\ast$ coincide on $\zSet^\ast$.
    \begin{claim}\label{claim:TT:phi}
        The set $A^\ast$ is colorful with respect to $c^\ast$
        and $\varphi(v_i) = c^\ast(v_i)$ for all $v_i \in A^\ast$.
    \end{claim}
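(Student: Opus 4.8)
The plan is to establish the two assertions of the claim separately, both from the already-proved structural facts \cref{claim:TT:card,claim:TT:dif,claim:TT:R} together with the third property of TT-uniqueness from \cref{def:TT:unique}. Throughout I set $\zSet^\ast \coloneqq A^\ast \setminus \dSet^\ast$; by \cref{claim:TT:R} we have $\dSet^\ast \subseteq A^\ast$, so $\zSet^\ast \subseteq V(G)$ and $|\zSet^\ast| = |A^\ast| - |\dSet^\ast| = k - \dCard = \zCard$.

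For colorfulness: \cref{claim:TT:dif} shows that distinct vertices of $A^\ast$ lying in $V(G)$ receive distinct colors under $c^\ast$, so $c^\ast$ restricted to $\zSet^\ast$ is an injection into $[\zCard] = \zSet$, and since $|\zSet^\ast| = \zCard$ it is a bijection $\zSet^\ast \to \zSet$. By construction $c^\ast$ restricted to $\dSet^\ast$ is a bijection $\dSet^\ast \to \dSet$. Gluing the two shows $c^\ast$ restricted to $A^\ast$ is a bijection onto $\zSet \uplus \dSet = [k]$, i.e. $A^\ast$ is colorful.

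For the agreement $\varphi = c^\ast$ on $A^\ast$: \cref{claim:TT:R} already gives it on $\dSet^\ast$ (and in particular $\varphi(\dSet^\ast) = c^\ast(\dSet^\ast) = \dSet$, hence $\varphi(\zSet^\ast) = V(T)\setminus\dSet = \zSet$). It remains to treat a vertex $v_i \in \zSet^\ast$; note $\varphi(v_i) \in \zSet$ and $c^\ast(v_i) \in [\zCard] = \zSet$, so both lie in $V(T)\setminus\dSet$. The idea is to show they share the same in-neighborhood inside $\dSet$. I would compute $N^-_{G^\ast}(v_i)\cap\dSet^\ast$ in two ways: first, every arc between $v_i$ and a vertex $v_j \in \dSet^\ast$ is oriented in $G^\ast$ exactly as the corresponding arc in $T$ under $c^\ast$ (the last case in the definition of $G^\ast$), giving $N^-_{G^\ast}(v_i)\cap\dSet^\ast = \{v_j\in\dSet^\ast : c^\ast(v_j)\in N^-_T(c^\ast(v_i))\}$; second, $\varphi$ is an isomorphism $\indGraph{G^\ast}{A^\ast}\cong T$ and agrees with $c^\ast$ on $\dSet^\ast$, giving $N^-_{G^\ast}(v_i)\cap\dSet^\ast = \{v_j\in\dSet^\ast : c^\ast(v_j)\in N^-_T(\varphi(v_i))\}$. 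Since $c^\ast$ maps $\dSet^\ast$ onto $\dSet$, comparing the two yields $N^-_T(c^\ast(v_i))\cap\dSet = N^-_T(\varphi(v_i))\cap\dSet$. Finally, applying the third property of \cref{def:TT:unique} with $\dSet' = \dSet$ (whose size trivially satisfies $|\dSet|\ge |\dSet| - \TT{T}\cdot|\zSet|$), two distinct vertices of $V(T)\setminus\dSet$ have distinct $\dSet$-restricted in-neighborhoods, so $\varphi(v_i) = c^\ast(v_i)$, completing both parts.

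I do not anticipate a genuine obstacle: once \cref{claim:TT:R} is available the argument is essentially bookkeeping. The two points that need care are (i) using the correct witness set in the third TT-uniqueness property — here the full $\dSet$ works, precisely because \cref{claim:TT:R} has already identified $\dSet^\ast$ with $\varphi^{-1}(\dSet)$, so no shrinking is needed (unlike in \cref{claim:TT:dif}, where $\dSet^\ast\cap A^\ast$ was not yet fully controlled and one had to invoke the weaker bound of \cref{claim:TT:card}); and (ii) keeping the orientation convention consistent, phrasing everything in terms of in-neighborhoods $N^-$ to match the $\Tvec{v}{\dSet'}$ notation of \cref{def:TT:unique}. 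With this claim in hand the proof of \cref{theo:TT:construct} finishes quickly: $\varphi = c^\ast$ forces each non-adjacency of $G$ inside $\zSet^\ast$ to be flipped relative to $T$, which is impossible since $T$ is a tournament, so $\zSet^\ast$ is a colorful $\zCard$-clique of $G$.
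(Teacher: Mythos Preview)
Your proposal is correct and follows essentially the same approach as the paper: both arguments derive colorfulness from \cref{claim:TT:dif} and the structure of $c^\ast$ on $\dSet^\ast$, and both establish $\varphi=c^\ast$ on $\zSet^\ast$ by computing $N^-_{G^\ast}(v_i)\cap\dSet^\ast$ in two ways (via the definition of $G^\ast$ and via the isomorphism $\varphi$, using $\varphi|_{\dSet^\ast}=c^\ast|_{\dSet^\ast}$ from \cref{claim:TT:R}) and then invoking the third TT-uniqueness property with $\dSet'=\dSet$. Your discussion is in fact slightly more explicit than the paper about why both $\varphi(v_i)$ and $c^\ast(v_i)$ land in $\zSet=V(T)\setminus\dSet$, which is needed to apply that property.
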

    \begin{claimproof}
        We start by showing that $A^\ast$ is colorful with respect to $c^\ast$.
        By construction $c^\ast$ is injective on $\dSet^\ast \subseteq A^\ast$. Further, by \cref{claim:TT:dif},
        there are no two vertices $u, v \in A^\ast \cap V(G)$ with $c^\ast(u) = c^\ast(v) \in \zSet$.
        Thus, $c^\ast$ restricted to $A^\ast$ is injective and therefore bijective, showing that
        $A^\ast$ is colorful.

        For the second part of the statement observe that we already
        know that $\varphi(v_i) = c^\ast(v_i)$ for all $v_i \in \dSet^\ast$ due to \cref{claim:TT:R}.
        What remains is to prove that $\varphi(v_i) = c^\ast(v_i)$ for all $v_i \in \zSet^\ast$.
        By construction of $G^\ast$, we obtain for all $v \in \zSet^\ast$, $x \coloneqq c^\ast(v)$,
        and $\dCard \in \dSet^\ast$, that
        \begin{align}\label{eq:vec:R}
            \dCard \in \Tvec{v}{\dSet^\ast} \;\text{ if and only if }\;  c^\ast(\dCard) \in \Tvec{x}{\dSet}.
        \end{align}
        Set $y \coloneqq \varphi(v)$. Since $\varphi$ is an isomorphism and $\varphi(\dCard) = c^\ast(\dCard)$
        for all $\dCard \in \dSet^\ast$, we obtain
        \begin{align}\label{eq:vec:R2}
            \dCard \in \Tvec{v}{\dSet^\ast}  \;\text{ if and only if }\; \varphi(\dCard) \in \Tvec{y}{\dSet}.
        \end{align}
        If $\varphi(v) \neq c^\ast(v)$ for some $v \in \zSet^\ast$ then
        \cref{eq:vec:R} and \cref{eq:vec:R2} imply that $\Tvec{y}{\dSet} = \Tvec{x}{\dSet}$,
        where $x \neq y \in \zSet$. However, observe that this is not possible
        since $(\dSet, \zSet)$ is TT-unique. Hence,
        $\varphi(v) = c^\ast(v)$.
    \end{claimproof}

    Finally, we show that $G$ contains a colorful $\zCard$-clique. To this end,
    we show that $\indGraph{G}{\zSet^\ast}$ is a colorful $\zCard$-clique.
    \Cref{claim:TT:phi} and the fact that $c(v) = c^{\ast}(v)$
    for all $v \in V(G)$  immediately imply
    that $\indGraph{G}{\zSet^\ast}$ is colorful with respect to $c$.
    Further, due to \cref{claim:TT:phi}, we obtain
    that $c^\ast$ is an isomorphism from $\indGraph{G}{A^\ast}$ to $T$.
    This implies that for all $v_i, v_j \in \zSet^\ast$ (i.e., $v_i, v_j \in V(G)$),
    $(v_i, v_j)\in \DE{G^\ast}$ if and only if $(c^\ast(v_i),c^\ast(v_j))\in \DE{T}$.
    By construction of $G^\ast$, this is equivalent to $\{v_i, v_j\} \in E(G)$, proving that
    $\indGraph{G}{\zSet^\ast}$ is a clique.
\end{proof}

\Cref{theo:TT:construct} provides us with a reduction from $\DecCFcliqueProb_{|\zSet|}$
to $\DecTourProb(\{T\})$, which we extend to a reduction that starts from
$\DecCliqueProb_{|\zSet|}$.

\begin{theorem}[Reduction from $\DecCliqueProb_{|\zSet|}$ to $\DecTourProb(\{T\})$
    via TT-unique partition $(\dSet, \zSet)$]\dglabel{theo:TT:reduction}[theo:TT:construct,lem:DecCFclique:DecClique]
    Let $T$ be a tournament and $(\dSet, \zSet)$ be a TT-unique partition of $T$.
    Assume that there is an algorithm that
    solves $\DecTourProb(\{T\})$ for any tournaments of order $n$ in time $O(n^\gamma)$.
    Then there is an algorithm that solves $\DecCliqueProb_{|\zSet|}$
    for any graphs of order $n$ in time $O(n^\gamma)$.
\end{theorem}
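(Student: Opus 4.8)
The plan is to compose two fine-grained reductions,
\[
  \DecCliqueProb_{|\zSet|} \;\longrightarrow\; \DecCFcliqueProb_{|\zSet|} \;\longrightarrow\; \DecTourProb(\{T\}),
\]
where the first step is the color-simulation reduction of \cref{lem:DecCFclique:DecClique} and the second step is the gadget construction of \cref{theo:TT:construct}. Each step will preserve the running-time exponent because the pattern $T$ (and hence the sets $\dSet$ and $\zSet$) is fixed.

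Concretely, I would start from the assumed algorithm $\mathbb{A}$ that solves $\DecTourProb(\{T\})$ on tournaments of order $n$ in time $O(n^\gamma)$ and first turn it into an algorithm for $\DecCFcliqueProb_{|\zSet|}$. Given a $|\zSet|$-colored graph $G$ of order $n$, apply \cref{theo:TT:construct} to build, in time $O((n+|\dSet|)^2)$, the uncolored tournament $G^\ast$ of order $n+|\dSet|$, run $\mathbb{A}$ on $G^\ast$, and accept exactly when $\mathbb{A}$ accepts. Correctness is immediate from \cref{theo:TT:construct}: $\mathbb{A}$ accepts iff $T$ is isomorphic to a subtournament of $G^\ast$ iff $G$ contains a colorful $|\zSet|$-clique. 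For the running time, $|\dSet|$ is a constant (as $T$ is fixed), so the construction runs in time $O(n^2)$ and the oracle call runs in time $O((n+|\dSet|)^\gamma)=O(n^\gamma)$; since every algorithm we consider reads its whole input we have $\gamma\ge 2$, so the $O(n^2)$ overhead is absorbed and we obtain an $O(n^\gamma)$-time algorithm for $\DecCFcliqueProb_{|\zSet|}$.

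It then remains to invoke \cref{lem:DecCFclique:DecClique}, which converts any $O(n^\gamma)$-time algorithm for $\DecCFcliqueProb_{|\zSet|}$ into an $O(n^\gamma)$-time algorithm for $\DecCliqueProb_{|\zSet|}$ (the reduction there enlarges the host graph only by a $|\zSet|$-dependent, hence constant, factor, leaving the exponent unchanged). Chaining the two steps yields the theorem. I do not expect a genuine obstacle here, since the substantive content lives entirely in \cref{theo:TT:construct}; the only care needed is bookkeeping---checking that the clique size $|\zSet|$ stays the same throughout the chain, that the host-tournament order grows only by the additive constant $|\dSet|$ in the gadget step, and that the polynomial-time construction of $G^\ast$ is dominated by $n^\gamma$ precisely because $\gamma\ge 2$.
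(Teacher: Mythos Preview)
Your proposal is correct and follows essentially the same approach as the paper: first use \cref{theo:TT:construct} to turn the assumed $\DecTourProb(\{T\})$ algorithm into an $O(n^\gamma)$ algorithm for $\DecCFcliqueProb_{|\zSet|}$ (noting that $|\dSet|$ is a constant since $T$ is fixed, so the additive blow-up is harmless), and then invoke \cref{lem:DecCFclique:DecClique} to pass to $\DecCliqueProb_{|\zSet|}$. Your bookkeeping about $\gamma\ge 2$ absorbing the $O(n^2)$ construction cost is exactly the point the paper leaves implicit.
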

\begin{proof}
    Assume that there is an algorithm that reads the whole
    input and solves $\DecTourProb(\{T\})$ for any tournaments of order $n$ in time $O(n^\gamma)$.
    Let $G$ be a $|\zSet|$-colored input graph of order $n$. Due to
    \cref{theo:TT:construct}, we can compute an uncolored tournament $G^\ast$
    of order $n + |\dSet|$ in time $O({(n + |\dSet|)}^2)$ such that $T$ is isomorphic to a subtournament of
    $G^\ast$ if and only if $G$ contains a colorful $|\zSet|$-clique. Thus, we can solve $\DecCFcliqueProb_{|\zSet|}$
    on input $G$
    in time $O(n^\gamma)$ by computing $\DecTours{T}{G^\ast}$ since $|\dSet|$ is a constant.
    Lastly, \cref{lem:DecCFclique:DecClique}
    implies an algorithm that solves $\DecCliqueProb_{|\zSet|}$ in time $O(n^\gamma)$.
\end{proof}

\subsection{\texorpdfstring{$\DecTourProb(\{T\})$}{Dec-IndSub({T})} is Hard for Random Tournaments}

In order to use \cref{theo:TT:reduction}, we have to find graphs
that have a TT-unique partition $(\dSet, \zSet)$ where $\zSet$ is large.
The goal of this section is to show
that random tournaments admit a TT-unique partition $(\dSet, \zSet)$ where $\zSet$ is
large with high probability.

\theoprobTT*

First, we show that the $\indGraph{T}{\dSet}$ has no automorphism 
and that $\indGraph{T}{\dSet}$ appears exactly once in $T$
with high probability. Our proof mostly follows the proof of Lemma 2.3 in~\cite{Yuster25}.

\begin{lemma}[Random tournaments satisfy the first two properties of TT-uniqueness:  
    Random tournaments have a trivial automorphism group and $\indGraph{T}{\dSet}$ appears exactly once]\dglabel{lem:prob:iso}
    Let $T$ be random tournament of order $k \geq 10^5$ with vertex set $\{v_1, \dots, v_k\}$.
    Further, set $\zCard \coloneqq \lfloor k / (9 \log(k)) \rfloor$ and $\dSet \coloneqq \{v_{\zCard+1}, \dots, v_k\}$.
    Then the following event occurs with probability at least  $1 - 1/k^3$:
    The subtournament $\indGraph{T}{\dSet}$ has a trivial automorphism group
    and $\indGraph{T}{\dSet}$ appears exactly once in $T$.
\end{lemma}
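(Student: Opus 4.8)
The plan is to bound two bad events and take a union bound. Let me set $d \coloneqq |\dSet| = k - \zCard$, so $d = k - \lfloor k/(9\log k)\rfloor$, which is a $(1-o(1))$-fraction of $k$; the slack $\zCard$ will only help us absorb polynomial factors. First I would handle the event $A$ that $\indGraph{T}{\dSet}$ has a nontrivial automorphism. A standard first-moment argument works: for a fixed non-identity permutation $\pi \in \sym{d}$ with, say, support of size $s \geq 2$, the probability that $\pi$ is an automorphism of a random tournament on $d$ labeled vertices is at most $2^{-\lfloor s/2 \rfloor \cdot (\text{something linear in } s)}$ — more precisely, $\pi$ partitions the $\binom{d}{2}$ pairs into orbits under the induced action on pairs, and $\pi$ is an automorphism iff all arcs in each orbit are oriented consistently, which forces a constraint on each orbit of size $\geq 2$; one shows the number of such nontrivial pair-orbits is $\Omega(s \cdot d)$ (each moved vertex is in many moved pairs), giving probability at most $2^{-c s d}$ for a constant $c > 0$. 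Summing over all $\pi$ with support $s$ (there are at most $d^s$ of them) and over $s$ from $2$ to $d$ gives $\Pr[A] \leq \sum_{s\geq 2} d^s 2^{-csd} \leq 2 d^2 2^{-2cd}$, which is far below $1/(2k^3)$ for $k \geq 10^5$ since $d = \Theta(k)$. This is essentially the computation in the proof of \cite[Lemma 2.3]{Yuster25}, which the statement already signposts.

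Next I would handle the event $B$ that $\indGraph{T}{\dSet}$ appears more than once in $T$. Equivalently, there is a set $\dSet' \subseteq V(T)$ with $|\dSet'| = d$, $\dSet' \neq \dSet$, and an isomorphism $\psi \colon \indGraph{T}{\dSet} \to \indGraph{T}{\dSet'}$. I would union-bound over the choice of $\dSet'$ (at most $\binom{k}{d} \leq k^{\zCard} = 2^{O(\zCard \log k)} = 2^{O(k/9)}$ choices, by the choice of $\zCard$) and, for each $\dSet'$, over the bijection $\psi$ (at most $d! \leq 2^{O(k \log k)}$ — wait, that is too many, so I instead need to be careful). The fix, as in Yuster, is to condition on the arcs *inside* $\dSet \cap \dSet'$ and exploit that $\psi$ must respect those arcs to kill most bijections, while the arcs with at least one endpoint in $\dSet \symdifOperator \dSet'$ are fresh random bits giving the exponential saving. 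Concretely: let $t \coloneqq |\dSet \setminus \dSet'| = |\dSet' \setminus \dSet| \geq 1$. The arcs between $\dSet \setminus \dSet'$ and the rest of $\dSet$, together with arcs inside $\dSet \setminus \dSet'$, total $\Omega(t \cdot d)$ pairs whose orientations in $\indGraph{T}{\dSet'}$ are *independent* of those in $\indGraph{T}{\dSet}$ (disjoint pairs), so for a fixed $\psi$ the probability it is an isomorphism is at most $2^{-c' t d}$. The number of candidate $(\dSet', \psi)$ pairs with a given $t$ is at most $\binom{d}{t}\binom{k-d}{t} \cdot (\text{automorphism-adjusted count of } \psi)$; using that $\indGraph{T}{\dSet}$ is (on the good event $A^c$) rigid, each valid $\psi$ is determined up to $O(1)$ choices, and even crudely bounding the $\psi$-count by $d!$ we get $\sum_{t \geq 1} k^{O(\zCard)} d! \, 2^{-c' t d}$; since $d! \leq 2^{d\log d} \leq 2^{k \log k}$ and $c' t d \geq c' d = \Omega(k)$... this does *not* close. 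So the real argument must avoid the $d!$ factor: I would instead first condition on $\indGraph{T}{\dSet}$ being rigid and having *distinguishing* local structure (which is itself a high-probability event, and can be folded into a slightly strengthened version of the rigidity argument), so that any potential isomorphism $\psi$ is forced to be the unique order-preserving identification — reducing the count of $\psi$ per $\dSet'$ to $O(1)$. Then $\Pr[B] \leq \sum_{t\geq 1} \binom{d}{t}\binom{\zCard}{t} O(1) \, 2^{-c' t d} \leq 2^{O(t\log k)} 2^{-c' t d}$ summed over $t$, which is $\leq 1/(2k^3)$ for $k \geq 10^5$. Finally, $\Pr[A \cup B] \leq 1/(2k^3) + 1/(2k^3) = 1/k^3$, as claimed.

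The main obstacle is exactly this interplay in the $B$-bound: naively there are $d!$ bijections $\psi$ to rule out per candidate $\dSet'$, and that factor is too large to beat the $2^{-\Theta(td)}$ gain when $t$ is small. The resolution — mirroring Yuster's proof — is to not treat "rigidity" and "appears once" as independent, but to prove a single combined statement: with high probability, $\indGraph{T}{\dSet}$ has the property that *no* set $\dSet'\neq \dSet$ of the same size carries a tournament isomorphic to it, where the key insight is that the vertices of $\dSet$ become mutually distinguishable by their out-degree sequence (or by their neighborhoods toward any large rigid subset) with overwhelming probability, pinning down $\psi$ uniquely. I would therefore structure the write-up as: (1) state and prove a "local distinguishability" lemma — with probability $\geq 1 - 1/(2k^3)$, all vertices of $\dSet$ have distinct out-degrees within $\dSet$ (a birthday-type first-moment bound, easy since out-degrees spread over $\sim d$ values with near-uniform distribution and $d$ is large); (2) on that event, deduce both rigidity and the uniqueness of embeddings, so the $\psi$-count collapses; (3) run the two first-moment bounds above with the now-manageable counts; (4) union bound.
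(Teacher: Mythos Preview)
Your proposal correctly identifies the central difficulty (the naive $d!$ bijection count for the ``appears once'' part), but your proposed fix is wrong. The claim that with probability $\ge 1 - 1/(2k^3)$ all vertices of $\dSet$ have distinct out-degrees in $\indGraph{T}{\dSet}$ is false: the out-degrees take values in $\{0,\dots,d-1\}$ and sum to $\binom{d}{2}$, so they are all distinct if and only if they are exactly $\{0,1,\dots,d-1\}$, which is equivalent to $\indGraph{T}{\dSet}$ being transitive. A random tournament on $d$ vertices is transitive with probability $d!/2^{\binom{d}{2}}$, which is exponentially small, not $1-o(1)$. (More heuristically: out-degrees concentrate in a window of width $O(\sqrt d)$ around $d/2$, so with $d$ vertices collisions are essentially forced.) Hence step (1) of your plan cannot be carried out, and without it step (2) has no content---you are back to the $d!$ blowup you were trying to avoid.

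The paper's proof sidesteps this completely by \emph{not} separating the two events. It defines a single bad event $E$: there exists an injection $f\colon \dSet \to V(T)$ with $f \neq \id_\dSet$ that is an isomorphism from $\indGraph{T}{\dSet}$ onto its image. This simultaneously captures nontrivial automorphisms ($f(\dSet)=\dSet$) and extra occurrences ($f(\dSet)\neq \dSet$). The union bound is then taken over the number $p\ge 1$ of \emph{non-stationary points} of $f$ (vertices $v\in\dSet$ with $f(v)\neq v$). This is the key parametrization: the number of injections with exactly $p$ non-stationary points is at most $\binom{d}{p}k^p \le k^{2p}$, which replaces your $d!$ by a term that scales with $p$. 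For the probability that a fixed such $f$ is an isomorphism, the paper shows a uniform bound of $2^{-p d/12}$ (a direct argument for $p\le 11$, and for $p\ge 12$ one picks $\lfloor p/4\rfloor$ non-stationary points whose images are disjoint from them and counts the forced arc-agreement constraints). The sum $\sum_{p\ge 1} k^{2p}\,2^{-pd/12}$ is then easily at most $1/k^3$ since $d/12 \ge k/24 \gg \log k$ for $k\ge 10^5$. The point is that non-stationary-point count is exactly the right substitute for your ``$t = |\dSet\setminus\dSet'|$'' that also absorbs the bijection count, so no separate distinguishability lemma is needed.
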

\begin{proof}
    Set $\dCard \coloneqq |\dSet|$. Further,
    let $E$ be the event that
    there exist an isomorphism $f$ from $\indGraph{T}{\dSet}$ to a subtournament $T'$ of $T$
    such that $f \neq \id_\dSet$. Note that the event $E$ is equivalent to the event that
    the automorphism group of $\indGraph{T}{\dSet}$ is nontrivial 
    or $\indGraph{T}{\dSet}$ appears more than once in $T$. Therefore,
    it is enough to show that the event $E$ only occurs with probability at most $1/k^3$.

    Since $f$ is not the identity function, it has at least one non-stationary point
    (that is a value $x$ with $f(x) \neq x$). Thus,
    \begin{align*}
        \mathbb{P}[E] &= \mathbb{P}[\exists p \geq 1; f \colon \dSet \to \dSet' : \text{$f$
    has $p$ non-stationary points and is isomorphism from $\indGraph{T}{\dSet}$ to $\indGraph{T}{\dSet'}$ } ] \\
    &\leq \sum_{p = 1}^{\dCard} \mathbb{P}[\exists f \colon \dSet \to \dSet' : \text{$f$
    has $p$ non-stationary points and is isomorphism from $\indGraph{T}{\dSet}$ to $\indGraph{T}{\dSet'}$ } ],
    \end{align*}
    where we use union bound for the last step. Further,
    $\dSet'$ stands for an arbitrary non-fixed subset of $V(T)$ of size $\dCard$. This means that the
    existential quantifier ranges over all possible functions $f$ that map $\dSet$ into some subset $D'$ of size $\dCard$.
    We write $\mathcal{F}_p$ for the set of all functions $f \colon \dSet \to \dSet'$
    with $p$ non-stationary points.
    A union bound yields
    \begin{align}\label{eq:E:isom}
        \mathbb{P}[E] \leq \sum_{p = 1}^{\dCard} \sum_{f \in \mathcal{F}_p }
        \mathbb{P}[\text{$f$ is isomorphism from $\indGraph{T}{\dSet}$ to
        $\indGraph{T}{\dSet'}$ } ].
    \end{align}
    In the following, we show that for a fixed function $f\colon \dSet \to \dSet'$ with
    $p$ non-stationary points the probability that $f$ is an isomorphism
    is sufficiently low. We start by consider the case that $f$ has at most 11
    non-stationary points.
    \begin{claim}\label{claim:less:13}
        Let $f \colon \dSet \to \dSet'$ be a function with $1 \leq p \leq 11$ non-stationary points,
        then \[\mathbb{P}[\text{$f$ is isomorphism from $\indGraph{T}{\dSet}$ to
        $\indGraph{T}{\dSet'}$ } ] \leq \frac{1}{2^{p \dCard/12}} \]
    \end{claim}
    \begin{claimproof}
        Let $v$ be some non-stationary point of $f$ and let $\dSet^\ast = \dSet \setminus \{v, f(v)\}$. For
        $f$ being an isomorphism, we need that, for all $u \in \dSet^\ast$,  $(u, v) \in \DE{T}$ if and only
        if $(f(u), f(v)) \in \DE{T}$. Observe that this requires that $|\dSet^\ast|$ different pairs of edges have the
        same orientation, meaning that the probability of this happening is at most $2^{-|\dSet^\ast|}$.
        Lastly, note that
        \[2^{-|\dSet^\ast|} \leq \frac{1}{2^{\dCard - 2}} \leq \frac{1}{2^{p \dCard/12}},\]
        where we use that for $\dCard \geq k/2 \geq 10^5/2$ we have $11 \dCard/12 \leq \dCard - 2$ and $p/12 \leq 11/12$.
    \end{claimproof}
    Next, we consider the case that $f$ has more than 11 non-stationary points.
    \begin{claim}\label{claim:above:13}
        Let $f \colon \dSet \to \dSet'$ be a function with $p \geq 12$ non-stationary points,
        then
        \[
        \mathbb{P}[\text{$f$ is an isomorphism from $\indGraph{T}{\dSet}$ to $\indGraph{T}{\dSet'}$ } ] \leq \frac{1}{2^{p \dCard/12}}.
         \]
    \end{claim}
    \begin{claimproof}
        Without loss of generality we can assume that $f$ is a bijection.
        Set $q \coloneqq \lfloor p/4 \rfloor$. Since $2 q < p$, observe that we can choose $q$ non-stationary points
        $u_1, \dots, u_q \in \dSet$ such that $\{u_1, \dots, u_q\} \cap \{f(u_1), \dots, f(u_q)\} = \emptyset$.
        We define $\dSet^\ast = \dSet \setminus \{u_1, \dots, u_q, f(u_1), \dots, f(u_q)\}$. Note that
        $|\dSet^\ast| \geq \dCard - 2q$.
        If $f$ is an isomorphism then for all $i \in [q]$ and $u \in \dSet^\ast$
        we have $(u_i, u) \in \DE{T}$ if and only if $(f(u_i), f(u)) \in \DE{T}$. This involves
        that the orientation of $q |\dSet^\ast|$ distinct pairs of edges have to coincide. Thus,
        $f$ is an isomorphism with probability at most
        \[\frac{1}{2^{q|\dSet^\ast|}} \leq \frac{1}{2^{q(\dCard - 2q)}}  \leq \frac{1}{2^{q(\dCard - \dCard/2)}}
        \leq \frac{1}{2^{q \dCard/2}} \leq \frac{1}{2^{p \dCard/12}}, \]
        where we use that $2q \leq \dCard/2$ and $q\geq p/6$.
    \end{claimproof}
    We now use
    \cref{claim:less:13} and \cref{claim:above:13} to upper bound \cref{eq:E:isom}:
    \[\mathbb{P}[E] \leq \sum_{p = 1}^{\dCard} \sum_{f \in \mathcal{F}_p }
        \mathbb{P}[\text{$f$ is isomorphism from $\indGraph{T}{\dSet}$ to $\indGraph{T}{\dSet'}$ } ]
        \leq \sum_{p = 1}^\dCard \frac{|\mathcal{F}_p|}{2^{p \dCard/12}}. \]
    Further by using $|\mathcal{F}_p| \leq \binom{\dCard}{p}k^p \leq \dCard^p k^p$ we obtain
    \[\mathbb{P}[E] \leq \sum_{p = 1}^\dCard \frac{\dCard^p k^p}{2^{p \dCard/12}}  \leq \sum_{p = 1}^\dCard \frac{k^{2p}}{2^{p \dCard/12}} =
    \sum_{p = 1}^\dCard 2^{2p\log(k) - p \dCard/12} = \sum_{p = 1}^\dCard 2^{p(2\log(k) - \dCard/12)},  \]
    where we use that $\dCard \leq k$.
    Note that for $k \geq 10^5$, we have $\zCard \leq k/2$ and hence
    $\dCard/12 \geq k/24 \geq 6\log(k)$.
    Thus, we may continue our computation with
    \[\mathbb{P}[E] \leq \sum_{p = 1}^\dCard 2^{-4p \log(k)} \leq \sum_{p = 1}^\dCard 2^{-4 \log(k) }
    = \sum_{p = 1}^\dCard \frac{1}{k^4} \leq \frac{1}{k^3}.  \]
    Finally, we obtain
    $\mathbb{P}[\overline{E}] = 1 - \mathbb{P}[E] \geq 1 - 1/k^3$.
\end{proof}

Next, we show \cref{theo:prob:TT} by proving that there is a
set $\dSet$ such that all the neighborhoods $\Tvec{v}{\dSet}$ with respect
to $\dSet$ are different.
The intuition is that two different neighborhoods
$\Tvec{v}{\dSet}$ and $\Tvec{u}{\dSet}$ behave like a binomial distributed
random variables with success probability $1/2$ and $|\dSet|$ many
repetitions. Hence, with high probability, they are different.

\begin{lemma}[Random tournaments satisfy the third property of TT-uniqueness: Random tournaments contain a large number of vertices with a unique
    neighborhood]\dglabel{lem:prob:vec}
    Consider a random tournament \(T\) with \(k \ge 10^5\) vertices $\{v_1, \dots, v_k\}$.
    Write \(\zCard \coloneqq \lfloor k / (9 \log(k)) \rfloor\) and set $\dSet \coloneqq \{v_{\zCard+1}, \dots, v_k\}$
    and $\delta \coloneqq |\dSet| - \TT{T} \cdot |\zSet|$.
    Then, with probability at least $1 - 2/k^3$,
    all $\dSet' \subseteq \dSet$ with $|\dSet'| \geq \delta$ and all
    $v \neq u \in V(T) \setminus \dSet'$ satisfy $\Tvec{v}{\dSet'} \neq \Tvec{u}{\dSet'}$.
\end{lemma}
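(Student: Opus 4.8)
The plan is to first remove the dependence of the random threshold $\delta$ on the random quantity $\TT{T}$, then reformulate the claim in terms of symmetric differences of in-neighborhoods, and finally close it with a Chernoff bound together with a union bound over pairs of vertices.

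First I would pin $\TT{T}$ down deterministically. Put $t \coloneqq \lceil 3\log k\rceil$. A first-moment computation -- there are $\binom{k}{t+1}$ candidate $(t+1)$-vertex sets, each of which induces a transitive tournament with probability $(t+1)!\cdot 2^{-\binom{t+1}{2}}$ -- shows $\mathbb{P}[\TT{T} > t] \le k^{t+1}\,2^{-\binom{t+1}{2}} \le 1/k^3$ for $k \ge 10^5$. On the complementary event, $\delta = |\dSet| - \TT{T}\cdot\zCard \ge |\dSet| - t\zCard =: \delta_0$, and $|\dSet| - \delta_0 = t\zCard$. Since requiring $|\dSet'| \ge \delta$ is (on that event) implied by $|\dSet'| \ge \delta_0$, it suffices to prove that, with probability at least $1 - 1/k^3$, for every $\dSet' \subseteq \dSet$ with $|\dSet'| \ge \delta_0$ and all distinct $u, v \in V(T)\setminus\dSet'$ one has $\Tvec{v}{\dSet'} \ne \Tvec{u}{\dSet'}$.

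Next I would reformulate this per pair of vertices. Fix distinct $u, v \in V(T)$. A set $\dSet' \subseteq \dSet\setminus\{u,v\}$ satisfies $\Tvec{u}{\dSet'} = \Tvec{v}{\dSet'}$ exactly when $\dSet'$ is disjoint from $\symdif{\Tout{u}}{\Tout{v}}$, so the largest such $\dSet'$ has size $|\dSet\setminus\{u,v\}| - |(\dSet\setminus\{u,v\})\cap\symdif{\Tout{u}}{\Tout{v}}|$; hence a ``bad'' $\dSet'$ of size $\ge \delta_0$ exists for $(u,v)$ only if $|(\dSet\setminus\{u,v\})\cap\symdif{\Tout{u}}{\Tout{v}}| \le |\dSet\setminus\{u,v\}| - \delta_0 \le t\zCard$. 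Now the probabilistic core: for $w \in \dSet\setminus\{u,v\}$, membership of $w$ in $\symdif{\Tout{u}}{\Tout{v}}$ depends only on the orientations of the arcs $\{w,u\}$ and $\{w,v\}$, which are independent and uniform, so it has probability $1/2$; and as $w$ varies these arc-pairs are pairwise disjoint, so the events are independent. Thus $|(\dSet\setminus\{u,v\})\cap\symdif{\Tout{u}}{\Tout{v}}|$ is $\mathrm{Bin}(m,1/2)$ with $m = |\dSet\setminus\{u,v\}| \ge |\dSet| - 2 \ge 0.99k$; since $t\zCard \le k/3 + k/(9\log k)$ lies well below $m/2$, a Chernoff bound gives $\mathbb{P}[\mathrm{Bin}(m,1/2) \le t\zCard] \le \exp(-\Omega(k))$. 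A union bound over the $\binom{k}{2}$ pairs makes this at most $1/k^3$ for $k \ge 10^5$, and combining with the first step gives the claimed $1 - 2/k^3$.

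The step I expect to be the real obstacle is arranging the union bound so that it actually closes. The naive route -- union directly over all $\binom{|\dSet|}{\delta_0} = 2^{\Theta(k)}$ admissible sets $\dSet'$, each contributing collision probability only $2^{-\delta_0} = 2^{-\Theta(k)}$ -- loses by an exponential factor and gives nothing. The reformulation via symmetric differences is precisely what trades the exponential count of subsets for the polynomial count of vertex pairs; once that is in place, the only remaining care is the bookkeeping: choosing the deterministic threshold $t$, handling the floors hidden in $\zCard$, and checking that the Chernoff regime $t\zCard < (\tfrac12 - \varepsilon)m$ holds uniformly for all $k \ge 10^5$.
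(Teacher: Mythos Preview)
Your proposal is correct and follows essentially the same approach as the paper: first bound $\TT{T}\le 3\log k$ by a first-moment argument, then recast the subset condition via the symmetric difference $\symdif{\Tout{u}}{\Tout{v}}$ so that a union bound over the $\binom{k}{2}$ pairs suffices, and close with a Chernoff bound on a $\mathrm{Bin}(\cdot,1/2)$ variable. The only cosmetic difference is that the paper works with the full symmetric difference on $V(T)$ (showing $|\symdif{\Tout{u}}{\Tout{v}}|\ge k-\delta+1$ for all pairs), whereas you restrict attention to $\dSet\setminus\{u,v\}$; both give the same conclusion with the same error budget $1/k^3+1/k^3$.
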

\begin{proof}
    We start with the following claim.
    \begin{claim}\label{claim:tt:prob}
        The probability that $\TT{T} \leq 3 \log(k)$ is at least $1 - 1 / k^{3}$.
    \end{claim}
    \begin{claimproof}
        Let $c \coloneqq \lceil 3 \log(k) \rceil$, we show that the probability $P$ of the event that
        $T$ contains a subtournament $\indGraph{T}{A}$ of order $c$ with
        $\indGraph{T}{A} \cong \tranTour_c$
        is at most $k^{-3}$. Let $A \subseteq V(T)$ be a set of vertices with
        $|A| = c$, we show
        \[\mathbb{P}[\indGraph{T}{A} \cong \tranTour_c] = \frac{c!}{2^{\binom{c}{2}}},\]
        where the probability is taken with respect to the randomness of $T$.
        To this end, observe that there are $2^{\binom{c}{2}}$ possible tournaments with
        vertex set $A$ that are all equally likely and $c!$ many of them are transitive
        since a transitive tournament is uniquely described by its topological ordering.
        Union bound yields
        \[P \leq \sum_{\substack{A \subseteq V(T) \\|A| = c}} \mathbb{P}[\indGraph{T}{A} \cong \tranTour_c]
        = \binom{k}{c} \cdot \frac{c!}{2^{\binom{c}{2}}} \leq \frac{k^c}{2^{\binom{c}{2}}}
        = 2^{c \log(k) - (c^2 - \frac{c}{2})} =
        2^{c(\log(k) - c + \frac{1}{2})}.\]
        Observe that this expression becomes smaller for larger values for $c$. Since $c \geq 3 \log(k)$,
        we can continue with
        \[P \leq 2^{-3\log(k)(2 \log(k)  - 1/2)} \leq k^{-3}.\]
        Thus, the probability that $T$ contains a subtournament isomorphic to $\tranTour_c$ is
        at most $k^{-3}$. Note that this implies that the event $\TT{T} \leq 3 \log(k)$ has a
        probability of at least $1 - k^{-3}$.
    \end{claimproof}

    In the following, we assume that $\TT{T} \leq 3 \log(k)$ which yields
    \[\delta \coloneqq |\dSet| - \TT{T} \cdot |\zSet| \geq
    |\dSet| - k/3 \geq 2k/3 - \lfloor k / (9 \log(k)) \rfloor  \geq 5k/9.\]
    We show that it is enough to prove that the sets $\Tout{v}$
    are all very different. Set $V \coloneqq V(T)$.
    \begin{claim}\label{claim:dis:vec}
        If, for all $v \neq u \in V$, we have $|\,\symdif{\Tout{v}}{\Tout{u}}\,| \geq k - \delta + 1$,
        then for all $\dSet' \subseteq V$ with $|\dSet'| \geq \delta$ we have $\Tvec{v}{\dSet'} \neq \Tvec{u}{\dSet'}$.
    \end{claim}
    \begin{proof}
        Let $\dSet'$ be a set such that they are $v \neq u \in V$ with
        $\Tvec{v}{\dSet'} = \Tvec{u}{\dSet'}$. Then $\dSet' \cap (\symdif{\Tout{v}}{\Tout{u}}) = \emptyset$
        since otherwise $\Tout{v}$ and $\Tout{u}$ would disagree on a vertex in $\dSet'$.
        Thus, $\dSet' \subseteq V \setminus (\symdif{\Tout{v}}{\Tout{u}})$
        which implies $|\dSet'| \leq k - ( k- \delta + 1) < \delta$.
    \end{proof}
    Let $P'$ be the probability that for all $\dSet' \subseteq \dSet$ with $|\dSet'| \geq \delta$ and all
    $v \neq u \in V(T) \setminus \dSet'$ we have $\Tvec{v}{\dSet'} \neq \Tvec{u}{\dSet'}$. By \cref{claim:dis:vec}
    we obtain
    \begin{align*}
        P' &\geq \mathbb{P}\big[\forall v \neq u \in V: |\symdif{\Tout{v}}{\Tout{u}}| \geq k - \delta + 1\big]\\
           &=1 - \mathbb{P}[\exists v \neq u \in V: |\symdif{\Tout{v}}{\Tout{u}}| \leq k - \delta],
    \end{align*}
    where the probability is taken over the randomness of $T$. Union bound yields
    \begin{align}\label{eq:prob:symdif}
        P' \geq 1 - \sum_{v \neq u \in V} \mathbb{P}\big[ |\symdif{\Tout{v}}{\Tout{u}}| \leq k - \delta\big].
    \end{align}
    In the following, we estimate the probability of the event
    $|\symdif{\Tout{v}}{\Tout{u}}| \leq k - \delta$ for fixed vertices $v \neq u$.
    For any vertex $x_1 \in V \setminus \{u, v\}$, observe that $x_1 \in \Tout{v}$
    with probability $\frac{1}{2}$ since the edges $(x_1, v)$ and
    $(v, x_1)$ are equally likely in $T$. Further, this event is independent of
    the event $x_1 \in \Tout{u}$ that also has a probability of $\frac{1}{2}$ of occurring. Thus,
    the probability of $x_1 \in \symdif{\Tout{v}}{\Tout{u}}$ is equal to $\frac{1}{2}$.
    Further, let $x_2, x_3, \dotsc \in V \setminus \{u, v\}$ be other vertices, then the events
    $x_i \in \symdif{\Tout{v}}{\Tout{u}}$
    are all independent of each other and all have a success probability of $\frac{1}{2}$.
    Additionally, $v$ and $u$ are always in $\symdif{\Tout{v}}{\Tout{u}}$. Hence,
    \[\mathbb{P}[ |\symdif{\Tout{v}}{\Tout{u}}| \leq k - \delta] =
        \mathbb{P}[X + 2 \leq  k - \delta],
    \] where $X$ is a binomial distributed random variable
    on $k - 2$ events with success probability $\frac{1}{2}$. Let $\mu = k/2 - 1$
    be the expected value of $X$, then  $\mathbb{P}[X + 2 \leq  k - \delta] =
    \mathbb{P}[X \leq k - \delta - 2] \leq
    \mathbb{P}[X \leq (1 - 1/9) \cdot \mu]$,
    where the last step follows from $k-\delta-2\leq 4k/9 - 2 \leq 4k/9 - 8/9 = 8/9 \cdot \mu$.
    By~\cite[Theorem 4.5]{probAndComp} we upper bound the previous expression with
    \[\mathbb{P}[X \leq (1 - 1/9) \cdot \mu] \leq \mathrm{e}^{-\mu/162} \leq \mathrm{e}^{-k/500}
    \leq k^{-5}, \]
    where the last step follows for $k \geq 10^5$. By using \cref{eq:prob:symdif}, we obtain
    $P' \geq 1 -  \sum_{v \neq u \in V} k^{-5} \geq 1 - 1/k^3$. Lastly, by combining
    \cref{claim:tt:prob,claim:dis:vec}, we obtain, with probability at
    least $1 - 2/k^3$, that
    for all $\dSet' \subseteq \dSet$ with $|\dSet'| \geq \delta$ and all
    $v \neq u \in V(T) \setminus \dSet'$ we have $\Tvec{v}{\dSet'} \neq \Tvec{u}{\dSet'}$.
\end{proof}

\theoprobTT
\begin{proof}
    Set $V(T) = \{v_1, \dots, v_k\}$, $\zCard = \lfloor k / (9 \log(k)) \rfloor$,
    $\zSet \coloneqq \{v_1, \dots, v_\zCard\}$ and $\dSet \coloneqq \{v_{\zCard+1}, \dots, v_k\}$.
    By combining \cref{lem:prob:vec,lem:prob:iso} we obtain that $(\dSet, \zSet)$ is TT-unique
    with probability at least $1 - 3/k^3$.
\end{proof}

Now, combining \cref{theo:TT:reduction,theo:prob:TT} yields \cref{maintheorem:Dec:sub}.

\thmdecsub
\begin{proof}
    By \cref{theo:prob:TT}, with probability at
    least $1 - 3/k^3$ there is a partition
    $(\dSet, \zSet)$ of $V(T)$ such that $\zCard \coloneqq |\zSet| \geq \lfloor k / (9 \log(k))\rfloor$
    and $(\dSet, \zSet)$ is TT-unique.
    If there is an algorithm that reads the whole
        input and solves $\DecTourProb(\{T\})$ for any tournament of order $n$ in time $O(n^\gamma)$,
        then \cref{theo:TT:reduction} yields an algorithm that solves
        $\DecCliqueProb_{\lfloor k/(9\log(k)) \rfloor}$ for any graphs of order $n$
        in time $O(n^\gamma)$.

    Further, assuming ETH, there exists a constant
    $\alpha$ such that no algorithm, that reads the whole input, solves
    $\DecCliqueProb_{k}$ in time $O(n^{\alpha k})$ (see \cref{lemma:eth:clique}).
    By using $\beta = \alpha/10$, we obtain that no algorithm, that reads the whole input, solves
    $\DecTourProb(\{T\})$ in time $O(n^{\beta k/\log(k)})$.
\end{proof}

\subsection{The Complexity of \texorpdfstring{$\DecTourProb$}{Dec-IndSub}}

\thmdecsubpara
\begin{proof}
    \Cref{theo:prob:TT} ensures the existence of a graph
    $T$ that has a partition $(\dSet, \zSet)$ that is TT-unique with $|\zSet| \geq c$. We use this to construct
    set of tournaments $\mathcal{T}_c = \{T_0, T_1, T_2, \dots\}$, where $T_k$
    is obtained by adding $\tranTour_k$ to $T$ and adding the edges
    $(u, v)$ for $u \in V(T)$ and $v \in V(T_k) \setminus V(T)$ to $\DE{T_k}$. Note that
    $(V(T), \emptyset, V(\tranTour_k))$
    is a spine decomposition of $T_k$ implying that $|V(T_k)| - \core{T_k} \leq |V(T)|$
    for all $T_k \in \mathcal{T}_c$. Thus, \cref{theo:core:easy} yields that $\DecTourProb(\mathcal{T}_c)$
    is FPT.

    For the second part, assume ETH. By \cref{lemma:eth:clique}
    there is a global constant $\alpha > 0$ such that no algorithm, that reads the whole input, solves
    $\DecCliqueProb_k$ on graphs of order $n$ in time $O(n^{\alpha k})$.
    If there is an algorithm that reads the whole input and solves $\DecTourProb(\mathcal{T}_c)$
    in time $O(f(k) \cdot n^{\alpha c})$ for some computable function $f$,
    then $\DecTourProb(\{T_0\})$ can be solved in time $O(n^{\alpha c})$.
    However, since $|\zSet| \geq c$ and due to \cref{maintheorem:Dec:sub} this would imply that
    $\DecCliqueProb_{c}$ can be solved in time $O(n^{\alpha c})$, which is not possible unless ETH fails.
\end{proof}

\bibliographystyle{alphaurl}
\bibliography{main}

\appendix
\clearpage
\section{On the Complexity of Colored Subgraph Counting}

\begin{lemma}[{$\CPsubProb(\{H\})$ is harder than $\GraphCFsubProb(\{H\})$}]\dglabel{lem:cp:cf}
    For a $k$-labeled graph $H$, assume that there
    is an algorithm that
    computes $\CPsubProb(\{H\})$
    for any graphs of order $n$ in time $O(f(n))$.
    Then there is an algorithm that computes $\GraphCFsubProb(\{H\})$ for any $k$-colored graph
    $G$ of order $n$ in time $O(g(k) \cdot f(n))$ for some computable function $g$.
    In particular, $\Exp{\CPsubProb(\{H\})} \geq \Exp{\GraphCFsubProb(\{H\})}$.
\end{lemma}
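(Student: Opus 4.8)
The plan is a standard inclusion-over-permutations argument: first rewrite the colorful count as a sum of color-prescribed counts over all relabelings of $H$, and then realize each relabeling by permuting the colors of the host graph, so that the oracle is only ever queried with the fixed pattern $H$.

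First I would establish the counting identity. Fix a $k$-colored graph $G$ with coloring $c \colon V(G) \to \setn{k}$, and for $\sigma \in \sym{k}$ write $H^\sigma$ for the $k$-labeled graph with edge set $\{\{\sigma(a),\sigma(b)\} : \{a,b\}\in E(H)\}$. If $F$ is a colorful subgraph of $G$ that is isomorphic to $H$, then $c|_{V(F)}$ is a bijection onto $\setn{k}$ and identifies $F$ with a unique $k$-labeled graph $H'$, which is isomorphic to $H$ and of which $F$ is a color-prescribed copy; conversely, every color-prescribed copy of a $k$-labeled graph isomorphic to $H$ is a colorful copy of $H$, and a given $F$ is color-prescribed with respect to at most one $k$-labeled graph. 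Hence the colorful copies of $H$ in $G$ are partitioned by the $k$-labeled graphs $H'$ isomorphic to $H$, giving
\[
\GraphCFsubs{H}{G} \;=\; \sum_{H'} \CPsubs{H'}{G} \;=\; \frac{1}{|\auts{H}|}\sum_{\sigma \in \sym{k}} \CPsubs{H^\sigma}{G},
\]
where the middle sum ranges over those $H'$ and the last equality uses that the map $\sigma \mapsto H^\sigma$ has all fibers of size $|\auts{H}|$ (the fiber over $H^\tau$ is the coset $\tau\cdot\auts{H}$).

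Second, I would move the permutation from the pattern to the host. For $\sigma \in \sym{k}$ let $G_\sigma$ be the $k$-colored graph with the same vertex set and edge set as $G$ but with coloring $\sigma^{-1}\circ c$; it has order $n$ and is computable in time $O(n^2)$. A subgraph $F$ of $G$ is a color-prescribed copy of $H^\sigma$ with respect to $c$ exactly when it is a color-prescribed copy of $H$ with respect to $\sigma^{-1}\circ c$, so $\CPsubs{H^\sigma}{G} = \CPsubs{H}{G_\sigma}$.

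Combining the two steps, the algorithm for $\GraphCFsubProb(\{H\})$ on input $G$ loops over all $\sigma \in \sym{k}$, constructs $G_\sigma$, calls the assumed $O(f(n))$-time algorithm for $\CPsubProb(\{H\})$ on $G_\sigma$, sums the outputs, and divides by $|\auts{H}|$. This uses $k!$ oracle calls plus $O(k!\cdot n^2)$ additional work; since any algorithm that reads its whole input on these graph problems runs in time $\Omega(n^2)$ and $k$ is a fixed constant, the total running time is $O(g(k)\cdot f(n))$ for a suitable computable $g$ (e.g.\ $g(k) = k!+1$). The bound $\Exp{\CPsubProb(\{H\})} \ge \Exp{\GraphCFsubProb(\{H\})}$ is then immediate, because a constant factor $g(k)$ does not change the exponent in $n$. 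I do not expect a genuine obstacle here; the only points needing care are checking the multiplicities in the counting identity (so that each colorful copy is counted exactly once after dividing by $|\auts{H}|$) and getting the direction of the color permutation right ($\sigma^{-1}\circ c$ rather than $\sigma\circ c$).
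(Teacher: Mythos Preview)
Your proposal is correct and essentially identical to the paper's proof: the paper establishes the same identity $\sum_{\sigma \in \sym{k}} \CPsubs{H}{(G,\sigma\circ c)} = |\auts{H}|\cdot \GraphCFsubs{H}{G}$ directly (via a claim that for each colorful $F\cong H$ exactly $|\auts{H}|$ permutations $\sigma$ make $\sigma\circ c$ an isomorphism to $H$), and concludes in the same way. Your two-step presentation (first relabel $H$, then move the permutation to the coloring of $G$) is just a reorganization of the same argument, and the $\sigma$ versus $\sigma^{-1}$ discrepancy is immaterial since you sum over all of $\sym{k}$.
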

\begin{proof}
    Let $G$ be $k$-colored graph of order $n$ with coloring $c \colon V(G) \to [k]$.
    For a permutation $\sigma \in \sym{k}$, we write $(G, \sigma \circ c)$ for the
    $k$-colored graph $G$ with coloring $\sigma \circ c$. The statement immediately follows from
    \begin{align}\label{eq:cf:cp}
        \sum_{\sigma \in \sym{k}} \CPsubs{H}{(G, \sigma \circ c)} = |\auts{H}| \cdot \GraphCFsubs{H}{G}.
    \end{align}
    To show \cref{eq:cf:cp}, we first prove the following claim.
    \begin{claim}\label{claim:auts}
        Let $A \subseteq V(G)$ and $S \subseteq E(G) \cap \binom{A}{2}$ with
        $\edgesub{\indGraph{G}{A}}{S}$ being colorful with respect to $c$ and isomorphic to $H$ then
        \[|\{\sigma \in \sym{k} : \text{$\sigma \circ c$ is an isomorphism from
        $\edgesub{\indGraph{G}{A}}{S}$ to $H$}\}| = |\auts{H}|.\]
    \end{claim}
    \begin{proof}
        Let $H'$ be the image of $\edgesub{\indGraph{G}{A}}{S}$ with respect to $c$. Observe
        that $H'$ is isomorphic to $H$. According to~\cite[Theorem 4]{Hoffmann82},
        the set of isomorphisms from $H'$ to $H$
        is equal to $\{\varphi \circ \psi : \varphi \in \auts{H}\}$ where $\psi \in \sym{k}$.
        Thus,
        $\{\sigma \in \sym{k} : \text{$\sigma \circ c$ is an isomorphism from
        $\edgesub{\indGraph{G}{A}}{S}$ to $H$}\} = \{ \varphi \circ \psi : \varphi \in \auts{H} \}$
        proving the claim.
    \end{proof}

    Note that $\GraphCFsubs{H}{G} = |\{A, S : \text{$\edgesub{\indGraph{G}{A}}{S}$ colorful with respect to $c$, isomorphic to $H$}\}|$.
    Hence, \Cref{claim:auts} yields
    $|\auts{H}| \cdot \GraphCFsubs{H}{G} = |\{A, S, \sigma : \text{$\sigma \circ c$
    is isomorphism from $\edgesub{\indGraph{G}{A}}{S}$ to $H$}\}|$.
    However, note that $\CPsubs{H}{(G, \sigma \circ c)} = |\{A, S : \text{$\sigma \circ c$
    is isomorphism from $\edgesub{\indGraph{G}{A}}{S}$ to $H$}\}|$. By summing
    over all possible permutations on the left side, we obtain
    \cref{eq:cf:cp}.

    Assume that we can compute  $\CPsubProb(\{H\})$
    for graphs of order $n$ in time $O(f(n))$. Now \cref{eq:cf:cp} allows us to
    compute $\GraphCFsubProb(\{H\})$ by computing $|\auts{H}|$ and calling
    $\CPsubProb(\{H\})$ a total of $k!$ times. This takes time
    $O(g(k) \cdot f(n))$ for some computable function $g$, proving the lemma.
\end{proof}

\lemcolindsub
\begin{proof}
    Let $c$ be the coloring of $G$.
    We prove the statement by using the inclusion-exclusion principle. Let
    $H$ be a $k$-labeled graph and $S \subseteq \binom{[k]}{2}$ be a set of edges, we write
    $H \cup S$ for the graph that we obtain by adding $S$ to the edge set of $H$.
    For all $S \subseteq \overline{E(H)} \coloneq E(K_k) \setminus E(H)$, we define
    \begin{align*}
        B &\coloneqq \{A \subseteq V(G) : \text{$\indGraph{G}{A}$ is colorful, and it contains a subgraph isomorphic to $H$ under $c$}\} \\
        A_S &\coloneqq \{A \subseteq V(G) : \text{$\indGraph{G}{A}$ is colorful, and it contains a subgraph isomorphic to $H \cup S$ under $c$}\}
   \end{align*}
    Observe that $A_S \subseteq B$ because, for $A \subseteq V(G)$, if $\indGraph{G}{A}$
    contains a subgraph isomorphic to $H \cup S$ then it also contains a subgraph isomorphic to $H$.
    Further, we consider $B$ as our base set, that is to say we
    define $\overline{A_S} \coloneqq B \setminus A_S$. Also,
    if $\indGraph{G}{A}$ contains a subgraph that is isomorphic to $H \cup S$ under $c$,
    then this subgraph is unique in $\indGraph{G}{A}$ due to $c$. Hence $|A_S| = \CPsubs{H \cup S}{G}$.
     In the following, we write $A_i$ for $A_{\{i\}}$. We show
    \begin{align}\label{eq:inc:exc:cpindsub}
    \CPindsubs{H}{G} = \left| \bigcap_{i \in \overline{E(H)} } \overline{A_{i}}\right|.
    \end{align}
    To see this, note that $\CPindsubs{H}{G}$ counts exactly those induced subgraphs $\indGraph{G}{A}$
    that are isomorphic to $H$ under $c$. This is equivalent to, for all $i \in \overline{E(H)}$, the  induced subgraphs $\indGraph{G}{A}$ does not contain any
    subgraph that is isomorphic to
    $H \cup \{i\}$. Further, $\overline{A_i}$ consists of
    those graphs $G[A]$ that contain a subgraph isomorphic to $H$ under $c$
    but do not contain a subgraph isomorphic to $H \cup \{i\}$ under $c$.
    Hence, the intersection on the right hand side consists of all induced subgraphs $\indGraph{G}{A}$ of $G$
    that isomorphic to $H$ under $c$. This shows \cref{eq:inc:exc:cpindsub}.
    Next, observe that for all $S \subseteq \overline{E(H)}$
    \[
    \left|\bigcap_{i \in S} A_{i}\right| = |A_S| = \CPsubs{H \cup S}{G}.
    \]

    By applying the inclusion-exclusion principle, we obtain
    \begin{align*}
        \CPindsubs{H}{G} &= \sum_{S \subseteq \overline{E(H)}} {(-1)}^{|S|} \left|\bigcap_{i \in S} A_{i}\right|  \\
        &= \sum_{S \subseteq \overline{E(H)}} {(-1)}^{|S|} \CPsubs{H \cup S}{G} \\
        &= \sum_{H \subseteq H'} {(-1)}^{|E(H')| - |E(H)|} \cdot \CPsubs{H'}{G}.
    \end{align*}
\end{proof}

\begin{restatable}[{$\CPsubProb(\{H\})$ is harder than $\CPsubProb(\{H'\})$ for $H'$ minor of $H$
   ~\cite[Modification of Lemma 5.8]{DBLP:phd/dnb/Curticapean15}}]{lemma}{lemcpminor}\dglabel{lem:cpsub:minor}
    Let $H$ be a graph and  $H'$ be a minor of $H$.
    Assume that there is an algorithm that computes $\CPsubProb(\{H\})$
    for any $k$-colored tournament of order $n$ in time $O(f(n))$.
    Then there is an algorithm that computes $\CPsubProb(\{H'\})$ for any $k$-colored tournament
    of order $n$ in time $O(f(n))$.
    In particular, $\Exp{\CPsubProb(\{H\})} \geq \Exp{\CPsubProb(\{H'\})}$.
\end{restatable}
\begin{proof}
    Let $H$ be a $k$-labeled graph, $H'$ a $k'$-labeled graph with $H'$ being a minor of $H$ and
    $G'$ a $k'$-colored graph.
    By the proof of Lemma 5.8 in~\cite{DBLP:phd/dnb/Curticapean15},
    we can construct a $k$-colored graph $G$ in time $O(|V(H)|^2 \cdot |V(G)|^2)$ such that
    $\CPsubs{H'}{G'} = \CPsubs{H}{G}$ (see equation (5.4) in~\cite{DBLP:phd/dnb/Curticapean15}).\footnote{
        Note that $\NUM{}\mathrm{PartitionedSub}(H \to G)$ is the same as $\CPsubs{H}{G}$.
    }
    This immediately yields the result since $|V(H)|^2$ is constant whenever $H$ is fixed.
\end{proof}

\section{On the Complexity of Counting Colored Subtournaments}\label{appendix:colored}

\lemsubscolor
\begin{proof}
    The proof follows from Lemma 2.4 in~\cite{Yuster25} and is repeated here
    for completeness.

    Let $T$ be a $k$-labeled tournament and $G$ be a $k$-colored tournament of order $n$.
    For $S \subseteq [k]$,
    we define $G_S$ as the subtournament of $G$ that is obtained by deleting
    all vertices whose color is not in $S$. Observe that $G_S$ can be computed in time
    $O(n^2)$ and has at most $n$ vertices. By the inclusion-exclusion principle, we obtain
    \[\CFsubs{T}{G} = \sum_{\substack{S \subseteq [k]}} {(-1)}^{|S|} \cdot \tours{T}{G_S}.\]
    If $\tours{T}{G}$ can be computed time $O(f(n))$,
    then we use the above equality to compute $\CFsubs{T}{G}$ in time $O(2^{k} \cdot f(n))$
    by calling $\tourProb(\{T\})$ at most $2^{|V(T)|}$ times on the tournaments of order at most $n$.

    For a recursively enumerable set of tournaments $\mathcal{T}$,
    we use the above construction to obtain a parameterized Turing reduction
    from $\CFsubProb(\mathcal{T})$ to $\tourProb(\mathcal{T})$. Observe that, for an input $(T, G)$,
    we only have to query $\tourProb(\mathcal{T})$ at most $2^{|V(T)|}$ times
    on inputs $(T, G')$ with $|V(G')| \leq |V(G)|$. Further, all other computations can be done
    in time $O(h(k) \cdot n^2)$ for some computable function $h$.
\end{proof}

\section{Reduction from \texorpdfstring{$\cliqueProb$}{\#Clique}
to \texorpdfstring{$\CFcliqueProb$}{\#cf-Clique}}

In this section, we show how to remove colors when counting cliques.
The proof is  due to Curticapean and originates
from his PhD thesis~\cite{DBLP:phd/dnb/Curticapean15}.

\begin{lemma}[{Removing colors for cliques,~\cite[Lemma 1.11]{DBLP:phd/dnb/Curticapean15}}]\dglabel{lem:CFclique:clique:construction}
    Given an integer $k$ and a graph $G$ of order $n$, one can construct a
    $k$-colored graph $G'$
    of order $k \cdot n$ in time $O(k^2 \cdot n^2)$ such that
    \[\clique{k}{G} = \CFclique{k}{G'}.\]
\end{lemma}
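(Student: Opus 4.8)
The statement asks for a standard color-coding style construction that turns uncolored clique-counting into colored clique-counting. The plan is to build $G'$ by taking $k$ disjoint copies $V_1, \dots, V_k$ of $V(G)$, where $V_i$ is colored with color $i$, and to put an edge between a vertex in the $i$-th copy and a vertex in the $j$-th copy (for $i \neq j$) exactly when the corresponding vertices form an edge in $G$; vertices inside the same copy $V_i$ are never adjacent (this is harmless since a colorful $k$-clique can use at most one vertex of each color). More precisely, $V(G') = V(G) \times [k]$, and $\{(u, i), (v, j)\} \in E(G')$ if and only if $i \neq j$ and $\{u, v\} \in E(G)$; the coloring is $c(u, i) = i$. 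Clearly $|V(G')| = k \cdot n$ and $G'$ can be constructed in time $O(k^2 n^2)$, since we examine each of the $\binom{kn}{2} = O(k^2 n^2)$ potential edges once.

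The correctness argument is a bijection between $k$-cliques in $G$ and colorful $k$-cliques in $G'$. First I would observe that if $A = \{a_1, \dots, a_k\} \subseteq V(G)$ induces $K_k$ in $G$, then the set $\{(a_1, 1), (a_2, 2), \dots, (a_k, k)\}$ is colorful (one vertex of each color) and induces $K_k$ in $G'$ (any two of its vertices lie in distinct copies and correspond to an edge of $G$). This gives an injection from $k$-cliques of $G$ to colorful $k$-cliques of $G'$. For the reverse direction, let $B \subseteq V(G')$ with $\indGraph{G'}{B} \cong K_k$ and $B$ colorful with respect to $c$. Colorfulness forces $B = \{(b_1, 1), \dots, (b_k, k)\}$ for some (not necessarily distinct a priori) vertices $b_i \in V(G)$. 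Since $B$ is a clique, every pair $(b_i, i), (b_j, j)$ with $i \neq j$ is an edge of $G'$, which by construction means $\{b_i, b_j\} \in E(G)$; in particular $b_i \neq b_j$ for all $i \neq j$, so $\{b_1, \dots, b_k\}$ has exactly $k$ elements and induces $K_k$ in $G$. Thus the two maps are mutually inverse, proving $\clique{k}{G} = \CFclique{k}{G'}$.

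I do not expect a genuine obstacle here; the only point requiring a little care is the argument that colorful $k$-cliques in $G'$ automatically pick pairwise-distinct vertices of $G$ (so that they map back to honest $k$-cliques rather than multisets) — but this is immediate from the fact that adjacency in $G'$ between different-colored copies mirrors adjacency in $G$, and $G$ has no self-loops. The running time and order bounds are routine counting.
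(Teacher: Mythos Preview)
Your construction is natural, but the claimed bijection fails: with your edge rule (an edge between $(u,i)$ and $(v,j)$ whenever $i\neq j$ and $\{u,v\}\in E(G)$), every $k$-clique $\{a_1,\dots,a_k\}$ of $G$ gives rise to $k!$ distinct colorful $k$-cliques in $G'$, one for each way of assigning the $k$ vertices to the $k$ colors. Your forward map picks just one of these (implicitly using some ordering of the $a_i$), so it is an injection, but the backward map $\{(b_1,1),\dots,(b_k,k)\}\mapsto\{b_1,\dots,b_k\}$ is $k!$-to-$1$, not a bijection. Concretely, for a triangle $\{a,b,c\}$ in $G$ all six sets $\{(a,\sigma(1)),(b,\sigma(2)),(c,\sigma(3))\}$ with $\sigma\in\sym{3}$ are colorful triangles in your $G'$. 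Hence your $G'$ satisfies $\CFclique{k}{G'}=k!\cdot\clique{k}{G}$, not the claimed equality.

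The paper fixes exactly this overcounting by imposing a compatibility condition between the vertex order and the color order: it assumes $V(G)=\setn{n}$ and declares $\{(u,i),(v,j)\}\in E(G')$ if and only if $u<v$, $i<j$, and $\{u,v\}\in E(G)$. Then a colorful $k$-clique $\{(b_1,1),\dots,(b_k,k)\}$ in $G'$ is forced to satisfy $b_1<b_2<\dots<b_k$, so each $k$-clique of $G$ corresponds to exactly one colorful $k$-clique of $G'$, and the counts genuinely coincide. Your argument becomes correct once you add this ordering constraint to the edge definition (or, alternatively, if you are willing to weaken the conclusion to equality up to the factor $k!$, which would still suffice for the downstream reductions but does not match the lemma as stated).
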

\begin{proof}
    Without loss of generality, we assume that $V(G) = \setn{n}$ (otherwise, we might
    choose an arbitrary ordering on the vertices). We construct a
    $k$-colored graph $G'$ in the following way. The vertex set of $G'$ is
    $V(G) \times \setn{k}$. The coloring of $G'$ is defined as $c' \colon V(G) \to \setn{k},
    (v, i) \mapsto i$. Finally, $\{(u, i), (v, j)\}\in E(G')$ if and only if
    $u < v$, $i < j$, and $\{u, v\} \in E(G)$.
    Observe that $G'$ has order $k \cdot n$ and can be computed in time $O(k^2 \cdot n^2)$.
    It remains to show that $\clique{k}{G} = \CFclique{k}{G'}$. To this end,
    let
    \[
        \mathcal{K} \coloneq  \{A \subseteq V(G') : \text{$\indGraph{G'}{A}$ is a colorful $k$-clique }\}
    \]
    be the set of colorful $k$-cliques in $G'$.
    \begin{claim}\label{claim:clique:bij}
        The order of $\mathcal{K}$ is equal to the order of
        \[\mathcal{S} \coloneqq \{(u_1, \dots, u_k) \in {V(G)}^k :
        \text{For all $1 \leq i < j \leq k$ we have $u_i < u_j$ and $\{u_i, u_j\} \in E(G)$ }\}.\]
    \end{claim}
    \begin{proof}
        We define a bijection $C \colon \mathcal{S} \to \mathcal{K}$.
        For $S \coloneqq (u_1, \dots, u_k) \in \mathcal{S}$, we define
        $C(S) \coloneqq \{(u_i, i) : i \in \setn{k}\}$. First, observe that
        $\indGraph{G'}{C(S)}$ is always colorful. Further,
        $\indGraph{G'}{C(S)}$ is a clique since for all $1 \leq i < j \leq k$
        we have that $\{(u_i, i), (u_j, j)\}$ is an edge in $G'$. Hence,
        $C(S) \in \mathcal{K}$ and therefore $C$ is well-defined.

        Next, $C$ is injective since $C(S) = C(S')$
        immediately implies $S = S'$.
        Lastly, we show that $C$ is surjective. Let $K \coloneqq
        \{(u_i, i)  : i \in \setn{k}\} \in \mathcal{K}$ be a colorful $k$-clique in $G'$.
        By definition of $E(G')$, for every $(u_i, i) \neq (u_j, j) \in K$ we have $i \neq j$.
        Further, we obtain for all $i < j$ that $u_i < u_j$
        and $\{u_i, u_j\} \in E(G)$. Thus, $K = C(S)$ for $S = (u_1, \dots, u_k)$
        with $S \in \mathcal{S}$.
    \end{proof}
    By definition, we have $|\mathcal{K}| = \CFclique{k}{G'}$.
    Moreover, observe that  $|\mathcal{S}| = \clique{k}{G}$.
    To see this, note that each $k$-clique $\indGraph{G}{A}$
    has a unique ordering $A = \{u_1, \dots, u_k\}$ with $u_i < u_j$.
    Now, \cref{claim:clique:bij} yields $\clique{k}{G} = \CFclique{k}{G'}$.
\end{proof}

We continue with the reduction for counting colorful cliques.

\begin{lemma}[{$\CFcliqueProb_{k}$ is harder than $\cliqueProb_{k}$}]\dglabel{lem:CFclique:clique}[lem:CFclique:clique:construction]
    Assume that there is an algorithm that reads the whole input and computes $\CFcliqueProb_{k}$
    for any graph of order $n$ in time $O(n^\gamma)$.
    Then there is an algorithm that computes $\cliqueProb_{k}$ for any graph
    of order $n$ in time $O(n^\gamma)$. In particular, $\Exp{\CFcliqueProb_{k}} \geq \Exp{\cliqueProb_{k}}$.
    Further, $\cliqueProb \fpt \CFcliqueProb$.
\end{lemma}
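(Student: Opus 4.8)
The plan is to invoke \cref{lem:CFclique:clique:construction} essentially as a black box; the statement is an almost immediate corollary. First I would assume there is an algorithm $\mathbb{A}$ that reads the whole input and computes $\CFcliqueProb_k$ for graphs of order $n$ in time $O(n^\gamma)$; recall that $\gamma \geq 2$ because $\mathbb{A}$ must read its input. Given a graph $G$ of order $n$ as input for $\cliqueProb_k$, I would apply \cref{lem:CFclique:clique:construction} to $k$ and $G$ to build, in time $O(k^2 n^2)$, a $k$-colored graph $G'$ of order $k \cdot n$ with $\clique{k}{G} = \CFclique{k}{G'}$. Running $\mathbb{A}$ on $G'$ then returns $\CFclique{k}{G'} = \clique{k}{G}$ in time $O\big((k n)^\gamma\big)$.

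To conclude the fine-grained part, I would observe that since $k$ is a fixed constant, $O\big((kn)^\gamma\big) = O(n^\gamma)$, and the construction time $O(k^2 n^2)$ is dominated because $\gamma \geq 2$. Hence $\cliqueProb_k$ is solved in time $O(n^\gamma)$, which by \cref{def:Exp} yields $\Exp{\CFcliqueProb_k} \geq \Exp{\cliqueProb_k}$. For the parameterized statement $\cliqueProb \fpt \CFcliqueProb$, the same construction gives a parameterized Turing reduction: on input $(G, k)$ for $\cliqueProb$, compute $(G', k)$ in FPT time $O(k^2 n^2)$ by \cref{lem:CFclique:clique:construction}, query the $\CFcliqueProb$ oracle once on $(G', k)$, and return the answer, which equals $\clique{k}{G}$. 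Crucially the parameter of the oracle call is $k$, i.e.\ unchanged, so the reduction is valid with $g = \id$.

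The only point that needs a sentence of care is the factor-$k$ blow-up in $|V(G')| = kn$: for fixed $k$ this only affects the hidden constant in the running time, and in the parameterized reduction it does not touch the parameter, so it is harmless on both counts. Beyond this bookkeeping there is no real obstacle; the content lives entirely in \cref{lem:CFclique:clique:construction}, whose correctness is already established via the bijection $C \colon \mathcal{S} \to \mathcal{K}$ between strictly increasing $k$-cliques of $G$ and colorful $k$-cliques of $G'$.
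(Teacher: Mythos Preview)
Your proposal is correct and matches the paper's proof essentially line for line: both invoke \cref{lem:CFclique:clique:construction} as a black box, run the assumed $\CFcliqueProb_k$ algorithm on the resulting graph $G'$ of order $kn$, and absorb the factor-$k$ blow-up into the constant since $k$ is fixed and $\gamma \ge 2$. The parameterized reduction is handled identically, with the parameter unchanged.
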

\begin{proof}
    Let $G$ be an undirected graph and $k$ be an integer. By \cref{lem:CFclique:clique:construction},
    we can construct a graph $G'$ of order $k \cdot n$ in time $O(k^2 \cdot n^2)$ such that
    $\clique{k}{G} = \CFclique{k}{G'}$. Thus, given an algorithm that reads
    the input and computes $\CFcliqueProb_{k}$ for any graph
    of order $n$ in time $O(n^\gamma)$, then we can use
    this algorithm to solve $\cliqueProb_{k}$ in time $O(k^2 \cdot n^2 + {(k \cdot n)}^\gamma)$.
    Note that this running time is in $O(n^\gamma)$
    since $k$ is fixed and $\gamma \geq 2$ (because the algorithm reads the whole input).

    We use the same construction to obtain a parameterized Turing reduction
    from $\cliqueProb$ to $\CFcliqueProb$.
\end{proof}

Observe that we also obtain a reduction
for the decision variant.

\begin{lemma}[{$\DecCFcliqueProb_{k}$ is harder than $\DecCliqueProb_{k}$}]
    \dglabel{lem:DecCFclique:DecClique}[lem:CFclique:clique:construction]
    Assume that there is an algorithm that reads the whole input and solves $\DecCFcliqueProb_{k}$
    for any graph of order $n$ in time $O(n^\gamma)$.
    Then there is an algorithm that solves $\DecCliqueProb_{k}$ for any graph
    of order $n$ in time $O(n^\gamma)$. In particular, $\DecCFcExp{k} \geq \DeccExp{k}$
    and $\DecCliqueProb \fpt \DecCFcliqueProb$.
\end{lemma}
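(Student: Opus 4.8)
The plan is to mimic the proof of \cref{lem:CFclique:clique} almost verbatim; the only additional observation needed is that the construction of \cref{lem:CFclique:clique:construction} preserves not merely the \emph{value} of the clique count but also its \emph{positivity}. Concretely, given a graph $G$ of order $n$ and the fixed integer $k$, I would invoke \cref{lem:CFclique:clique:construction} to build, in time $O(k^2 n^2)$, a $k$-colored graph $G'$ of order $k \cdot n$ with $\clique{k}{G} = \CFclique{k}{G'}$. In particular, $G$ contains a $k$-clique if and only if $G'$ contains a colorful $k$-clique, so a single query to the assumed $\DecCFcliqueProb_{k}$-algorithm on $G'$ decides $\DecCliqueProb_{k}$ on $G$.

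For the fine-grained running-time bound, the construction costs $O(k^2 n^2)$ and the oracle call costs $O((k\cdot n)^\gamma)$; since $k$ is a fixed constant and $\gamma \ge 2$ (the algorithm reads its whole input), the total is $O(n^\gamma)$. This yields $\DecCFcExp{k} \geq \DeccExp{k}$. For the uniform statement $\DecCliqueProb \fpt \DecCFcliqueProb$, I would use the same map $(G,k) \mapsto (G',k)$: it preserves the parameter $k$ and runs in polynomial time, hence it is a parameter-preserving polynomial-time many-one reduction and in particular a parameterized Turing reduction.

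There is essentially no obstacle here: the entire combinatorial content is already packaged in \cref{lem:CFclique:clique:construction}, and the only thing left to spell out is the trivial fact that an equality of counts entails an equivalence of the associated ``is the count nonzero?'' predicates. One could even phrase the whole proof as ``identical to the proof of \cref{lem:CFclique:clique}, replacing every count by the predicate `count $> 0$'.''
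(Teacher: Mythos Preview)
Your proposal is correct and is exactly the approach the paper takes: its entire proof reads ``completely analog to the proof of \cref{lem:CFclique:clique},'' which is precisely what you spell out. Your additional remark that the equality $\clique{k}{G} = \CFclique{k}{G'}$ from \cref{lem:CFclique:clique:construction} implies equivalence of the corresponding decision predicates is the only (trivial) observation needed, and you have identified it.
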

\begin{proof}
    The proof of this statement is completely analog to the proof of \cref{lem:CFclique:clique}.
\end{proof}

\section{The Complexity of Finding Colorful Tournaments}\label{sec:DecCFsubProb}

\subsection{Hardness via the Signature}

Let $T$ denote a tournament of order $k$ and let $R$ be a signature of $T$ with $r \coloneqq |R|$.
Given a $(k-r)$-colored graph $G$, we construct a tournament $G^\ast$
by starting with the tournament $\tourG{G}{T}$, that is naturally $(k-r)$-colored (see \cref{def:TourG}),
and then adding the subtournament $\indGraph{T}{R}$ to it. This yields
a $k$-colored tournament since each vertex in $R$ has its own color.

Now, if $A \subseteq V(G^\ast)$ with $\indGraph{G^\ast}{A}$
is colorful and isomorphic to $T$, then $A$ must contain $R$ since
each vertex of $R$ has its own color. Further, due to the definition of signature, we
cannot flip any edges in the $\tourG{G}{T}$ part of $\indGraph{G^\ast}{A}$.
This then yields that $A  \setminus R$ is a colorful clique in $G$.

\yusterred

\begin{proof}
    The proof mostly follows the proof of Lemma 2.5 in~\cite{Yuster25}.
    Let $T$ be a tournament with vertex set $V(T) = \{v_1, \dots, v_k\}$.
    Without loss of generality, we assume $R = \{v_1, \dots v_r\}$ (otherwise
    we reorder the vertices).
    Given a $(k-r)$-colored graph $G$ with coloring $c \colon V(G) \to  \setn{k-r}$,
    we first construct a $k$-colored tournament $G^\ast$
    with coloring $c^\ast \colon V(G) \to \setn{k}$  in the following way.
    We define $V(G^\ast) = \{v_1, \dots v_r\} \uplus V(G)$, where $\{v_1, \dots v_r\}$
    are new vertices that are not in $V(G)$. The coloring of $G^\ast$
    is given by $c^\ast$ which is defined as $c^\ast(v_i) = i$
    and $c^\ast(x) = r + c(x)$ for all $x \in V(G)$. Given $u, v \in V(G^\ast)$,
    we orientate the edge between $u$ and $v$ in the following way.
    For $u, v \in V(G)$ with $u \neq v$:
    \begin{itemize}
        \item If $c(u) = c(v)$ then use an arbitrary orientation.
        \item Else if $\{u, v\} \in E(G)$ then $(u,v)\in \DE{G^\ast}$
        if and only if $(c^\ast(u),c^\ast(v))\in \DE{T}$.
        \item Else if $\{u,v\} \notin E(G)$ then $(u,v)\in \DE{G^\ast}$
        if and only if $(c^\ast(v),c^\ast(u))\in \DE{T}$.
    \end{itemize}

    Otherwise, at least one of $u,v$ belongs to $\{v_1,\dots,v_r\}$, and in particular $c^\ast(u) \neq c^\ast(v)$. In this case, we orientate
    the edge $\{u, v\}$ in $G^\ast$ such that its orientation is the same
    as $\{c^\ast(u), c^\ast(v)\}$ in $T$.\footnote{i.e., $(u, v) \in \DE{G^\ast}$ if and only if
    $(c^\ast(u), c^\ast(v)) \in \DE{T}$.}
    Note that $G^\ast$ has order $n + r$ and can be computed in time $O({(n + r)}^2)$.

    We now claim that $\CFsubs{T}{G^\ast} = \CFclique{k-r}{G}$.
    To this end, let $A \subseteq V(G^\ast)$ be a set of vertices such that $\indGraph{G^\ast}{A} \cong T$
    and $\indGraph{G^\ast}{A}$ is colorful.
    Observe that $c^\ast$ restricted to $A$ is an isomorphism
    from $\indGraph{G^\ast}{A}$ to a tournament $T^\ast$ with vertex set $\setn{k}$ and
    edge set $\{(c^\ast(u), c^\ast(v)) : (u, v) \in \DE{\indGraph{G^\ast}{A}} \}$.

    In the following, we show that for $B \coloneqq A \setminus \{v_1, \dots v_r\} $ the
    subgraph $\indGraph{G}{B}$ is a colorful $(k-r)$-clique in $G$.
    First note that $c(B) = \setn{k - r}$ since otherwise $c^\ast(A) \neq \setn{k}$.
    Next, we assume that $\indGraph{G}{B}$ is not a $(k-r)$-clique
    and show that this assumption leads to a contradiction.
    According to our assumption, there are vertices $u, v \in B$
    with $\{u, v\} \notin E(G)$. By construction of $G^\ast$,
    this implies that $\{u, v\}$ in $\indGraph{G^\ast}{A}$ does not have the same orientation
    as $\{c^\ast(u), c^\ast(v)\}$ in $T$, implying that $T^\ast$ and in $T$
    have an opposite orientation on $\{c^\ast(u), c^\ast(v)\}$. However, let $\{v_i, x\}$
    be any edge with at least one endpoint in $R$. By construction,
    $\{v_i, x\}$ in $G^\ast$ has the same orientation
    as $\{c^\ast(v_i), c^\ast(x)\}$ in $T$. Thus,
    $T$ and $T^\ast$ have the same orientation on $\{v_i, x\}$.
    This means that $T^\ast$ is obtained from $T$
    by only changing the orientation of edges that are not incident to $R$.
    However, since $\indGraph{G^\ast}{A} \cong T$, we also get
    $T^\ast \cong T$ which is a contradiction to $R$ being a signature. This
    proves that $\indGraph{G}{B}$ is a colorful $(k-r)$-clique in $G$.

    In contrast, let $B \subseteq V(G)$ be a set of vertices with
    $\indGraph{G}{B}$ being a colorful $(k-r)$-clique in $G$.
    For $A \coloneqq B \uplus \{v_1, \dots v_r\}$,
    we show that $\indGraph{G^\ast}{A}$ is a colorful subtournament
    that is isomorphic to $T$. First note that $A$ is colorful
    since, for $1 \leq i \leq r$, $c^\ast(v_i) = i$ and $c^\ast(B) = \{r+1, \dots, k\}$.
    Next, we show that $(u, v) \in \DE{\indGraph{G^\ast}{A}}$
    if and only if $(c^\ast(u), c^\ast(v)) \in \DE{T}$.
    If $u, v \in B$ then $\{u, v\} \in E(G)$ which implies that
    $\indGraph{G^\ast}{A}$ and $T$ have the same orientate on
    $\{u, v\}$ and  $\{c^\ast(u),c^\ast(v) \}$.
    In contrast, if either
    $u \notin B$ or $v \notin B$ then at least one vertex of $\{u,v\}$ belongs to $\{v_1, \dots v_r\}$.
    By construction, $\indGraph{G^\ast}{A}$ and $T$ have the same orientate on
    $\{u, v\}$ and $\{c^\ast(u), c^\ast(v)\}$. This shows
    that $c^\ast$ defines an isomorphism from $\indGraph{G^\ast}{A}$
    to $T$.

    In summary, we showed that there is one-to-one relation between colorful subtournament in
    $G^\ast$ that are isomorphic to $T$ and colorful $(k-r)$-clique in $G$.
    This proves the lemma.
\end{proof}

\begin{remark}\label{7-15-3}
    One can also use the construction described above to obtain a tournament $G^\ast$
    by fixing a set of vertices $R$ that is not necessarily a signature. If we plug this into
    $\CFsubs{T}{G^\ast}$ then we can again represent it as a linear combination of $\CPindsubProb$-counts
    (like in \cref{lem:tour:colindsub}). However,
    since we fixed $\indGraph{T}{R}$ in the construction of $G^\ast$,
    we only obtain terms $\CPindsubs{H}{G}$, where $T_{H}$ is isomorphic to $T$
    and $T_{H}$ only flips edges that are non-adjacent to $R$.\footnote{i.e., $E(H)$ contains
    all edges that have at least one vertex in $R$.}
    Thus, if $R$ is a signature, then the only viable graph in the linear combination of
    \cref{lem:tour:colindsub} is $K_k$ since
    $T_{K_k} \cong T$ and $T_{K_k}$ does not flip any edges.
    Therefore, fixing the edges eliminates all terms in the linear combination of \cref{theo:tour:ae} except the term
    that is responsible for counting cliques. Hence, we can see
    \cref{lem:tour:colindsub} and \cref{theo:tour:ae} as a generalization of \cref{lem:CFsubs:CFclique}.
    Let $R \subseteq V(T)$ be a set of vertices and let $H_R$ be the $k$-labeled
    graph that is obtained by connecting all vertices in $\setn{k} \setminus R$
    with each other. Further, each vertex $R$ in $H_R$ is an isolated vertex.
    Then $R$ is a signature of $T$ whenever $\ae{T}{H_R} \neq 0$.
\end{remark}

\begin{theorem}[$\DecCFsubProb(\{T\})$ is harder than $\DecCliqueProb_{|V(T)|-\sig(T)}$]
    \dglabel{theo:sig:lower:bounds}[lem:CFsubs:CFclique,lem:DecCFclique:DecClique]
    Let $T$ be a tournament of order $k$. Assume that there
    is an algorithm that solves $\DecCFsubProb(\{T\})$
    for tournaments of order $n$ in time $O(n^\gamma)$, then there is an algorithm
    that solves
    $\DecCliqueProb_{k - \sig(T)}$ for any graph of order $n$ in time
    $O(n^\gamma)$.
    In particular, $\DecCFtExp{T} \geq \DeccExp{(k - \sig(T))}$.

    Further, given a r.e.\ set $\mathcal{T}$ of infinitely
    many tournaments such that $\{V(T) - \sig(T) : T \in \mathcal{T}\}$ is unbounded
    then $\DecCFsubProb(\mathcal{T})$ is \Dw-hard.
\end{theorem}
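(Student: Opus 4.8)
\textbf{Proof plan for \cref{theo:sig:lower:bounds}.}
The plan is to chain together the two constructions that are already available to us. First I would invoke \cref{lem:CFsubs:CFclique}: given a tournament $T$ of order $k$, fix a minimum signature $R$ with $r\coloneqq\sig(T)$, so that $k-r=k-\sig(T)$. For any $(k-r)$-colored graph $G$ of order $n$, that lemma produces a tournament $G^\ast$ of order $n+r$ in time $O((n+r)^2)$ with $\CFsubs{T}{G^\ast}=\CFclique{k-r}{G}$. Passing to the decision version, this means $G$ has a colorful $(k-r)$-clique if and only if $G^\ast$ contains a colorful subtournament isomorphic to $T$. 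Hence, given an $O(n^\gamma)$-time algorithm for $\DecCFsubProb(\{T\})$, we solve $\DecCFcliqueProb_{k-r}$ on $G$ in time $O((n+r)^2+(n+r)^\gamma)=O(n^\gamma)$, since $r$ is a constant depending only on the fixed pattern $T$ and $\gamma\ge 2$ because the algorithm reads the whole input.

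Next I would compose with \cref{lem:DecCFclique:DecClique}, which gives a reduction from $\DecCliqueProb_{k-r}$ to $\DecCFcliqueProb_{k-r}$ (via the order-$(k-r)\cdot n$ construction of \cref{lem:CFclique:clique:construction}) that again preserves the complexity exponent. Concatenating the two reductions yields: from an $O(n^\gamma)$-time algorithm for $\DecCFsubProb(\{T\})$ we obtain an $O(n^\gamma)$-time algorithm for $\DecCliqueProb_{k-\sig(T)}$, which is exactly the first claim; the complexity-exponent inequality $\DecCFtExp{T}\ge\DeccExp{(k-\sig(T))}$ follows immediately by taking infima over admissible exponents.

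For the parameterized part, suppose $\mathcal T$ is a recursively enumerable set of infinitely many tournaments with $\{\,|V(T)|-\sig(T):T\in\mathcal T\,\}$ unbounded. I would build a parameterized Turing reduction from $\DecCliqueProb$ to $\DecCFsubProb(\mathcal T)$ as follows. On input $(G,p)$ with $G$ of order $n$, enumerate $\mathcal T$ until we find a tournament $T\in\mathcal T$ with $|V(T)|-\sig(T)\ge p$ (this halts by the unboundedness assumption, and finding $T$ together with a minimum signature $R$ takes time bounded by a computable function of $p$). Choosing an arbitrary subset $R'\subseteq R$ of size exactly $|V(T)|-p$ is not immediately a signature, so instead I would keep the full signature $R$ of size $r=\sig(T)$ and note $|V(T)|-r\ge p$; we then want $\DecCliqueProb_p$, not $\DecCliqueProb_{|V(T)|-r}$. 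To handle this cleanly, first reduce $\DecCliqueProb_p$ on $G$ to $\DecCliqueProb_{p}$ padded up: apply \cref{lem:DecCFclique:DecClique} to get an instance of $\DecCFcliqueProb_p$, then pad with $|V(T)|-r-p$ additional vertices each of its own fresh color that are adjacent to everything, turning a colorful $p$-clique into a colorful $(|V(T)|-r)$-clique and vice versa. Now apply \cref{lem:CFsubs:CFclique} with $T$ and $R$ to obtain $G^\ast$ and query the $\DecCFsubProb(\mathcal T)$ oracle on $(T,G^\ast)$. Every oracle query has pattern of size $|V(T)|$, which is bounded by a computable function of $p$, and the host graph has order bounded by a polynomial in $n$ (times a constant), so this is a valid parameterized reduction; since $\DecCliqueProb$ is $\Dw$-hard, $\DecCFsubProb(\mathcal T)$ is $\Dw$-hard. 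The main obstacle I anticipate is precisely this bookkeeping: making sure the padded-clique reduction interacts correctly with the signature construction (the extra apex vertices must receive colors outside the image of $G$'s coloring and the orientation conventions of \cref{lem:CFsubs:CFclique} must be respected) so that colorful $T$-copies in $G^\ast$ correspond exactly to colorful $p$-cliques in the original $G$; everything else is a routine composition of reductions already proved in the excerpt.
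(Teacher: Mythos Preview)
Your proposal is correct and follows essentially the same route as the paper's proof: invoke \cref{lem:CFsubs:CFclique} with a minimum signature to reduce $\DecCFcliqueProb_{k-\sig(T)}$ to $\DecCFsubProb(\{T\})$, then chain with \cref{lem:DecCFclique:DecClique}; for the parameterized part, enumerate $\mathcal{T}$ to find a $T$ with $|V(T)|-\sig(T)\ge p$, pad the clique parameter up, and apply the first part. The only cosmetic difference is in where the padding happens: the paper adds $k'-p$ uncolored apices to $G$ \emph{before} any coloring (so that $\DecClique{p}{G}=\DecClique{k'}{G'}$) and then applies the entire first-part chain verbatim, whereas you insert colored apices after the colorful-clique construction; the paper's ordering avoids the bookkeeping worry you flagged about color compatibility, but both work.
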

\begin{proof}
    Assume first that there is an algorithm $\mathbb{A}$
    that solves $\DecCFsubProb(\{T\})$ for tournaments of order $n$
    in time $O(f(n))$. We use $\mathbb{A}$ to create an algorithm $\mathbb{B}$
    that solves $\DecCFcliqueProb_{k-\sig(T)}$ on graphs of order $n$ in time
    $O(g(k) \cdot f(n))$ for some computable function $g$.

    Given a $(k-\sig(T))$-colored graph $G$ with $n$ vertices,
    we start by computing a signature $R \subseteq V(T)$ with $r \coloneqq |R| = \sig(T)$.
    Observe that this takes time $O(g'(k))$ for some computable function $g'$.
    Next, due to \cref{lem:CFsubs:CFclique}, we can compute a $k$-colored tournament $G^\ast$
    of order $(n + r)$ in time $O({(n+r)}^2)$
    such that $G$ contains a colorful $(k-r)$-clique if and only
    if $G^\ast$ contains a colorful copy of $T$, meaning that we can simply
    return $\DecCFsubs{T}{G^\ast}$. Observe that this algorithm solves $\DecCFcliqueProb_{k-r}$
    in time $O(g''(k) \cdot {(n + r)}^\gamma)$ for some computable function $g''$.
    Further, \cref{lem:CFclique:clique:construction} implies
    that $\DecCliqueProb_{k - r}$ can be solved in time $O(h(k) \cdot f(n))$. Note
    that this implies an $O(n^\gamma)$ for $\DecCliqueProb_{k - r}$ since $r \leq k$
    and $k$ is fixed.

    Lastly, we use the above construction to obtain a parameterized Turing reduction
    from $\DecCliqueProb$ to $\DecCFsubProb(\mathcal{T})$. Observe that on
    input $(G, k)$ with $n \coloneqq |V(G)|$, we first find a graph
    $T \in \mathcal{T}$ with $k' \geq k$ for $k' \coloneqq |V(T)| - \sig(T)$
    in time $h'(k)$ for some computable function $h'$. Note that the size of $k'$
    and $|V(T)|$
    is independent of $G$ and that there exists a computable function $f$ with
    $f(k) = |V(T)|$. By adding $k' - k$ apices to $G$,
    we obtain a new graph $G'$ such that $\DecClique{k}{G} = \DecClique{k'}{G'}$.
    Next, we can use the construction from above and \cref{lem:CFclique:clique:construction}
    to compute $\DecClique{k'}{G'}$ by querying $\DecCFsubProb(\{T\})$
    on a graph of order at most $k' \cdot (n + |V(T)|)$. Further, all other
    computations take time $O(h''(k) \cdot n^2)$ for some computable
    function $h''$. This shows that $\DecCFsubProb(\mathcal{T})$ is \Dw-hard.
\end{proof}

\subsection{\texorpdfstring{$\DecCFsubProb(\{T\})$}{Dec-cf-IndSub({T})} is Hard}

\begin{theorem}[$\DecCFsubProb(\{T\})$ is hard]\dglabel{maintheorem:Dec:Cf:sub}[theo:sig:lower:bounds, lem:sig:lower:bound]
    Write $\mathcal{T}$ for a recursively enumerable class of directed graphs.
    \begin{itemize}
        \item The problem $\DecCFsubProb(\mathcal{T})$ is \Dw-hard if
        $\mathcal{T}$ contains infinitely many tournaments and FPT otherwise.
        \item Let $T$ be a tournament of order $k$. Assume that there
        is an algorithm that reads the whole input and solves $\DecCFsubProb(\{T\})$
        for tournaments of order $n$ in time $O(n^\gamma)$, then there is an algorithm
        that solves $\DecCliqueProb_{\lceil \log(k)/4 \rceil}$ for all graphs of order $n$ in time
        $O(n^\gamma)$.

        Further, assuming ETH, there is a global constant $\beta > 0$
        such that no algorithm that reads the whole input solves $\DecCFsubProb(\{T\})$
        for any tournament of order $n$ in time $O(n^{\beta \log(k)})$.
        \qedhere
    \end{itemize}
\end{theorem}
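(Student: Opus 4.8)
The plan is to obtain \cref{maintheorem:Dec:Cf:sub} essentially as a corollary of the signature reduction \cref{theo:sig:lower:bounds} together with the structural bound \cref{lem:sig:lower:bound}, which guarantees $|V(T)| - \sig(T) \geq \lceil \log(|V(T)|)/4 \rceil$ for every tournament $T$. For the dichotomy in the first bullet I would first handle the easy side: if $\mathcal{T}$ contains only finitely many tournaments, then on input $(T,G)$ either $T$ is not a tournament---so $\CFsubs{T}{G} = 0$ and the decision problem outputs $0$---or $T$ is one of finitely many tournaments, all of bounded order, so brute force over the subsets of $V(G)$ runs in polynomial time (in particular is FPT). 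For the hard side, assume $\mathcal{T}$ contains infinitely many tournaments; restricting to the (still r.e.) subclass of tournaments in $\mathcal{T}$ does not affect hardness since non-tournament patterns are trivial, and \cref{lem:sig:lower:bound} forces $|V(T)| - \sig(T)$ to grow with $|V(T)|$, so $\{\,|V(T)| - \sig(T) : T \in \mathcal{T},\ T \text{ a tournament}\,\}$ is unbounded; then \cref{theo:sig:lower:bounds} immediately gives that $\DecCFsubProb(\mathcal{T})$ is \Dw-hard.

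For the second bullet, fix a tournament $T$ of order $k$ and put $m \coloneqq k - \sig(T)$, so that $m \geq \lceil \log(k)/4 \rceil$ by \cref{lem:sig:lower:bound}. Given the assumed algorithm solving $\DecCFsubProb(\{T\})$ in time $O(n^\gamma)$ (with $\gamma \geq 2$, since it reads the whole input), \cref{theo:sig:lower:bounds} already provides an $O(n^\gamma)$-time algorithm for $\DecCliqueProb_{m}$. The one remaining step is a routine apex-padding argument, of exactly the kind already used inside the proof of \cref{theo:sig:lower:bounds}: to decide whether an $n$-vertex graph $G$ contains a $\lceil \log(k)/4 \rceil$-clique, attach $m - \lceil \log(k)/4 \rceil$ universal vertices to $G$ to obtain a graph $G'$ on at most $n + k$ vertices; then $G$ has a $\lceil \log(k)/4 \rceil$-clique if and only if $G'$ has an $m$-clique, and running the $\DecCliqueProb_{m}$-algorithm on $G'$ still takes time $O((n+k)^\gamma) = O(n^\gamma)$ because $k$ is fixed and $\gamma \geq 2$. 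This yields the claimed algorithm for $\DecCliqueProb_{\lceil \log(k)/4 \rceil}$.

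For the conditional lower bound I would chain complexity-exponent inequalities: \cref{theo:sig:lower:bounds} gives $\Exp{\DecCFsubProb(\{T\})} \geq \DeccExp{m}$, \cref{lemma:eth:clique} gives, under ETH, a global constant $\alpha > 0$ with $\DeccExp{m'} > \alpha m'$ for all $m' \geq 3$, and $m \geq \lceil \log(k)/4 \rceil \geq \log(k)/4$. Setting $\beta \coloneqq \alpha/4$, for every $k$ large enough that $m \geq 3$ (which holds as soon as $\lceil \log(k)/4 \rceil \geq 3$) we get $\Exp{\DecCFsubProb(\{T\})} > \alpha m \geq (\alpha/4)\log k = \beta \log k$, i.e.\ no input-reading algorithm solves $\DecCFsubProb(\{T\})$ in time $O(n^{\beta \log k})$. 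For the finitely many smaller values of $k$ the exponent $\beta \log k$ is a constant below $2$, so the bound is vacuously true since any algorithm that reads the whole input already needs time $\Omega(n^2)$; choosing $\beta$ uniformly in $k$ then yields the stated global constant.

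I do not expect a genuine obstacle here: all the technical weight sits in \cref{theo:sig:lower:bounds} (the signature-to-colored-clique reduction) and in \cref{lem:sig:lower:bound} (the $\log$-sized transitive subtournament argument), both already established, plus the ETH clique bound \cref{lemma:eth:clique}. The only two points needing a little care are getting the padding direction right---we are \emph{given} an algorithm for the larger clique $\DecCliqueProb_m$ and must \emph{derive} one for the smaller target $\DecCliqueProb_{\lceil \log(k)/4 \rceil}$, so the apices are added to the input rather than removed---and the small-$k$ boundary of the ETH statement, which is disposed of by the trivial $\Omega(n^2)$ lower bound for reading the input.
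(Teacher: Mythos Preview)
Your proposal is correct and follows essentially the same approach as the paper: both combine \cref{lem:sig:lower:bound} with \cref{theo:sig:lower:bounds} (and \cref{lemma:eth:clique} for the ETH bound), and handle the FPT side by brute force on bounded-order patterns. The only difference is that the paper absorbs your apex-padding step by implicitly using that any superset of a signature is again a signature, so one can directly pick a signature of size exactly $k - \lceil \log(k)/4 \rceil$ in \cref{lem:CFsubs:CFclique}; your explicit padding argument and your treatment of the small-$k$ boundary are slightly more careful than the paper's, but the substance is identical.
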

\begin{proof}
    For the second part,
    let $T$ be a tournament with $k$ vertices. By \cref{lem:sig:lower:bound}
    each tournament $T$ contain a signature $R$ of size at most $k - \lceil \log(k)/4 \rceil$.
    Thus, the second part of the theorem directly follows from \cref{theo:sig:lower:bounds}.
    For the ETH result, note that we can choose $\beta = \alpha/5$, where
    $\alpha$ is the global constant from \cref{lemma:eth:clique}.

    For the first part, assume that $\mathcal{T}$ contains finitely many tournaments. Then, there is a $k$ such that
    $|V(T)| \leq k$ for all tournaments $T \in \mathcal{T}$. If $T \in \mathcal{T}$ is not a tournament
    then $\DecCFsubs{T}{G} = 0$ for all input tournaments $G$. Otherwise,
    we solve $\DecCFsubs{T}{G}$ in time $O(k^2 \cdot |V(G)|^k)$ by using a brute force algorithm.

    Otherwise, $\mathcal{T}$ contains infinitely many tournaments.
    \cref{lem:sig:lower:bound} implies for all tournaments $T \in \mathcal{T}$
    that $V(T) - \sig(T) \geq \lceil \log(|V(T)|)/4$. Hence, \cref{theo:sig:lower:bounds}
    yields that $\DecCFsubProb(\mathcal{T})$ is \Dw-hard.
\end{proof}

\begin{remark}\label{remark:no:reduction:dec:cf}
    Note that \cref{maintheorem:Dec:Cf:sub} shows that $\DecCFsubProb(\mathcal{T})$ is \Dw-hard
    as long as $\mathcal{T}$ contains infinitely many different tournaments. However,
    \cref{maintheorem:Dec:sub} shows that there are many different class of tournaments $\mathcal{T}_c$
    for which $\DecTourProb(\mathcal{T}_c)$ is FPT. Thus, there is no reduction from
     $\DecCFsubProb(\mathcal{T})$ to $\DecTourProb(\mathcal{T})$ which is contrary to
     the counting case (see \cref{lem:subs:to:color}).
\end{remark}

\tableofresults
{\color{color3}\bfseries Highlighted} are what we consider to be the main novel technical
ideas of this work.
\end{document}